\newtheorem{theorem}{Theorem}[section]
\newtheorem{corollary}[theorem]{Corollary}
\newtheorem{lemma}[theorem]{Lemma}
\newtheorem{proposition}[theorem]{Proposition}
\newtheoremstyle{mystyle}
  {\topsep}
  {\topsep}
  {\normalfont}
  {}
  {\bfseries}
  {.}
  {.5em}
  {}
\theoremstyle{mystyle}
\newtheorem{remark}[theorem]{Remark}
\newcommand{\pr}{\mathbb{P}}
\newcommand{\Ev}{\mathbb{E}}
\newcommand{\ind}{\mathds{1}}
\newcommand{\real}{\mathbb{R}}
\newcommand{\loss}{\mathcal{L}}
\newcommand{\set}{\mathcal{S}}
\newcommand{\setc}{\mathcal{C}}
\newcommand{\mom}{\mathrm{MoM}}
\newcommand{\momom}{\mathrm{MoMoM}}
\newcommand{\binomial}{\textrm{Binom}}
\title{Merging uncertainty sets via majority vote}
\author[1]{Matteo Gasparin}
\author[2,3]{Aaditya Ramdas}
\affil[3]{Machine Learning Department, Carnegie Mellon University}
\affil[2]{Department of Statistics and Data Science, Carnegie Mellon University}
\affil[1]{Department of Statistical Sciences, University of Padova}
\date{}
\begin{document}

\date{\today}
\maketitle

\begin{abstract}
Given $K$ uncertainty sets that are arbitrarily dependent --- for example, confidence intervals for an unknown parameter obtained with $K$ different estimators, or prediction sets obtained via conformal prediction based on $K$ different algorithms on shared data --- we address the question of how to efficiently combine them in a black-box manner to produce a single uncertainty set. We present a simple and broadly applicable majority vote procedure that produces a merged set with nearly the same error guarantee as the input sets. We then extend this core idea in a few ways: we show that weighted averaging can be a powerful way to incorporate prior information, and a simple randomization trick produces strictly smaller merged sets without altering the coverage guarantee. Further improvements can be obtained if the sets are exchangeable. 
We also show that many modern methods, like split conformal prediction, median of means, HulC and cross-fitted ``double machine learning'', can be effectively derandomized using these ideas.
\end{abstract}

\tableofcontents

\section{Introduction}
\label{sec:intro}
Uncertainty quantification is a cornerstone of statistical science and is now rapidly gaining prominence within machine learning. 
For example, the development of conformal prediction~\citep{vovk2005} has been instrumental in recent years, which is a method to construct prediction sets with a coverage guarantee under weak distributional assumptions. 

This paper develops methods for combining $K$ different uncertainty sets (e.g.\ prediction or confidence sets) that are arbitrarily dependent (perhaps due to shared data) in order to obtain a single set with nearly the same coverage. 
As one motivation, consider $K$  ``agents'' that process some private and some public data in different ways in order to define $K$ uncertainty sets. Their use of the public data in unknown ways causes the sets to be dependent. Our work studies ways to combine these sets, despite unknown dependence. 

Formally, we start with a collection of $K$ different sets $\setc_k$ (one from each agent), each having a confidence level $1-\alpha$ for some $\alpha \in (0,1)$:
\begin{equation}
\label{eq:coverage}
\pr\big(c \in \setc_k\big) \geq 1-\alpha, \quad k=1, \dots, K,
\end{equation}
where $c$ denotes our target (e.g.\ an outcome that we want to predict, or some underlying functional of the data distribution). We say that $\setc_k$ has \emph{exact} coverage if $\pr(c \in \setc_k) = 1-\alpha$.
Since the sets $\setc_k$ are based on data, they are random quantities by definition, but $c$ can be either fixed or random; for example, in the case of confidence sets for a target functional of a distribution it is fixed, but it is random in the case of prediction sets for an outcome (e.g.\ conformal prediction). Our method will be agnostic to such details.

Our objective as the ``aggregator'' of uncertainty is to combine the sets in a black-box manner in order to create a new set that exhibits favorable properties in both coverage and size. A first (trivial) solution is to define the set $\setc^J$ as the union of the others:
\begin{equation*}
    \setc^J = \bigcup_{k=1}^K \setc_k.
\end{equation*}
Clearly, $\setc^J$ respects the property defined in \eqref{eq:coverage}, but the resulting set is typically too large and has significantly inflated coverage. On the other hand, the set resulting from the intersection $\setc^I = \bigcap_{k=1}^K \setc_k$ is narrower, but typically has inadequate coverage --- it guarantees at least $1-K\alpha$ coverage by the Bonferroni inequality \citep{bonferroni1936}, but this is uninformative when $K$ is large.

The current paper addresses the setting where only a single interval is known from each agent.\footnote{If the aggregator knows the $(1-\alpha)$-confidence intervals for \emph{every} $\alpha \in (0,1)$, they can combine these using well-known rules to combine dependent p-values~\citep{vovk2020} (see Appendix~\ref{sec:merg_conf_dist}).} 
A major motivation is aesthetic: while the literature on ``black-box'' combination of p-values (that is, without access to the underlying data) is rich --- dating from Fisher's famous combination rule to modern treatments~\citep{vovk2020} --- analogous results for uncertainty sets like confidence or prediction intervals are less studied or known.
A second motivation is practical --- while in some situations it is possible to obtain the p-values for every possible target value of $c$, there are many cases where this is intractable, and it is easier to simply obtain the confidence intervals at a fixed $1-\alpha$ level. Some examples of the latter type include procedures that are tuned to a particular level $\alpha$, like median-of-means~\citep{lugosi2019mean} or HulC \citep{kuchibhotla2021hulc}. In yet other cases, for example in differential privacy or distributed/federated learning, the aggregator may only have access to the $K$ different intervals and not the p-values or the raw data due to privacy constraints \citep{humbert2023, federatedLearning,waudbysmith2023}. 

In the following, we will define new aggregation schemes based on the simple concept of voting, which can be used to merge confidence or prediction sets. Section~\ref{sec:majority_vote} presents our general methodology for constructing the sets. In Section~\ref{sec:derand}, we explain how the majority vote method can be used in order to \emph{derandomize} statistical procedures based on data splitting.
In Section~\ref{sec:conformal} we apply our procedure in the context of conformal inference.

\section{Majority vote: definition, extensions and properties}\label{sec:majority_vote}

We now study a versatile method for combining uncertainty sets that is evidently broadly applicable. The key idea is based  on voting: each agent's interval gets to vote, and each point in the space of interest (where the target $c$ lies) will be part of the final set if it is vouched for by more than half (or more generally, some fraction) of the agents.

Majority voting is popular in other contexts, like ensemble methods for prediction; see \cite{breiman1996} and \cite{kuncheva2003,kuncheva2014}. The idea has been proposed within the context of combining conformal prediction intervals by \cite{cherubin2019} and \cite{solari2022} (though the latter work does not cite the earlier ones). 

This section compiles the relevant results in a succinct manner, and building on these, we extend the method in multiple directions. Specifically, we allow for the incorporation of a priori information, and additionally, we are able to achieve smaller sets through the use of a simple randomization or permutation technique without altering the coverage properties. 

The majority vote procedure for sets can be seen as dual to the results in \cite{ruger1978}, who presented a method for combining $K$ different p-values for testing a null hypothesis based on their order statistics; see Appendix~\ref{sec:merg_conf_dist}. These results are used, for example, in multi-split inference where the single agent wants to reduce the randomness induced by sample-splitting by performing many random splits and combining the results; see \cite{diciccio2020}. 

Recently, \cite{guo2023} introduced a different subsampling-based method to conduct inference in the case of multiple splits. Their results assume exchangeability of the underlying sets or p-values.
We handle both the exchangeable and non-exchangeable cases, but the more important distinction is that our method is ``black-box'' (needing to know no details of how the original sets were constructed) while theirs is not. Further, their method needs more distributional assumptions for their asymptotics to hold, while ours is nonasymptotically valid under no assumptions. Finally, their methods are computationally intensive and sophisticated to understand and use, while ours are simple in comparison. In turn, one may expect that when one has access to the full data to perform subsampling, and their distributional assumptions are true, their method may be more efficient than ours.

\subsection{The majority vote procedure}
\label{sec:gen_maj_vote}
Let the observed data $z=(z_1, \dots, z_n)$ be a realization of the random vector $Z=(Z_1, \dots, Z_n)$. In particular, $z=(z_1, \dots, z_n)$ is a point in a sample space $\mathcal{Z}$, while our target $c$ is a point in a measurable space $(\set, \mathcal{A}, \nu)$, where $\mathcal{A}$ is a $\sigma$-algebra on $\set$ and $\nu$ is a measure on $\set$. As mentioned earlier, it is important to note that $c$ can itself be a random variable. The sets $\setc_k = \setc_k(z) \subseteq \set$, $k=1, \dots, K$, based on the observed data, follow the property \eqref{eq:coverage}, where the probability refers to the joint distribution $(Z,c)$ (or only $Z$ if $c$ is fixed). Naturally, the different sets may have only been constructed using different subsets of $z$ (as different public and private data may be available to each of the agents). Let us define a new set $\setc^M$ that includes all points \emph{voted} by at least half of the sets:
\begin{equation}
\label{eq:cm}
    \setc^M:=\bigg\{s \in \set: \frac{1}{K} \sum_{k=1}^K \ind\{s \in \setc_k\} > \frac{1}{2}  \bigg\}.
\end{equation}
The following result stems from~\cite{kuncheva2003} and~\cite{cherubin2019}, and again later by \cite{solari2022}, but we provide a short proof to be self-contained. 
\begin{theorem}
\label{th:CM}
    Let $\setc_1, \dots, \setc_K$ be $K \geq 2$ different sets satisfying property \eqref{eq:coverage}. Then, the set $\setc^M$  defined in \eqref{eq:cm} has coverage of at least $1-2\alpha$:
    \begin{equation}
    \label{eq: coverage_cm}
    \pr\big(c \in \setc^M\big) \geq 1-2\alpha.
    \end{equation}
\end{theorem}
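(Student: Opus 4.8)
The plan is to control the ``miss event'' $\{c \notin \setc^M\}$ by counting how many of the input sets fail to cover the target. First I would introduce the random miss count
\[
N := \sum_{k=1}^K \ind\{c \notin \setc_k\},
\]
which records how many of the $K$ agents' sets exclude $c$. By the definition \eqref{eq:cm}, a point lies in $\setc^M$ precisely when strictly more than half the sets vouch for it, so the complementary event $\{c \notin \setc^M\}$ is exactly $\{\sum_{k} \ind\{c \in \setc_k\} \leq K/2\}$. Since the votes for and against $c$ sum to $K$, this rewrites as $\{N \geq K/2\}$, and the whole problem reduces to bounding $\pr(N \geq K/2)$.

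Next I would compute the expectation of the miss count. By linearity of expectation together with the coverage hypothesis \eqref{eq:coverage},
\[
\Ev[N] = \sum_{k=1}^K \pr(c \notin \setc_k) \leq K\alpha.
\]
Crucially, linearity holds regardless of the joint dependence structure among the sets, so no independence or exchangeability assumption is required at this step --- this is exactly where the ``arbitrarily dependent'' robustness of the procedure enters.

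Finally, since $N$ is a nonnegative random variable, Markov's inequality gives
\[
\pr\big(c \notin \setc^M\big) = \pr\big(N \geq K/2\big) \leq \frac{\Ev[N]}{K/2} \leq \frac{K\alpha}{K/2} = 2\alpha,
\]
and taking complements yields the claimed bound $\pr(c \in \setc^M) \geq 1-2\alpha$.

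I do not anticipate a genuine technical obstacle here: the entire argument is a single application of Markov's inequality to the miss count. The only point requiring care is the bookkeeping around the strict inequality $>\tfrac12$ in \eqref{eq:cm} --- one must check that $\{c \notin \setc^M\}$ corresponds to the non-strict event $\{N \geq K/2\}$, which is precisely what produces the clean factor of $2$. I would also flag that the loss of this factor (rather than recovering the full $1-\alpha$) is intrinsic to the union-style nature of the bound, and that the threshold $\tfrac12$ could be replaced by a general fraction to trade off coverage against the size of the merged set.
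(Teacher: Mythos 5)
Your proof is correct and follows essentially the same route as the paper: your miss count $N = \sum_{k=1}^K \ind\{c \notin \setc_k\}$ is just $K$ times the paper's averaged indicator $\frac{1}{K}\sum_{k=1}^K \phi_k$, and both arguments reduce the miscoverage event to $\{N \geq K/2\}$ (correctly handling the strict inequality in \eqref{eq:cm}) and then apply Markov's inequality together with linearity of expectation. No substantive difference exists between the two arguments.
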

\begin{proof}
    Let $\phi_k=\phi_k(Z,c)=\ind\{c \notin \setc_k\}$ be a Bernoulli random variable such that $\Ev[\phi_k] \leq \alpha$, $k=1,\dots,K$.     We have by Markov's inequality,
    \[
    \begin{split}
    \pr(c \notin \setc^M) &= \pr\left(\frac{1}{K} \sum_{k=1}^K \phi_k \geq \frac{1}{2} \right) \leq 2 \Ev\left[\frac{1}{K} \sum_{k=1}^K \phi_k \right] = \frac{2}{K} \sum_{k=1}^K \Ev\left[\phi_k \right] \leq 2\alpha ,
    \end{split}
    \]
  which concludes the proof.
\end{proof}

The above proof essentially relies on defining e-values $E_k = \phi_k/\alpha$, combining them by averaging, and applying Markov's inequality~\citep{ramdas2024hypothesis}, but we omit further explicit reference to e-values to keep things simple for the unacquainted reader.

The merged set could be the empty set --- however, it cannot be empty with probability more than $2\alpha$, because it would otherwise violate the coverage guarantee. In practice, it occurs with much smaller probability, almost never occurring in our simulations.

\begin{remark}
    Actually, a slightly tighter bound can be obtained if $K$ is odd.
    In this case, for a point to be contained in $\setc^M$, it must be voted for by at least $\lceil K/2 \rceil$ of the other sets. This implies that, with the same arguments used in Theorem~\ref{th:CM}, the probability of miscoverage is equal to $\alpha K/\lceil K/2 \rceil = 2\alpha K/(K+1)$, approaching \eqref{eq: coverage_cm} for large $K$.
    
    This result is known to be tight in a worst-case sense; a simple example from \cite{kuncheva2003} shows that if $K$ is odd and if the sets have a particular joint distribution, then the error will equal $(\alpha K)/\lceil K/2 \rceil$. This worst-case distribution allows for only two types of cases: either all agents provide the same set that contains $c$ (so majority vote is correct), or $\lfloor K/2 \rfloor$ sets contain $c$ but the others do not (so majority vote is incorrect). Each of the latter cases happens with some probability $p$, so the probability that majority vote makes an error is $\binom{K}{\lfloor K/2 \rfloor +1} p$. The probability that any particular agent makes an error is $\binom{K-1}{\lfloor K/2 \rfloor}p$, which we set as our choice of $\alpha$, and then we see that the probability of error for majority vote simplifies to $\alpha K/\lceil K/2 \rceil$.  \emph{Despite the apparent tightness of majority vote in the worst-case, we will develop several ways to improve this procedure in non-worst-case instances, while retaining the same worst-case performance.}
\end{remark}
\begin{remark}[When does majority vote overcover and when does it undercover?]
    While the worst case theoretical guarantee for majority vote is a coverage level of $1-2\alpha$, sometimes it will get close to the desired $1-\alpha$ coverage, and sometimes it may even overcover, achieving coverage closer to one. Here, we provide some intuition for when to expect each type of behavior in practice assuming $\alpha < 1/2$, foreshadowing many results to come. If the sets are actually independent (or nearly so), we should expect the method to have coverage more than $1-\alpha$. This can be seen via an application of Hoeffding's inequality in place of Markov's inequality in the proof of Theorem~\ref{th:CM}: since each $\phi_k$ has expectation (at most) $\alpha$, we should expect $\frac{1}{K} \sum_{k=1}^K \phi_k$ to concentrate around $\alpha$, and the probability that this average exceeds $1/2$ is exponentially small (as opposed to $2\alpha$), being at most $\exp(-2K(1/2-\alpha)^2)$ by Hoeffding's inequality. In contrast, if the sets are identical (the opposite extreme of independence), clearly the method has coverage $1-\alpha$. As argued in the previous remark, there is a worst case dependence structure that forces majority to vote to have an error of (essentially) $2\alpha$. Finally, if the sets are exchangeable, it appears more likely that the coverage will be closer to $1-\alpha$ than $1-2\alpha$. While one informal reason may be that exchangeability connects the two extremes of independence and being identical (with coverages close to $1$ and $1-\alpha$), a slightly more formal  reason is that under exchangeability, we will later in this section actually devise a strictly tighter set $\setc^E$ than $\setc^M$ which also achieves the same coverage guarantee of $1-2\alpha$, thus making $\setc^M$ itself likely to have a substantially higher coverage.
\end{remark}

\subsection{Other thresholds and upper bounds}

The above method and result can be easily generalized beyond the threshold value of 1/2. We record it as a result for easier reference. For any $\tau \in [0,1)$, let
\begin{equation}\label{eq:c^tau}
    \setc^\tau:=\bigg\{s \in \set: \frac{1}{K} \sum_{k=1}^K \ind\{s \in \setc_k\} > \tau \bigg\}.
\end{equation}

\begin{theorem}\label{th:ctau}
     Let $\setc_1, \dots, \setc_K$ be $K \geq 2$ different sets satisfying property
     \eqref{eq:coverage}. Then, 
    \begin{equation*}
    \pr\big(c \in \setc^\tau \big) \geq 1-\alpha/(1-\tau).
    \end{equation*}
\end{theorem}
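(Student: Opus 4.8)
The plan is to mimic the proof of Theorem~\ref{th:CM} almost verbatim, replacing the threshold $1/2$ with the general $\tau$ and applying Markov's inequality with the appropriate constant. As before, I would introduce the miscoverage indicators $\phi_k = \phi_k(Z,c) = \ind\{c \notin \setc_k\}$, which are Bernoulli random variables with $\Ev[\phi_k] \leq \alpha$ for each $k$. The key observation is that the target $c$ fails to lie in $\setc^\tau$ precisely when \emph{at most} a fraction $\tau$ of the sets vote for it, i.e.\ when the fraction voting \emph{against} it is at least $1-\tau$.

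Concretely, the first step is to translate the event $\{c \notin \setc^\tau\}$ into a statement about the average of the $\phi_k$. By the definition of $\setc^\tau$ in \eqref{eq:c^tau}, we have $c \in \setc^\tau$ iff $\frac{1}{K}\sum_{k=1}^K \ind\{c \in \setc_k\} > \tau$, so the complement is $c \notin \setc^\tau$ iff $\frac{1}{K}\sum_{k=1}^K \ind\{c \in \setc_k\} \leq \tau$. Since $\ind\{c \in \setc_k\} = 1 - \phi_k$, this is equivalent to $\frac{1}{K}\sum_{k=1}^K \phi_k \geq 1 - \tau$. The second step is then to apply Markov's inequality to the nonnegative random variable $\frac{1}{K}\sum_{k=1}^K \phi_k$ at level $1-\tau$, giving
\[
\pr(c \notin \setc^\tau) = \pr\left(\frac{1}{K}\sum_{k=1}^K \phi_k \geq 1-\tau\right) \leq \frac{1}{1-\tau}\,\Ev\left[\frac{1}{K}\sum_{k=1}^K \phi_k\right] = \frac{1}{(1-\tau)K}\sum_{k=1}^K \Ev[\phi_k] \leq \frac{\alpha}{1-\tau},
\]
where the final inequality uses linearity of expectation and $\Ev[\phi_k] \leq \alpha$. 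Taking complements yields the claimed bound $\pr(c \in \setc^\tau) \geq 1 - \alpha/(1-\tau)$.

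I do not anticipate any serious obstacle here, since this is a direct generalization of the already-proven Theorem~\ref{th:CM} (which is recovered by setting $\tau = 1/2$). The only point requiring a moment's care is the direction of the inequality in the complement event: the strict inequality $> \tau$ in the definition of $\setc^\tau$ becomes the non-strict $\leq \tau$ in the complement, which translates to the non-strict $\geq 1-\tau$ needed to apply Markov cleanly. The restriction $\tau \in [0,1)$ is exactly what guarantees $1-\tau > 0$, so that dividing by $1-\tau$ is legitimate and the bound is nonvacuous; no additional assumptions on the joint dependence structure of the sets are needed, precisely because Markov's inequality controls the mean without any concentration or independence hypotheses.
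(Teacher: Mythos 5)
Your proof is correct and is exactly the argument the paper intends: the paper omits the proof of this theorem, stating only that it ``follows the same lines'' as Theorem~\ref{th:CM}, and your derivation --- translating $\{c \notin \setc^\tau\}$ into $\frac{1}{K}\sum_{k=1}^K \phi_k \geq 1-\tau$ and applying Markov's inequality --- is precisely that generalization, with the threshold $1/2$ replaced by $\tau$ and the constant $2$ replaced by $1/(1-\tau)$. Your attention to the strict-versus-nonstrict inequality in the complement event and to the role of $\tau \in [0,1)$ is also sound.
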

The proof follows the same lines as the original results outlined in Theorem~\ref{th:CM} and is thus omitted. As expected, it can be noted that the obtained bounds decrease as $\tau$ increases. In fact, for larger values of $\tau$, smaller sets will be obtained. One can check that this result also yields the right bound for the intersection ($\tau=1 - 1/K$) and the union ($\tau=0$). In certain situations, it is possible to identify an upper bound to the coverage of the set resulting from the majority vote.

\begin{theorem}
\label{th:up_bound}
    Let $\setc_1, \dots, \setc_K$ be $K \geq 2$ different sets having exact coverage $1-\alpha$. Then, 
    \begin{equation}
    \label{eq:upper_bound}
        \pr(c \in \setc^M) \leq 1 - \frac{K\alpha - \left \lceil{\frac{K}{2}}\right\rceil + 1}{K - \left \lceil{\frac{K}{2}}\right\rceil  + 1}.
    \end{equation}
\end{theorem}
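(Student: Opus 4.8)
The plan is to track a single integer-valued summary of the configuration and recast the claim as a one-dimensional moment problem. As in the proof of Theorem~\ref{th:CM}, write $\phi_k = \ind\{c \notin \setc_k\}$, but now exploit \emph{exact} coverage: $\Ev[\phi_k] = \alpha$ for every $k$. Let $S := \sum_{k=1}^K \phi_k$ count how many of the input sets miss the target, so that $S$ takes values in $\{0, 1, \dots, K\}$ and, by linearity, $\Ev[S] = K\alpha$. The first step is to translate the event $\{c \in \setc^M\}$ into a statement about $S$. Since the defining inequality in \eqref{eq:cm} is strict, $c \in \setc^M$ holds exactly when strictly more than $K/2$ of the sets contain $c$, i.e.\ when $K - S > K/2$, i.e.\ when $S < K/2$. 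Because $S$ is an integer, this is equivalent to $S \le \lceil K/2 \rceil - 1$ for both even and odd $K$ (one checks the two parities separately). Writing $m := \lceil K/2 \rceil$, I therefore have $\pr(c \in \setc^M) = \pr(S \le m-1)$, and it suffices to prove the matching lower bound $\pr(S \ge m) \ge (K\alpha - m + 1)/(K - m + 1)$.

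The second step is to obtain this lower bound purely from the mean constraint $\Ev[S] = K\alpha$ and the range $0 \le S \le K$, with no further assumptions on the (arbitrary) joint distribution of the sets. Set $p := \pr(S \ge m)$ and split the expectation according to whether $S < m$ or $S \ge m$:
\[
K\alpha = \Ev[S] = \sum_{j=0}^{m-1} j\,\pr(S=j) + \sum_{j=m}^{K} j\,\pr(S=j).
\]
Bounding each value of $j$ in the first sum by $m-1$ and each in the second by $K$ gives $K\alpha \le (m-1)(1-p) + Kp = (m-1) + (K-m+1)p$. Rearranging yields $p \ge (K\alpha - m + 1)/(K - m + 1)$, and substituting $m = \lceil K/2 \rceil$ produces exactly the bound \eqref{eq:upper_bound} after writing $\pr(c \in \setc^M) = 1 - p$.

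I do not expect a genuine obstacle here; the argument is an extremal computation rather than a concentration inequality. The two points that require care are (i) the parity bookkeeping in the first step --- making sure the strict threshold $>1/2$ turns into $S \le \lceil K/2 \rceil - 1$ in both the even and odd cases --- and (ii) confirming the bound is not vacuous: when $K\alpha \le m-1$ the numerator is nonpositive and the inequality is trivially true, so the content lies in the regime $K\alpha > m-1$. For completeness one can verify tightness by exhibiting the extremal two-point law that saturates the estimate, placing mass $1-p$ at $S = m-1$ and mass $p$ at $S = K$ with $p$ equal to the stated ratio; this is the configuration in which every inequality used above holds with equality.
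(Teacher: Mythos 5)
Your proposal is correct and follows essentially the same route as the paper's proof: both reduce the claim to the miss-count $S=\sum_k \ind\{c\notin\setc_k\}$ with $\Ev[S]=K\alpha$, split the expectation at the threshold $\lceil K/2\rceil$, and bound the two partial sums by $\lceil K/2\rceil-1$ and $K$ respectively to get $\pr(S\ge \lceil K/2\rceil) \ge (K\alpha-\lceil K/2\rceil+1)/(K-\lceil K/2\rceil+1)$. The paper writes this same bounding step via an add-and-subtract identity with an explicit nonnegative remainder term, but the inequality and conclusion are identical.
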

A similar bound can be derived if the miscoverage of the input sets is not exact but is upper-bounded by $\alpha$. For typically employed values of $\alpha$, this upper bound is useful only for small $K$. When $K=2$, it can be seen that \eqref{eq:cm} coincides with the intersection between the two sets; this correctly implies that the coverage in this case lies in $[1-2\alpha, 1-\alpha]$.

\subsection{On the computation of the majority vote set}

Even if the input sets are intervals, the majority vote set may be a union of intervals. 
In Appendix \ref{sec:algorithm}, we describe a simple aggregation algorithm to find this set quickly by sorting the endpoints of the input intervals and checking some simple conditions. But in practice, we find that it is indeed a non-empty interval in the vast majority of our simulations, suggesting that it may be possible to identify some sufficient conditions under which this happens. One such condition is given below, and it is represented in Figure~\ref{fig:hist}.
\begin{lemma}\label{lemma:inter}
    If $\setc_1,\dots,\setc_K$ are one-dimensional intervals and $\cap_{k=1}^K \setc_k \neq \varnothing$, then $\setc^\tau$ is an interval for any $\tau$, and, in particular, $\setc^M$ is an interval.
\end{lemma}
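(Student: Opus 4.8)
The plan is to reduce the claim to a \emph{unimodality} (quasiconcavity) property of the count function
$f(s) := \sum_{k=1}^K \ind\{s \in \setc_k\}$, which records how many of the input intervals contain the point $s$. The key observation is that $\setc^\tau = \{s \in \set : f(s) > K\tau\}$ is exactly a strict superlevel set of $f$. Hence, if I can show that every strict superlevel set of $f$ is convex (equivalently, that $f$ is quasiconcave), then, since the convex subsets of $\real$ are precisely its intervals, the lemma follows immediately, with $\setc^M$ being the special case $\tau = 1/2$.

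First I would fix a common point $p \in \cap_{k=1}^K \setc_k$, which exists by hypothesis. I claim that $f$ is nondecreasing on $(-\infty, p]$ and nonincreasing on $[p, \infty)$. To see the first part, take $s_1 < s_2 \le p$ and any $k$ with $s_1 \in \setc_k$. Since $\setc_k$ is an interval, hence convex, and contains both $s_1$ and $p$, it contains the entire segment $[s_1, p]$, and in particular $s_2$. Thus $\ind\{s_1 \in \setc_k\} \le \ind\{s_2 \in \setc_k\}$ for every $k$, and summing over $k$ yields $f(s_1) \le f(s_2)$. The argument on $[p, \infty)$ is symmetric.

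With unimodality in hand, quasiconcavity of $f$ is routine: for $s_1 < z < s_2$, if $z \le p$ then $f(z) \ge f(s_1)$ by monotonicity on the left branch, whereas if $z \ge p$ then $f(z) \ge f(s_2)$ by monotonicity on the right branch; in either case $f(z) \ge \min\{f(s_1), f(s_2)\}$. Consequently any strict superlevel set $\{s : f(s) > t\}$ is convex, which is what we needed.

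I do not foresee a genuine obstacle here; the only point requiring care is that the convexity step must not secretly rely on the intervals being open or closed, but this is automatic, since an interval of any type (open, closed, or half-open) is convex, so the segment argument goes through unchanged. The degenerate cases in which $\setc^\tau$ is empty or a singleton (which can occur for large $\tau$) are themselves intervals and are covered without modification.
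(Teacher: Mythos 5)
Your proposal is correct and takes essentially the same approach as the paper: both arguments reduce the claim to unimodality (quasiconcavity) of the vote-count function $f(s)=\sum_{k=1}^K \ind\{s\in\setc_k\}$, anchored at the nonempty intersection, and then identify $\setc^\tau$ as a superlevel set. If anything, your segment argument (any $\setc_k$ containing $s_1$ and the common point $p$ must contain everything in between) is a more rigorous rendering of the paper's informal ``visual'' claim that moving outward from $\cap_{k=1}^K \setc_k$ can only encounter closing endpoints.
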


A typical example of a non-empty intersection arises when all intervals contain the target $c$, and, as mentioned earlier, this probability is at least $1-K\alpha$. 
Clearly, this is an underestimate of the true probability and it is non-trivial only for moderate values of $K$. One can avoid the problem of having an union of intervals by taking the convex hull of set (the smallest interval containing the set) and this solution is used for example in \cite{gupta2022} in the framework of leave-one-out conformal prediction. 

\begin{figure}
    \centering
    \includegraphics[width=0.8\textwidth]{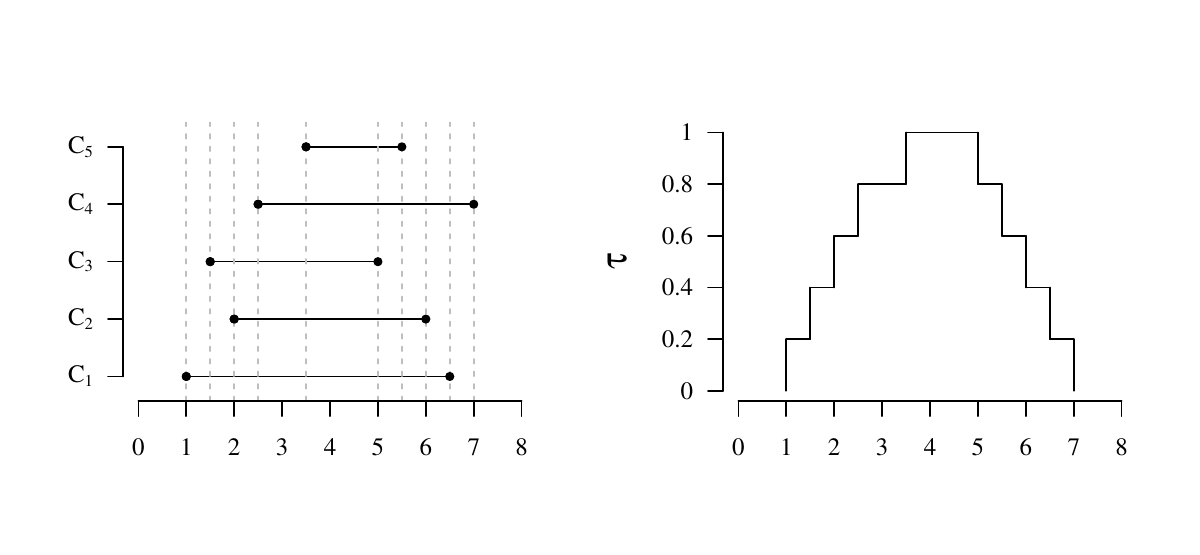}
    \caption{Visual representation of the majority vote procedure when $\cap_{k=1}^K \setc_k \neq \varnothing$.}
    \label{fig:hist}
\end{figure}

\subsection{Equal-width intervals and the ``Median of Midpoints''}

A special case for the majority vote set in \eqref{eq:cm} occurs if the input sets are (or can be bounded by) intervals of the same width, and can thus each can be represented as their midpoint plus/minus a half-width, then the following rule results in a more computationally efficient procedure. We call this method the ``Median of Midpoints''.

\begin{theorem}\label{thm:median-of-midpoints}
    Suppose the input sets $\setc_1,\dots,\setc_K$ are intervals having the same width, and let $\setc^M$ be their majority vote set. Let the midpoint of $\setc_k$ be denoted by $c_k$. 
    If $K$ is odd, let $c_{(\lceil K/2 \rceil)}$ denote their median, and let $\setc^{(K/2)}$ denote the interval whose midpoint is $c_{(\lceil K/2 \rceil)}$. 
    If $K$ is even, let $\setc^{(K/2)}$ be defined by the intersection of the sets whose midpoints are $c_{(K/2)}$ and $c_{(1+ K/2)}$.
    Then, $\setc^{(K/2)} \supseteq \setc^M$ and is hence also a $1-2\alpha$ uncertainty set. Further, $\setc^{(K/2)}$ has at most the same width as the input sets. If $\cap_{k=1}^K \setc_k \neq \varnothing$, then $\setc^{(K/2)}=\setc^M$.
 \end{theorem}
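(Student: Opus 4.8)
The plan is to translate everything into statements about the midpoints. Write each equal-width input as $\setc_k = [c_k - h, c_k + h]$ for a common half-width $h$, and exploit the duality $s \in \setc_k \iff |s - c_k| \le h \iff c_k \in [s-h, s+h]$. Hence, setting $N(s) := \#\{k : c_k \in [s-h, s+h]\}$ for the number of midpoints within distance $h$ of $s$, membership in the vote sets becomes a counting condition, $s \in \setc^M \iff N(s) > K/2$ (and more generally $s \in \setc^\tau \iff N(s) > \tau K$). Sorting the midpoints as $c_{(1)} \le \cdots \le c_{(K)}$, the entire theorem reduces to understanding how the order statistics of the midpoints interact with the window $[s-h, s+h]$.

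First I would prove the containment $\setc^M \subseteq \setc^{(K/2)}$, which via monotonicity of coverage together with Theorem~\ref{th:CM} immediately yields the $1-2\alpha$ guarantee. Fix $s \in \setc^M$, so $N(s) > K/2$. For $K$ odd with $m = \lceil K/2 \rceil$, I argue by contradiction: if the median midpoint satisfied $c_{(m)} > s+h$, then the $m$ midpoints $c_{(m)}, \dots, c_{(K)}$ would all lie outside $[s-h,s+h]$, leaving at most $K - m < K/2$ inside, contradicting $N(s) > K/2$; symmetrically $c_{(m)} < s-h$ is impossible. Hence $|s - c_{(m)}| \le h$, i.e. $s \in \setc^{(K/2)}$. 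The even case is the same pigeonhole run on the two central order statistics $c_{(K/2)}$ and $c_{(K/2+1)}$, placing $s$ within $h$ of both and hence in their intersection. The width claim is then immediate: for odd $K$ the set has width exactly $2h$, while for even $K$ the intersection $[c_{(K/2+1)} - h,\, c_{(K/2)} + h]$ has width $2h - (c_{(K/2+1)} - c_{(K/2)}) \le 2h$.

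The substantive step, and the part I expect to be the main obstacle, is the reverse inclusion $\setc^{(K/2)} \subseteq \setc^M$ under $\cap_k \setc_k \neq \varnothing$. I would first record that non-empty intersection is equivalent to $c_{(K)} - c_{(1)} \le 2h$, i.e. all midpoints lie in a window of length $2h$. The crux is a one-sidedness observation: for any $s$, one cannot simultaneously have a midpoint strictly below $s - h$ and one strictly above $s + h$, since their gap would exceed $2h$ and contradict $c_{(K)} - c_{(1)} \le 2h$. Thus the midpoints excluded from $[s-h, s+h]$ all lie on a single side. Then, for $s \in \setc^{(K/2)}$ (so $s$ is within $h$ of the relevant central order statistic), I bound this one-sided excluded count by the number of midpoints strictly beyond that central order statistic, which is at most $\lceil K/2 \rceil - 1$; this forces $N(s) \ge \lfloor K/2 \rfloor + 1 > K/2$, giving $s \in \setc^M$ and hence equality.

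The care points I anticipate are: handling ties among the midpoints so that the order-statistic counts are genuine upper bounds; keeping strict-versus-nonstrict inequalities consistent with closed intervals (a midpoint landing exactly on an endpoint $s \pm h$ is counted as inside, matching $s \in \setc_k \iff |s-c_k|\le h$); treating $s$ at the endpoints of $\setc^{(K/2)}$; and cleanly separating the odd and even bookkeeping, including the degenerate even case where the central intersection is empty so $\setc^{(K/2)} = \varnothing$ and the equality is trivial.
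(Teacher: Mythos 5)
Your proof is correct, but it takes a notably different route from the paper on the key step. For the containment $\setc^M \subseteq \setc^{(K/2)}$ you and the paper use the same underlying mechanism --- sorted midpoints plus equal widths --- differing only in phrasing: the paper observes that the intervals covering any fixed point form a contiguous block in the midpoint ordering, so a block of size $>K/2$ must contain the median index, whereas you run a pigeonhole count on how many midpoints can lie outside the window $[s-h,s+h]$; these are interchangeable. The genuine divergence is in the equality claim under $\cap_{k=1}^K \setc_k \neq \varnothing$. The paper (assuming distinct midpoints and closed intervals) \emph{computes} $\setc^M$ outright: nonempty intersection forces the endpoint ordering $a_{(1)} < \dots < a_{(K)} < b_{(1)} < \dots < b_{(K)}$, the vote count is constant on each cell of this partition and drops by one per step outward, so $\setc^M$ equals $[a_{(\lceil K/2 \rceil)}, b_{(\lceil K/2 \rceil)}]$ for odd $K$ (resp.\ $[a_{(K/2+1)}, b_{(K/2)}]$ for even $K$), which is exactly $\setc^{(K/2)}$. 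You instead prove the reverse inclusion abstractly: nonempty intersection is equivalent to all midpoints lying in a window of length $2h$, so the midpoints excluded from $[s-h,s+h]$ must all sit on a single side of it, and for $s \in \setc^{(K/2)}$ that one-sided excluded count is at most $\lceil K/2 \rceil - 1$, forcing $N(s) > K/2$. Your route is tie-robust with no extra work (the paper's computation leans on its distinctness simplification) and isolates a reusable one-sidedness observation; the paper's route buys more information, namely a closed-form description of $\setc^M$ --- its exact endpoints --- rather than just the set equality. Both arguments dispose of the width claim identically: exact width for odd $K$, and an intersection of two equal-width intervals, hence no wider, for even $K$.
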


\subsection{How large is the majority vote set?}

One naive way to combine the $K$ sets is to randomly select one of them as the final set; this method clearly has coverage $1-\alpha$, and its length is in between their union and intersection, so it seems reasonable to ask how it compares to majority vote. Surprisingly, majority vote is not always strictly better than this approach in terms of the expected length of the set: consider, for example, three nested intervals $\setc_1, \setc_2, \setc_3$ of width $10, 8$ and $3$, respectively. The majority vote set is $\setc_2$, with a length of $8$, but randomly selecting an interval results in an average length of $7$. However, we show next that the majority vote set cannot be more than twice as large. In addition, when the input sets are intervals, it is never wider than the largest interval. 

\begin{theorem}
    \label{thm:length_mv}
    Let $\nu(\setc^\tau)$ be the measure (size) of $\setc^\tau$ in \eqref{eq:c^tau}. Then, for all $\tau \in [0,1)$,
    \begin{equation}\label{eq:m^tau}
        \nu(\setc^\tau) \leq \frac{1}{K\tau}\sum_{k=1}^K \nu(\setc_k).
    \end{equation}
    If the input sets are $K$ one-dimensional intervals, for all $\tau \in [\frac{1}{2}, 1)$, we have that
    \begin{equation}\label{eq:m^tau_max}
        \nu(\setc^\tau) \leq \max_k \nu(\setc_k).
    \end{equation}
\end{theorem}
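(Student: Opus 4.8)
The plan is to handle the two inequalities separately, since they rely on quite different ideas. For the general bound \eqref{eq:m^tau}, I would integrate the coverage count against $\nu$ and apply a Markov-type argument at the level of the measure. Define the counting function $f(s) = \sum_{k=1}^K \ind\{s \in \setc_k\}$, so that $\setc^\tau = \{s : f(s) > K\tau\}$ by \eqref{eq:c^tau}. By linearity (Tonelli), $\int_\set f \, d\nu = \sum_{k=1}^K \nu(\setc_k)$. On the other hand, since $f(s) > K\tau$ on $\setc^\tau$, I would lower bound $\int_\set f\,d\nu \geq \int_{\setc^\tau} f\,d\nu \geq K\tau\,\nu(\setc^\tau)$. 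Combining the two displays and dividing by $K\tau$ yields \eqref{eq:m^tau}. This part is routine and requires only $\tau \in (0,1)$ (the bound is vacuous at $\tau = 0$).

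The interval bound \eqref{eq:m^tau_max} is the substantive part, and I would prove it by a pairwise-intersection argument that crucially exploits both one-dimensionality and $\tau \geq 1/2$. The first key step is: for any two points $s, t \in \setc^\tau$, there exists at least one input interval containing both. Indeed, the number of intervals containing $s$ is $f(s) > K\tau$ and likewise for $t$, so the number containing both is at least $f(s) + f(t) - K > 2K\tau - K = K(2\tau - 1) \geq 0$; being a nonnegative integer strictly exceeding a nonnegative quantity, it is at least one. This is exactly where $\tau \geq 1/2$ enters. The second key step uses convexity of one-dimensional intervals: if $\setc_k = [a_k, b_k]$ contains both $s$ and $t$, it contains the whole segment $[\min(s,t), \max(s,t)]$.

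It remains to upgrade \emph{every pair is covered by a common interval} to \emph{the entire set $\setc^\tau$ sits inside a single interval}. I would set $\ell = \inf \setc^\tau$ and $r = \sup \setc^\tau$, pick sequences $s_n \to \ell$ and $t_n \to r$ in $\setc^\tau$, and for each $n$ choose an index $k_n$ with $\setc_{k_n} \supseteq [s_n, t_n]$. Since there are only finitely many ($K$) indices, some index $k$ recurs infinitely often; passing to that subsequence gives $a_k \leq s_n$ and $b_k \geq t_n$ for infinitely many $n$, hence $a_k \leq \ell$ and $b_k \geq r$ in the limit. Therefore $\setc^\tau \subseteq [\ell, r] \subseteq \setc_k$ (up to endpoints, which do not affect $\nu$), giving $\nu(\setc^\tau) \leq b_k - a_k = \nu(\setc_k) \leq \max_j \nu(\setc_j)$.

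The main obstacle is precisely this last globalization step: pairwise containment does not in general force a single common interval, and it is the finiteness of $K$ together with the limiting argument at the endpoints $\ell, r$ that makes it work. A secondary technical point to handle carefully is the treatment of open versus closed (or half-open) intervals and whether $\ell, r$ are attained; since all conclusions concern the measure $\nu$, these boundary ambiguities can be absorbed without affecting the stated bound.
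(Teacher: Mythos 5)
Your proposal is correct, and while your proof of \eqref{eq:m^tau} is essentially the paper's argument (the paper writes the measure as an integral of an indicator and uses $\ind\{y > \tau\} \leq y/\tau$ pointwise, which is exactly your Markov-type bound $\int f \, d\nu \geq K\tau\,\nu(\setc^\tau)$ read in reverse), your proof of \eqref{eq:m^tau_max} takes a genuinely different route. The paper proceeds by reduction to the equal-width case: it enlarges each input interval about its midpoint to the maximal width $\delta = \max_k \nu(\setc_k)$, observes that the majority vote set is elementwise monotone in its inputs (so $\setc^M$ of the original family is contained in $\setc^M$ of the enlarged family), and then invokes Theorem~\ref{thm:median-of-midpoints} to conclude that the latter is contained in the median-of-midpoints interval, whose width is at most $\delta$; finally it extends from $\tau = 1/2$ to all $\tau \in [\frac12, 1)$ by monotonicity of $\setc^\tau$ in $\tau$. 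Your argument instead proves the stronger structural fact that for any $\tau \geq 1/2$ the whole set $\setc^\tau$ lies in the closure of a \emph{single input interval}: the inclusion--exclusion count $f(s) + f(t) - K > K(2\tau - 1) \geq 0$ gives a common covering interval for every pair, convexity upgrades pointwise to segment containment, and the pigeonhole/limiting argument at $\inf \setc^\tau$ and $\sup \setc^\tau$ globalizes it. What each buys: the paper's proof is modular (it reuses the median-of-midpoints machinery already established and exposes the monotonicity property, which the paper comments on right after the theorem statement), whereas yours is self-contained, handles all $\tau \in [\frac12,1)$ uniformly without the separate monotonicity step, and yields the sharper conclusion of containment in an input interval rather than merely a width bound. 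The only loose ends in your write-up are trivial edge cases you can dispatch in a line: if $\setc^\tau = \varnothing$ the bound is immediate, and if some input interval is unbounded then $\max_k \nu(\setc_k) = \infty$ and the claim is vacuous; otherwise $\setc^\tau \subseteq \bigcup_k \setc_k$ is bounded, so $\ell$ and $r$ are finite and your limiting argument goes through.
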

In \eqref{eq:m^tau}, $\nu$ could be the Lebesgue measure (for intervals), or the counting measure (for discrete, categorical sets), for example.
The proof of \eqref{eq:m^tau_max} uses the fact that the majority vote set is elementwise monotonic in its input sets, meaning that if any of the input sets gets larger, the majority vote set can never get smaller. From \eqref{eq:m^tau} we have that if $\tau=1/2$, then the  measure of the majority vote set is never larger than 2 times the average of the measure of the initial sets. This result is essentially tight as can be seen in the following example involving one-dimensional intervals. For odd $K$, let $(K+1)/2$ intervals have a length $L$, while the rest have a length of nearly 0. The average length is then $(K+1)L/(2K)$, and the majority vote has length $L$, whose ratio approaches $1/2$ for large $K$. Clearly, this is just an illustrative example pointing out the tightness of the obtained bound; in practice, we usually observe a less adversarial behavior. In addition, \eqref{eq:m^tau} gives the right bound for the intersection ($\tau \uparrow 1$) and for the union ($\tau \uparrow 1/K$).

Also the bound in \eqref{eq:m^tau_max} is tight as can be seen from the following example. Consider a scenario similar to that of the last example. Suppose we have $K > 2$ confidence intervals, with odd $K$, such that $\setc_1= \dots = \setc_{\lceil K/2 \rceil} = (-\epsilon,1)$ and $\setc_{\lceil K/2 \rceil + 1} = \dots = \setc_K = (0,1+\epsilon)$, for some $\epsilon>0$. By definition, if $\tau=(K-2)/(2K)<1/2$, then the set $\setc^\tau$ coincides with the union of the initial sets, which is larger than the input intervals and has width $1+2\epsilon$. On the other hand, choosing $\tau=1/2$, as a consequence of Theorem~\ref{thm:median-of-midpoints} we obtain a set whose size is $1+\epsilon$, matching the width of the input sets. This fact provides a practical justification for choosing $\tau = 1/2$: it is the smallest $\tau$ for which the combined set cannot be larger than the input sets. In particular, a simple majority vote seems to offer a good compromise between coverage and size.

\subsection{Combining independent or nested confidence sets}\label{sec:independent_sets}

When $\setc_1, \dots, \setc_K$ are independent, then a more tailored combination rule can yield a smaller set with exactly $1-\alpha$ coverage. The rule is similar to~\eqref{eq:cm}, albeit with a different threshold, which is related to the quantile of a binomial distribution with $K$ trials and parameter $1-\alpha$. We define $Q_K(\alpha)$ as the $\alpha$-quantile of a $\binomial(K,1-\alpha)$:
\begin{equation}
\label{eq:quantile}
    Q_K(\alpha):=\sup\{x:F(x)\leq \alpha\},
\end{equation}
where $F(\cdot)$ is the cumulative distribution function of a $\binomial(K, 1-\alpha)$. 

\begin{proposition}
\label{prop:indep}
    Let $\setc_1, \dots, \setc_K$ be $K \geq 2$ different independent sets satisfying~\eqref{eq:coverage} and for a fixed parameter of interest $c$. Then, the set
    \begin{equation*}
        \setc^M=\left\{s \in \set: \sum_{k=1}^K \ind\{s \in \setc_K\} > Q_K(\alpha) \right\}
    \end{equation*}
    has coverage equal to $1-\alpha$.
\end{proposition}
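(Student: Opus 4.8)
The plan is to reduce the event $\{c \in \setc^M\}$ to a statement about a single integer-valued random variable, namely the number of input sets that vouch for the true target $c$. Define the vote count $N := \sum_{k=1}^K \ind\{c \in \setc_k\}$. By the definition of $\setc^M$, we have $c \in \setc^M$ if and only if $N > Q_K(\alpha)$, so that $\pr(c \in \setc^M) = \pr(N > Q_K(\alpha))$. The whole argument then hinges on understanding the distribution of $N$.

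First I would exploit the two hypotheses — independence of the sets and fixedness of $c$ — to observe that the indicators $\ind\{c \in \setc_k\}$ are independent Bernoulli variables with success probabilities $p_k := \pr(c \in \setc_k) \geq 1-\alpha$; consequently $N$ is a sum of independent (not necessarily identical) Bernoullis. In the case of exact coverage, $p_k = 1-\alpha$ for every $k$, so $N \sim \binomial(K, 1-\alpha)$ and $\pr(N > Q_K(\alpha)) = 1 - F(Q_K(\alpha))$, where $F$ is the CDF appearing in \eqref{eq:quantile}. The next step is the quantile bookkeeping: since $N$ is integer-valued, $F$ is a right-continuous step function and the supremum in $Q_K(\alpha) = \sup\{x : F(x) \leq \alpha\}$ is attained, so $F(Q_K(\alpha)) \leq \alpha$ and hence $\pr(c \in \setc^M) = 1 - F(Q_K(\alpha)) \geq 1-\alpha$. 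To upgrade from exact coverage to the general hypothesis $p_k \geq 1-\alpha$, I would invoke a stochastic dominance argument: raising any success probability only stochastically enlarges the sum, so $N$ dominates a $\binomial(K, 1-\alpha)$ variable and $\pr(N > Q_K(\alpha))$ can only increase, preserving the $\geq 1-\alpha$ bound.

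The step deserving the most care is the assertion that the coverage is \emph{exactly} $1-\alpha$. Because $F$ takes only the finitely many values of a binomial CDF, one generically has $F(Q_K(\alpha)) < \alpha$, so the purely deterministic rule typically overcovers; equality $\pr(c \in \setc^M) = 1-\alpha$ holds under exact coverage precisely when $\alpha$ is a value attained by $F$, and in general one recovers exactness only by introducing an independent randomization at the boundary count $N = Q_K(\alpha)$ (accepting it with the appropriate probability). I would therefore foreground the clean inequality $\pr(c \in \setc^M) \geq 1-\alpha$ as the core guarantee and record the exact-coverage (or randomized) refinement as the regime in which the bound is tight. The conceptual point to emphasize is that $Q_K(\alpha)$ is by construction the largest threshold for which $\geq 1-\alpha$ coverage survives, which is exactly why this rule produces a smaller set than the simple majority vote while never dropping below the nominal level — the anticipated payoff of replacing Markov's inequality (Theorem~\ref{th:CM}) by the exact binomial tail available under independence.
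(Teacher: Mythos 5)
Your proposal follows the same route as the paper's proof: reduce the event $\{c \in \setc^M\}$ to the vote count $N=\sum_{k=1}^K \ind\{c\in\setc_k\}$, use independence of the sets together with the fixedness of $c$ to conclude that the indicators are independent Bernoulli variables with parameters $p_k \ge 1-\alpha$, treat the exact-coverage case via the $\binomial(K,1-\alpha)$ tail at $Q_K(\alpha)$, and lift to the general case by stochastic dominance. The one step you handle genuinely differently is the dominance argument: the paper invokes Lemma~\ref{lemma:pb} (a Poisson-binomial-versus-binomial comparison due to Boland, which needs only the geometric-mean condition $(1-\alpha)^K \le \prod_k p_k$), whereas you use the elementary coupling fact that raising each success probability individually can only stochastically enlarge the sum. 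Since the hypothesis gives $p_k \ge 1-\alpha$ for every $k$ --- which is stronger than the product condition --- your elementary argument suffices and avoids the external lemma; the paper's route buys a generality that is not actually needed here. Your insistence that the defensible conclusion is $\pr(c\in\setc^M)\ge 1-\alpha$ rather than coverage ``equal to'' $1-\alpha$ is also well taken: the paper's own proof displays only the inequality, and exact equality requires $\alpha$ to be a value attained by the binomial CDF (or an extra boundary randomization).

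One caveat on your quantile bookkeeping. You justify $F(Q_K(\alpha)) \le \alpha$ by asserting that the supremum in \eqref{eq:quantile} is attained; for a right-continuous step CDF this is false. If $j^*$ denotes the largest integer with $F(j^*)\le\alpha$, then $\{x: F(x)\le\alpha\} = (-\infty,\, j^*+1)$, so the supremum over the reals equals $j^*+1$, is not attained, and $F(j^*+1)>\alpha$. Read literally, the threshold $j^*+1$ combined with the strict inequality in the definition of $\setc^M$ gives coverage $1-F(j^*+1) < 1-\alpha$: for instance, with $K=2$ and $\alpha=0.1$ one gets $Q_2(0.1)=1$, the rule becomes the intersection of the two sets, and the coverage is $0.81<0.9$. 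The intended reading --- which the paper's proof also uses implicitly when it asserts $\pr(S_K > Q_K(\alpha)) \ge 1-\alpha$ without justification --- is that $Q_K(\alpha)$ is the largest \emph{support point} $j$ with $F(j)\le\alpha$; under that reading both your argument and the paper's go through. So this is a defect inherited from the paper's definition \eqref{eq:quantile} rather than a flaw in your strategy, but your ``the supremum is attained'' step should be replaced by taking a maximum over the integers $\{0,1,\dots,K\}$.
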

We emphasize that this result requires that $c$ be a fixed quantity. If $c$ were to be random, the independence between the events $\ind\{c \in \setc_k\}$ and $\ind\{c \in \setc_l\}$, with $k \neq l$, would be compromised even if the sets were based on independent observations. 

Another (trivial) special case where it is possible to achieve a coverage of level $1-\alpha$ appears when the sets are almost surely nested (not necessarily independent). Let us suppose that $\setc_1 \subseteq \dots \subseteq \setc_K$ holds almost surely and we obtain the set $\setc^M$ as in~\eqref{eq:cm}. By definition, all the points contained in $\setc_1$ will be part of the set $\setc^M$, which implies that $\setc^M$ is a set with confidence level equal to $1-\alpha$.  But of course, in that case, $\setc_1$ is itself a smaller and valid combination. If some, but not all, the sets are almost surely nested, the natural way to merge them is to pick the smallest one of the nested ones, and combine it with the others via majority vote.

\subsection{Combining exchangeable uncertainty sets}\label{subsec:comb_exch}

In many practical applications, the independence of sets is often violated. Surprisingly, when $\setc_1, \dots, \setc_K$ are not independent, but are exchangeable, something better than a naive majority vote can be accomplished. To describe the method, denote the set~\eqref{eq:cm} as $\setc^M(1:K)$ to highlight that it is based on the majority vote of sets $\setc_1, \dots, \setc_K$. Now define
\[
\setc^E := \bigcap_{k=1}^K \setc^M(1:k),
\]
which can be equivalently represented as 
\begin{equation}\label{eq:setce}
   \setc^E = \left\{s \in \set: \frac{1}{k} \sum_{j=1}^k \ind\{s \in \setc_j\}>\frac{1}{2} \mathrm{~for~all~} k \le K\right\}. 
\end{equation}
Essentially, $\setc^E$ is formed by the intersection of sets obtained through sequential processing of the sets derived from the majority vote.
\begin{theorem}\label{thm:exch}
    If $\setc_1, \dots, \setc_K$ are $K \geq 2$ exchangeable sets having coverage $1-\alpha$, then $\setc^E$ is a $1-2\alpha$ uncertainty set, and it is never worse than majority vote ($\setc^E \subseteq \setc^M$).
\end{theorem}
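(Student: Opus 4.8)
The first claim, $\setc^E \subseteq \setc^M$, is immediate and requires no work: since $\setc^M = \setc^M(1:K)$ is itself one of the sets in the intersection $\setc^E = \bigcap_{k=1}^K \setc^M(1:k)$, we get $\setc^E \subseteq \setc^M(1:K) = \setc^M$. All the effort goes into the coverage bound $\pr(c \in \setc^E) \ge 1-2\alpha$.

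The plan is to first translate the miscoverage event into a statement about partial averages. Writing $\phi_j = \ind\{c \notin \setc_j\}$ and $\bar\phi_k = \frac1k\sum_{j=1}^k \phi_j$, the representation \eqref{eq:setce} shows that $c \in \setc^E$ if and only if $\bar\phi_k < \tfrac12$ for every $k \le K$, so that $\{c \notin \setc^E\} = \{\max_{1\le k\le K}\bar\phi_k \ge \tfrac12\}$. Note that a crude union bound over the $K$ events $\{c \notin \setc^M(1:k)\}$, each of probability at most $2\alpha$ by Theorem~\ref{th:CM}, only gives $2K\alpha$, which is vacuous. The whole point is therefore to control the \emph{maximum} over $k$ simultaneously, and this is exactly where exchangeability must be exploited.

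The key structural observation I would use is that, under exchangeability, the sequence of partial averages $(\bar\phi_k)$ is a reverse martingale. Consider the filtration $\mathcal{F}_k = \sigma(S_k,\phi_{k+1},\dots,\phi_K)$, with $S_k=\sum_{j=1}^k\phi_j$, which is decreasing in $k$. By exchangeability, conditionally on $\mathcal{F}_k$ the variables $\phi_1,\dots,\phi_k$ remain exchangeable, so $\Ev[\phi_j \mid \mathcal{F}_k] = \bar\phi_k$ for each $j \le k$, and hence $\Ev[\bar\phi_{k-1}\mid\mathcal{F}_k] = \bar\phi_k$. Reading the index backwards from $K$ down to $1$ thus yields an ordinary nonnegative martingale whose expectation is constant and equal to $\Ev[\bar\phi_k] = \Ev[\phi_1] \le \alpha$ (by exchangeability all input sets share the same miscoverage probability, which is at most $\alpha$ by \eqref{eq:coverage}).

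Finally I would apply the maximal inequality for this nonnegative martingale — equivalently Ville's inequality applied to the e-process $\bar\phi_k/\alpha$, in the spirit of the remark following Theorem~\ref{th:CM} — to obtain
\[
\pr\Big(\max_{1\le k\le K}\bar\phi_k \ge \tfrac12\Big) \;\le\; \frac{\Ev[\phi_1]}{1/2} \;\le\; 2\alpha .
\]
Combined with the reduction above, this gives $\pr(c \notin \setc^E) \le 2\alpha$, establishing that $\setc^E$ is a $1-2\alpha$ uncertainty set. I expect the only real obstacle to be stating the reverse-martingale property at the right level of rigor: verifying $\Ev[\phi_j \mid \mathcal{F}_k] = \bar\phi_k$ directly from exchangeability, and matching the direction of the filtration to the correct form of the maximal (Doob/Ville) inequality. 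No genuinely delicate estimate is involved beyond this.
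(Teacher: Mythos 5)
Your proof is correct and follows essentially the same route as the paper: the paper reduces $\pr(c \notin \setc^E)$ to $\pr\left(\exists k \le K: \frac{1}{k}\sum_{j=1}^k \phi_j \ge \frac{1}{2}\right)$ and then invokes the exchangeable Markov inequality (EMI) of Ramdas et al.\ (2023) as a black box to bound this by $2\Ev[\phi_1] \le 2\alpha$. Your reverse-martingale construction plus Ville's/Doob's maximal inequality is precisely a self-contained derivation of that EMI, so you have simply unpacked the cited lemma, and the containment $\setc^E \subseteq \setc^M$ is handled identically in both arguments.
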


The above result immediately implies that for multi-split conformal prediction, as studied in~\cite{solari2022}, one can obtain tighter prediction sets than their work without any additional assumptions; we will discuss this example in the following. We note that $\setc^M(1:2)$ is the intersection of $\setc^1$ and $\setc^2$, so we can omit $\setc^M(1:1) = \setc^1$ from the intersection defining $\setc^E$, to observe that $\setc^E = \bigcap_{k=2}^K \setc^M(1:k)$.

Since the set $\setc^E$ depends on the order of arrival of the different sets, one may worry that $\setc^E$ is less stable than $\setc^M$ (that treats the sets symmetrically). However, the exchangeable method allows the user to combine the obtained sets sequentially and to stop when the procedure stabilizes. This can be particularly advantageous, especially if the procedure to obtain the sets is computationally expensive.

Despite the previous result working only for exchangeable sets, it points out a simple way at improving majority vote for arbitrarily dependent sets: process them in a random order. To elaborate, let $\pi$ be a uniformly random permutation of $\{1,2,\dots,K\}$ that is independent of the $K$ sets, and define
\begin{equation}\label{eq:c^pi}
    \setc^\pi := \bigcap_{k=1}^K \setc^M(\pi(1):\pi(k)).
\end{equation}

Since $\setc^M(\pi(1):\pi(K)) = \setc^M(1:K)$, $\setc^\pi$ is also never worse than majority vote despite satisfying the same coverage guarantee:

\begin{corollary}\label{cor:arbit-exch}
    If $\setc_1, \dots, \setc_K$ are $K \geq 2$ arbitrarily dependent uncertainty sets having coverage $1-\alpha$, and $\pi$ is a uniformly random permutation independent of them, then $\setc^\pi$ is a $1-2\alpha$ uncertainty set, and it is never worse than majority vote ($\setc^\pi \subseteq \setc^M$).
\end{corollary}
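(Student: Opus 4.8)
The plan is to reduce the arbitrarily dependent case to the exchangeable case already settled by Theorem~\ref{thm:exch}. The whole point of inserting a uniformly random permutation $\pi$, independent of the sets and of $c$, is that the relabeled sequence $\setc_{\pi(1)}, \dots, \setc_{\pi(K)}$ becomes exchangeable no matter how the original sets depend on one another. Once that is in place, applying Theorem~\ref{thm:exch} to this relabeled sequence delivers both conclusions simultaneously.

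First I would verify exchangeability of the relabeled sequence. Fix any permutation $\sigma$ of $\{1,\dots,K\}$. Since $\pi$ is uniform on the symmetric group and independent of $(\setc_1,\dots,\setc_K,c)$, the composition $\pi\circ\sigma$ is again uniform (post-composition with a fixed $\sigma$ is a bijection of the symmetric group onto itself) and still independent of $(\setc_1,\dots,\setc_K,c)$, so that
\[
(\setc_{\pi(\sigma(1))}, \dots, \setc_{\pi(\sigma(K))}, c) = (\setc_{(\pi\circ\sigma)(1)}, \dots, \setc_{(\pi\circ\sigma)(K)}, c) \overset{d}{=} (\setc_{\pi(1)}, \dots, \setc_{\pi(K)}, c),
\]
which is exactly the assertion that $\setc_{\pi(1)}, \dots, \setc_{\pi(K)}$ are exchangeable jointly with $c$. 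I would also check that the marginal coverage hypothesis is inherited: because $\pi$ is independent of the sets, $\pr(c \in \setc_{\pi(k)}) = \frac{1}{K}\sum_{j=1}^K \pr(c \in \setc_j) \geq 1-\alpha$ for each $k$, so the relabeled sequence satisfies \eqref{eq:coverage} and the hypotheses of Theorem~\ref{thm:exch} hold.

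Next I would observe that forming the set $\setc^E$ of Theorem~\ref{thm:exch} from the relabeled sequence produces exactly $\setc^\pi$: the majority vote of the first $k$ relabeled sets is $\setc^M(\pi(1):\pi(k))$, and intersecting over $k \le K$ is precisely the definition \eqref{eq:c^pi}. Theorem~\ref{thm:exch} then yields $\pr(c \in \setc^\pi) \geq 1-2\alpha$ together with $\setc^\pi \subseteq \setc^M(\pi(1):\pi(K))$. Finally, since majority vote treats its inputs symmetrically, $\setc^M(\pi(1):\pi(K)) = \setc^M(1:K) = \setc^M$, giving the nesting $\setc^\pi \subseteq \setc^M$ (this containment can also be read off directly from \eqref{eq:c^pi}, as noted just before the statement).

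The main obstacle is conceptual rather than computational: one must be careful that $\pi$ is independent of both the sets \emph{and} the target $c$ (the permutation acts only on the set indices, never on $c$), and that exchangeability is needed jointly with $c$, since Theorem~\ref{thm:exch} rests on the exchangeability of the miscoverage events $\ind\{c \notin \setc_{\pi(k)}\}$. The group-invariance identity $\pi\circ\sigma \overset{d}{=} \pi$ is the single mechanism converting arbitrary dependence into exchangeability, and I would state explicitly that this is the only property of the uniform random permutation being used.
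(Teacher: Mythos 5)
Your proposal is correct and takes essentially the same route as the paper: the paper likewise proves this corollary by noting that the uniform random permutation induces exchangeability of the relabeled sets, applying Theorem~\ref{thm:exch} to them, and using the symmetry $\setc^M(\pi(1):\pi(K)) = \setc^M(1:K)$ to get $\setc^\pi \subseteq \setc^M$. Your writeup simply spells out the details the paper leaves implicit (the group-invariance argument for exchangeability, inheritance of the marginal coverage bound, and the fact that the exchangeability must hold jointly with $c$ so that the miscoverage indicators are exchangeable), all of which are accurate.
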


The proof follows as a direct corollary of Theorem~\ref{thm:exch} by noting that the random permutation $\pi$ induces exchangeability of the sets (the joint distribution of every permutation of sets is the same, due to the random permutation). Of course, if the sets were already ``randomly labeled'' 1 to $K$ (for example, to make sure there was no special significance to the labels), then the aggregator does not need to perform an extra random permutation.

\subsection{Improving majority vote via random thresholding}
\label{sec:randomized_maj_vote}
Moving in a different direction below, we demonstrate that the majority vote can be improved with the aim of achieving a tighter set through the use of independent randomization, while maintaining the same coverage level.

Let $U$ be an independent random variable that is distributed uniformly on $[0,1]$.
We then define a new set $\setc^R$ as:
\begin{equation}
\label{eq:cr}
    \setc^R := \left\{s \in \set: \frac{1}{K}\sum_{k=1}^K \ind\{s \in \setc_k\} > \frac{1}{2} + U/2 \right\}.
\end{equation}
As a small variant, define
\begin{equation}
\label{eq:cu}
    \setc^U := \left\{s \in \set: \frac{1}{K}\sum_{k=1}^K \ind\{s \in \setc_k\} > U \right\}.
\end{equation}

\begin{theorem}
Let $\setc_1, \dots, \setc_K$ be $K \geq 2$ different sets satisfying the property \eqref{eq:coverage}. Then, the set $\setc^R$ has coverage at least $1-2\alpha$ and is never larger than $\setc^M$, while the set $\setc^U$ has coverage at least $1-\alpha$ and is never smaller than $\setc^R$.
\end{theorem}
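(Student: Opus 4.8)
The plan is to prove the theorem in two parts, handling $\setc^R$ and $\setc^U$ separately, and then establishing the nesting relationship $\setc^R \subseteq \setc^M$ and $\setc^R \subseteq \setc^U$.

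\textbf{Coverage of $\setc^U$.} First I would tackle $\setc^U$, since it seems cleanest. Reusing the notation from the proof of Theorem~\ref{th:CM}, set $\phi_k = \ind\{c \notin \setc_k\}$ and let $\bar\phi = \frac{1}{K}\sum_k \phi_k$, the empirical miscoverage fraction. Then $c \notin \setc^U$ exactly when $\bar\phi \geq U$, i.e.\ when the fraction of sets that vote \emph{against} $c$ is at least $U$. Conditioning on the $K$ sets (equivalently on $\bar\phi$), the probability over the independent uniform $U$ is $\pr(U \leq \bar\phi \mid \bar\phi) = \bar\phi$. Hence $\pr(c \notin \setc^U) = \Ev[\bar\phi] = \frac{1}{K}\sum_k \Ev[\phi_k] \leq \alpha$, giving coverage at least $1-\alpha$. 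The only subtlety is the boundary case where the inequality in \eqref{eq:cu} is strict: I would check that $\pr(U < \bar\phi \mid \bar\phi) = \bar\phi$ still holds because $U$ is continuous, so ties have probability zero and strictness does not matter.

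\textbf{Coverage of $\setc^R$.} For $\setc^R$, the threshold is $\frac{1}{2} + U/2$, which ranges uniformly over $(\frac12, 1)$. Now $c \notin \setc^R$ iff $\bar\phi \geq \frac12 + U/2$, i.e.\ $U \leq 2\bar\phi - 1$. Conditioning on the sets, the probability over $U$ is $\pr(U \leq 2\bar\phi - 1 \mid \bar\phi) = \max(0, \min(1, 2\bar\phi - 1)) \leq 2\bar\phi - 1 \leq 2\bar\phi$ (the clipping at $0$ only helps, and can be absorbed by the bound $\max(0,2\bar\phi-1)\le 2\bar\phi$). Taking expectations gives $\pr(c \notin \setc^R) \leq 2\Ev[\bar\phi] \leq 2\alpha$, so coverage is at least $1-2\alpha$, matching the worst-case guarantee of plain majority vote.

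\textbf{Nesting relations.} The size comparisons are pointwise and do not involve probability. For a fixed realization of the sets, membership of a point $s$ in each set is governed by the value $v(s) := \frac{1}{K}\sum_k \ind\{s \in \setc_k\}$ compared against a threshold. Since $U \in [0,1]$, the threshold $\frac12 + U/2 \geq \frac12$ for $\setc^R$ is at least the threshold $\frac12$ defining $\setc^M$; a larger threshold yields fewer points, so $\setc^R \subseteq \setc^M$. Likewise the threshold $\frac12 + U/2$ for $\setc^R$ is at least the threshold $U$ for $\setc^U$ precisely when $\frac12 + U/2 \geq U$, i.e.\ $U \leq 1$, which always holds; hence $\setc^R \subseteq \setc^U$, meaning $\setc^U$ is never smaller than $\setc^R$. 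These containments hold for \emph{every} draw of $U$ (the \emph{same} $U$ is used in both $\setc^R$ and $\setc^U$), so they are deterministic given the randomization.

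\textbf{Main obstacle.} I expect no deep obstacle here; the crux is simply the conditioning argument that turns the randomized threshold into a clean expectation via $\pr(U \leq t \mid \text{sets}) = t$ for $t \in [0,1]$. The one place requiring care is bookkeeping at the boundaries: ensuring strict versus non-strict inequalities are handled correctly (harmless because $U$ is continuous) and confirming that the clipping $\max(0, 2\bar\phi-1)$ in the $\setc^R$ computation is correctly bounded above by $2\bar\phi$ rather than mishandled when $\bar\phi < \frac12$. Getting the factor of $2$ to appear exactly, and verifying that the nesting uses a common $U$, are the details I would state explicitly.
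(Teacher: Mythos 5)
Your overall strategy --- condition on the sets and integrate over $U$, which amounts to proving from scratch the randomized Markov inequalities that the paper invokes --- is the right one, and it is essentially the paper's route: the paper derives this theorem as the equal-weight special case of Theorem~\ref{th:CR}, whose proof applies the additive-randomized Markov inequality of \cite{ramdas2023} to $\sum_k w_k \phi_k$. Your nesting arguments (a common $U$ in both definitions, a larger threshold yields a smaller set) are correct and match what the paper leaves implicit.

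However, there is a concrete error in your coverage step for $\setc^R$: you misidentify the miscoverage event. Membership of $c$ in $\setc^R$ is governed by the fraction of sets that \emph{contain} $c$, which is $1-\bar\phi$, not by $\bar\phi$. Hence $c \notin \setc^R$ iff $1-\bar\phi \leq \tfrac12 + U/2$, i.e.\ $\bar\phi \geq \tfrac12 - U/2$, i.e.\ $U \geq 1-2\bar\phi$, whose conditional probability given the sets is $\min(1, 2\bar\phi)$ --- not $\max(0,\min(1,2\bar\phi-1))$ as you compute from the incorrect event ``$\bar\phi \geq \tfrac12 + U/2$''. Your expression strictly understates the true miscoverage probability: when $\bar\phi = 1/2$, your formula gives $0$, yet $c \notin \setc^R$ almost surely, since $1-\bar\phi = 1/2 < \tfrac12 + U/2$ whenever $U>0$. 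So as written you have bounded the probability of the wrong event, and the stated conclusion does not follow from your computation. The fix is immediate and uses exactly your technique: the corrected conditional probability satisfies $\min(1,2\bar\phi) \leq 2\bar\phi$, so taking expectations gives $\pr(c \notin \setc^R) \leq 2\Ev[\bar\phi] \leq 2\alpha$. The same complement slip occurs in your $\setc^U$ step (the event is $\bar\phi \geq 1-U$, not $\bar\phi \geq U$), but there it is harmless because $U$ and $1-U$ have the same distribution, so the conditional probability equals $\bar\phi$ either way.
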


The proof follows as a special case of the next subsection's result and is thus omitted. Even though $\setc^U$ does not improve on $\setc^M$, we include it since it involves random thresholding and delivers the same coverage as the input sets, a feature that we do not know how to obtain without randomization (unless in one of the cases described in Section~\ref{sec:independent_sets}).

There may of course be certain concerns with using randomization when a human is actively involved with data analysis, due to fears of a form of ``p-hacking'' caused by repeatedly running the above procedure many times and picking the most suitable merged set for the story they wish to tell. There is no fool-proof way to avoid this. We recommend using randomization in automated pipelines because it is more statistically efficient to do so, but we suggest using the non-randomized procedures when humans are involved.

\subsection{Weighted majority vote}
\label{sec:weighted_maj_vote}

It is not unusual for each interval to be assigned distinct ``weights'' (importances) in the voting procedure. This can occur, for instance, when prior studies empirically demonstrate that specific methods for constructing uncertainty sets consistently outperform others. Alternatively, a researcher might assign varying weights to the sets based on their own prior insights. 
Assume as before that the sets $\setc_1, \dots, \setc_K$ based on the observed data follow the property \eqref{eq:coverage}. In addition, let $w = (w_1, \dots, w_K)$ be a set of weights, such that 
\begin{gather}
\label{eq:weights}
    w_k \in [0, 1] \quad  \mathrm{and} \quad \sum_{k=1}^K w_k = 1.
\end{gather}
These weights can be interpreted as the aggregator's prior belief in the quality of the received sets. A higher weight signifies that we attribute greater importance to that specific interval. 
As before, let $U$ be an independent random variable that is distributed uniformly on $[0,1]$.
We then define a new set $\setc^W$ as:
\begin{equation}
\label{eq:cw}
    \setc^W := \left\{s \in \set: \sum_{k=1}^K w_k \ind\{s \in \setc_k\} > \frac{1}{2} + U/2 \right\}.
\end{equation}
\begin{theorem}
\label{th:CR}
    Let $\setc_1, \dots, \setc_K$ be $K \geq 2$ different sets satisfying property \eqref{eq:coverage}. Then, the set $\setc^W$ defined in \eqref{eq:cw} has coverage of at least $1-2\alpha$:
    \begin{equation}
    \label{eq: coverage_cr}
    \pr\big(c \in \setc^W\big) \geq 1-2\alpha.
    \end{equation}
    In addition, let $\nu(\setc^W)$ be the measure associated with the set $\setc^W$, then
    \begin{equation}
        \label{eq:length_cr}
        \nu(\setc^W) \leq 2 \sum_{k=1}^K w_k \nu(\setc_k).
    \end{equation}
\end{theorem}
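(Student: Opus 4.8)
The plan is to prove the two claims separately, each by a Markov-type argument, with the only novelty being that I carry the randomization through explicitly. For the coverage bound \eqref{eq: coverage_cr} I would mirror the proof of Theorem~\ref{th:CM}. Set $\phi_k = \ind\{c \notin \setc_k\}$, so $\Ev[\phi_k] \le \alpha$, and define the weighted miscoverage $T := \sum_{k=1}^K w_k \phi_k$. Since the weights sum to one, $\sum_{k=1}^K w_k \ind\{c \in \setc_k\} = 1 - T$, so by the definition \eqref{eq:cw} the miscoverage event $\{c \notin \setc^W\}$ equals $\{1 - T \le \tfrac12 + U/2\}$, i.e. $\{T \ge (1-U)/2\}$. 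Because $U \sim \mathrm{Unif}[0,1]$ is independent of the sets (and hence of $T$), I would condition on $T$ and compute $\pr\big(U \ge 1 - 2T \mid T\big) = \min(2T, 1) \le 2T$. Taking expectations then yields $\pr(c \notin \setc^W) \le 2\,\Ev[T] = 2\sum_{k=1}^K w_k\,\Ev[\phi_k] \le 2\alpha$, which is exactly \eqref{eq: coverage_cr}.

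For the size bound \eqref{eq:length_cr} I would apply the analogous inequality to the measure $\nu$ rather than to $\pr$. Writing $g(s) := \sum_{k=1}^K w_k \ind\{s \in \setc_k\} \ge 0$, the set is $\setc^W = \{s \in \set : g(s) > t\}$ with threshold $t = \tfrac12 + U/2$. The measure-theoretic Markov inequality gives $t\,\nu(\setc^W) \le \int_{\setc^W} g \, d\nu \le \int_{\set} g \, d\nu = \sum_{k=1}^K w_k \nu(\setc_k)$, where the last equality is linearity of the integral. Since $U \ge 0$ forces $t \ge \tfrac12$ for every realization, we have $1/t \le 2$, and hence $\nu(\setc^W) \le 2 \sum_{k=1}^K w_k \nu(\setc_k)$ surely, not merely in expectation.

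The computations are routine; the only genuine care needed lies in the randomization bookkeeping. The crucial step is the clipping identity $\pr(U \ge 1-2T \mid T) = \min(2T,1)$: it is precisely the truncation at $1$ (active whenever $T \ge \tfrac12$) that turns the computation into an inequality, and it also explains why the procedure can overcover. I would also note that the strict-versus-nonstrict inequality in \eqref{eq:cw} is immaterial, since $U$ is continuous and so the boundary event $\{g(c) = \tfrac12 + U/2\}$ has probability zero. Finally, independence of $U$ from both $Z$ and $c$ is what licenses conditioning on $T$; this is guaranteed by construction. No concentration or exchangeability hypothesis enters, so the argument remains valid for arbitrarily dependent sets, exactly as in Theorem~\ref{th:CM}.
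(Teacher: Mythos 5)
Your proposal is correct and takes essentially the same approach as the paper: both parts are Markov-type arguments applied to the weighted miscoverage indicator sum and to the measure $\nu$, respectively. The only cosmetic difference is that the paper invokes the additive-randomized Markov inequality of \cite{ramdas2023} as a black box, whereas your conditioning computation $\pr\big(U \ge 1-2T \mid T\big)=\min(2T,1)\le 2T$ re-derives that inequality from scratch; likewise your size bound (Markov's inequality for $\nu$ at the random threshold $t=\tfrac12+U/2$, then $1/t\le 2$) is the paper's argument with the two bounding steps performed in the opposite order.
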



If the weights are equal to $w_k = \frac{1}{K}$, for all $k=1, \dots, K$, then the set $\setc^W$ coincides with the set $\setc^R$ defined in \eqref{eq:cr} and it is a subset of that in \eqref{eq:cm}. This means that in the case of a democratic (equal-weighted) vote $\setc^R \subseteq \setc^M$, since $\setc^M$ is obtained by choosing $U=0$. Furthermore, \eqref{eq:length_cr} says that the measure of the set obtained using the weighted majority method cannot be more than twice the average measure obtained by randomly selecting one of the intervals with probabilities proportional to $w$. When there are only two sets and randomization is avoided, the weighted majority vote set will correspond to the set with the greater weight. 
In Appendix \ref{sec:inf_number} we extend the result to the case where the number of the sets can be uncountable, while applications of the proposed methods are reported in Appendix \ref{sec:add_appl}.

\subsection{Different coverage levels and asymptotic coverage}

It may happen that the property \eqref{eq:coverage} is not met, meaning that the sets may have different coverage. The agents may intend to provide a $1-\alpha$ confidence set, but may unintentionally overcover or undercover, or some agents could be malevolent. As examples of the former case, we know that under regularity conditions confidence intervals constructed using likelihood methods have an asymptotic coverage of level $1-\alpha$ \cite[chap.3]{pace1997}, but the asymptotics may not yet have kicked in or the regularity conditions may not hold. Another example appears in the conformal prediction framework when the exchangeability assumption is not satisfied but we do not know the amount of deviation from exchangeability, as considered in \cite{barber2023}. As another example in conformal prediction, the jackknife+ method by~\cite{barber2021} when run at level $\alpha$ may deliver coverage anywhere between $1-2\alpha$ and 1, even under exchangeability. As a last example, a Bayesian agent may provide a credible interval, which may not be a valid confidence set in the frequentist sense. The set in \eqref{eq:cw} still delivers a sensible guarantee.

\begin{proposition}
    If $\setc_1, \dots, \setc_K$ are $K \geq 2$ different sets having coverage $1-\alpha_1, \dots, 1-\alpha_K$ (possibly unknown), then the set $\setc^W$  defined  in \eqref{eq:cw} has coverage
    \begin{equation*}
        \pr(c \in \setc^W) \geq 1-2\sum_{k=1}^K w_k \alpha_k.
    \end{equation*}
    In particular, this implies that the majority vote of asymptotic $(1-\alpha)$ intervals has asymptotic coverage at least $(1-2\alpha)$.
\end{proposition}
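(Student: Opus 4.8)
The plan is to reuse the argument behind Theorem~\ref{th:CR} essentially verbatim, keeping track of a separate miscoverage budget $\alpha_k$ for each agent rather than a common $\alpha$. First I would set $\phi_k = \ind\{c \notin \setc_k\}$, so that each $\phi_k$ is Bernoulli with $\Ev[\phi_k] = \pr(c \notin \setc_k) \leq \alpha_k$; note that nothing here requires the $\alpha_k$ to be known, only that such a bound exists. Writing $S := \sum_{k=1}^K w_k \phi_k$ and using $\sum_k w_k = 1$, the defining inequality of $\setc^W$ rearranges into the equivalent description $\{c \notin \setc^W\} = \{(1-U)/2 \leq S\}$, where $U \sim \mathrm{Unif}[0,1]$ is the independent randomization.

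The key step is then to condition on the data (equivalently on $S$) and integrate out the independent $U$. Since $(1-U)/2$ is uniform on $[0,1/2]$, a one-line computation gives $\pr(c \notin \setc^W \mid S) = \min(2S, 1) \leq 2S$. Taking expectations and using linearity yields $\pr(c \notin \setc^W) \leq 2\Ev[S] = 2\sum_{k=1}^K w_k \Ev[\phi_k] \leq 2\sum_{k=1}^K w_k \alpha_k$, which is exactly the claimed bound.

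For the asymptotic corollary, I would apply this finite-sample bound at each sample size $n$ to the sets $\setc_k^{(n)}$ and then pass to the limit. Because the bound $\pr(c \notin \setc^W) \leq 2\sum_k w_k \pr(c \notin \setc_k)$ is a fixed, length-$K$ linear combination, taking $\limsup_n$ commutes term by term: if each agent's interval satisfies $\pr(c \notin \setc_k^{(n)}) \to \alpha$, then $\limsup_n \pr(c \notin \setc^W) \leq 2\sum_k w_k \alpha = 2\alpha$, i.e.\ $\liminf_n \pr(c \in \setc^W) \geq 1-2\alpha$. For the plain majority vote, taking $w_k = 1/K$ produces $\setc^R$, and since $\setc^R \subseteq \setc^M$ the same asymptotic lower bound transfers immediately to $\setc^M$; alternatively one may simply rerun the Markov argument of Theorem~\ref{th:CM} with $\alpha$ replaced by $\alpha_k$.

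I do not anticipate a genuine obstacle here. The only place demanding care is the conditional-probability computation with the randomization $U$ (making sure the strict-versus-nonstrict inequalities and the support $[0,1/2]$ of $(1-U)/2$ are handled correctly), and, in the asymptotic part, observing that because $K$ is fixed and the bound is a finite linear combination, there is no issue of uniform convergence when interchanging the limit with the sum.
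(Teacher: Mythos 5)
Your proof is correct and takes essentially the same route as the paper: the paper's proof is literally ``identical to that of Theorem~\ref{th:CR}, with $\Ev[\phi_k]$ bounded by $\alpha_k$,'' which is exactly your reduction via $\phi_k = \ind\{c \notin \setc_k\}$ and the rearrangement $\{c \notin \setc^W\} = \{S \geq (1-U)/2\}$. The only cosmetic differences are that you prove the randomized-Markov step $\pr\left(S \geq (1-U)/2\right) \leq 2\Ev[S]$ from scratch by conditioning on $S$ and integrating out $U$ (obtaining $\min(2S,1) \leq 2S$), whereas the paper cites it as the additive-randomized Markov inequality of \cite{ramdas2023}, and that you spell out the $\limsup$ argument and the containment $\setc^R \subseteq \setc^M$ for the asymptotic claim, which the paper leaves implicit.
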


The proof is identical to that of Theorem~\ref{th:CR}, with the exception that the expected value for the variables $\phi_k$ is equal to $\alpha_k$, and is thus omitted. If the $\alpha_k$ levels are known (which they may not be, unless the agents report it and are accurate), and if one in particular wishes to achieve a target level $1-\alpha$, then it is always possible to find weights $(w_1, \dots, w_K)$ that achieve this as long as $\alpha/2$ is in the convex hull of $(\alpha_1, \dots, \alpha_K)$.


\bigskip

Since it is desirable to have as small an interval as possible if coverage \eqref{eq:coverage} is respected, we would like to assign a higher weight to intervals of smaller size. However, the weights must be assigned before seeing the sets.

\subsection{Sequentially combining uncertainty sets}\label{subsec:sequential}

Here, we show simple extensions of the results to two different sequential settings: 
\subsubsection{Sequential data:}   Imagine that we observe data sequentially one at a time $Z_1,Z_2,\dots$ and wish to estimate some parameter $c$ with increasing accuracy as we observe more samples, and wish to continuously monitor the resulting confidence intervals as they get tighter over time. A $(1-\alpha)$-confidence sequence $(\setc^{(t)})_{t \geq 1}$ for  $c$ is a time-uniform confidence interval:
\[
\pr(\forall t\geq 1: c \in \setc^{(t)}) \geq 1-\alpha.
\]
Here $t$ indexes sample size that is used to calculate a single agent's confidence interval $\setc^{(t)}$ (meaning that $\setc^{(t)}$ is based on $(Z_1,\dots,Z_t)$). 
Suppose now that we have $K$ different \emph{confidence sequences} for a parameter that need to be combined into a single confidence sequence.
For this setting we show a simple result:
\begin{proposition}
    Given $K$ different $1-\alpha$ confidence sequences for the same parameter that are being tracked in parallel, their majority vote set is a $1-2\alpha$ confidence sequence.
\end{proposition}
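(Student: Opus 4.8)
The plan is to reduce this time-uniform statement to the static argument already used in Theorem~\ref{th:CM}, by encoding each agent's entire trajectory into a single Bernoulli ``ever-fails'' indicator. Write $\setc_k^{(t)}$ for the $t$-th set of agent $k$, and let the merged confidence sequence at time $t$ be the majority vote set
\[
\setc^{M,(t)} := \bigg\{s \in \set: \frac{1}{K}\sum_{k=1}^K \ind\{s \in \setc_k^{(t)}\} > \frac{1}{2}\bigg\}.
\]
The goal is to bound $\pr(\exists t \geq 1: c \notin \setc^{M,(t)})$ by $2\alpha$.

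First I would define, for each agent, $\phi_k := \ind\{\exists t \geq 1: c \notin \setc_k^{(t)}\}$, the indicator that agent $k$'s confidence sequence ever fails to cover $c$. By the confidence sequence guarantee, $\Ev[\phi_k] = \pr(\exists t: c \notin \setc_k^{(t)}) \leq \alpha$ for each $k$. The crucial step is then a pointwise containment of events: if the merged sequence fails, i.e.\ $c \notin \setc^{M,(t)}$ for some time $t$, then at that specific time at least half of the agents satisfy $c \notin \setc_k^{(t)}$, and each such agent has consequently failed at some time (namely $t$), so $\phi_k = 1$ for all of them. Hence on the event $\{\exists t: c \notin \setc^{M,(t)}\}$ we have $\frac{1}{K}\sum_{k=1}^K \phi_k \geq \frac{1}{2}$, giving the inclusion
\[
\{\exists t \geq 1: c \notin \setc^{M,(t)}\} \subseteq \bigg\{\frac{1}{K}\sum_{k=1}^K \phi_k \geq \frac{1}{2}\bigg\}.
\]

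Once this inclusion is in hand, the conclusion follows exactly as in Theorem~\ref{th:CM}: apply Markov's inequality to the average of the $\phi_k$, obtaining $\pr(\frac{1}{K}\sum_k \phi_k \geq \frac{1}{2}) \leq 2\Ev[\frac{1}{K}\sum_k \phi_k] \leq 2\alpha$. Note that no independence or exchangeability among the $K$ confidence sequences is needed, since Markov's inequality only uses linearity of expectation.

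The main obstacle is the interchange of the ``for all $t$'' quantifier with the majority vote, since a priori the agents could fail at \emph{different} times, and one might worry that this spreads the failures out so that no single time has a majority failing. The resolution — and the heart of the argument — is the observation that the merged sequence can only fail at a time where a majority is \emph{simultaneously} failing, and a simultaneous failure at any time is subsumed by the ``ever-fails'' indicator. This is precisely why collapsing each trajectory to the single indicator $\phi_k$ loses nothing: the time-uniform failure of majority vote is witnessed at one instant, at which at least $K/2$ of the $\phi_k$ are forced to equal one.
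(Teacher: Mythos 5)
Your proof is correct, but it takes a genuinely different route from the paper's. The paper handles time-uniformity by invoking the characterization of \cite{howard2021}: a confidence sequence is equivalent to a family of sets valid at every stopping time $\tau$, so one can fix an arbitrary stopping time, apply Theorem~\ref{th:CM} to the stopped sets $\setc_1^{(\tau)},\dots,\setc_K^{(\tau)}$, and then translate back. Your argument avoids this machinery entirely: you collapse each agent's whole trajectory into a single ``ever-fails'' Bernoulli indicator $\phi_k=\ind\{\exists t: c\notin \setc_k^{(t)}\}$ with $\Ev[\phi_k]\leq\alpha$, establish the pointwise event inclusion $\{\exists t: c \notin \setc^{M,(t)}\} \subseteq \{\frac{1}{K}\sum_k \phi_k \geq \frac{1}{2}\}$ (since a failure of the merged set at time $t$ forces at least half the agents to fail at that same $t$), and finish with Markov exactly as in Theorem~\ref{th:CM}. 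This is more elementary and self-contained: it requires no stopping-time apparatus and no common filtration for the $K$ sequences, only the defining time-uniform guarantee of each. What the paper's route buys in exchange is modularity --- the sequential claim is reduced wholesale to the already-proven static theorem, and the stopping-time viewpoint makes explicit that the merged set is valid at arbitrary data-dependent stopping rules, which is the perspective used elsewhere in the paper (e.g.\ for adaptively stopped derandomization). Both proofs rest on the same Markov inequality core, so neither gives a quantitatively better constant.
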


It may not be initially apparent how to deal with the time-uniformity in the definitions. The proof proceeds by first observing that an equivalent definition of a confidence sequence is a confidence interval that is valid at any arbitrary stopping time $\tau$ (here the underlying filtration is implicitly that generated by the data itself). \cite{howard2021} proved that $(\setc_k^{(t)})_{t \geq 1}$ is a confidence sequence if and only if for every stopping time $\tau$,
\(
\pr(c \in \setc_k^{(\tau)}) \geq 1-\alpha,
\)
for all $k=1,\dots,K$. Now, the proof follows by applying our earlier results.

\subsubsection{Sequential set arrival:}
Now, let  the data be fixed, but consider the setting where an unknown number of confidence sets arrive one at a time in a random order, and need to be combined on the fly. Now, we propose to simply take a majority vote of the sequences we have seen thus far. Borrowing terminology from earlier, denote 
\begin{equation}\label{eq:sequential_mer}
\setc^E(1:t) := \bigcap_{i=1}^t \setc^M(1:i).    
\end{equation}

We claim that the above sequence of sets  is actually a $1-2\alpha$ confidence \emph{sequence} for $c$:
\begin{theorem}
\label{thm:exch2}
    Given an exchangeable sequence of confidence sets $\setc_1, \setc_2,\dots$ (or confidence sets arriving in a uniformly random order), the sequence of sets formed by their ``running majority vote'' $(\setc^E(1:t))_{t \geq 1}$ is a $1-2\alpha$ confidence sequence:
    \[
    \pr\left(\exists t \geq 1: c \notin \setc^{E}(1:t)\right)\leq 2\alpha.
    \]
\end{theorem}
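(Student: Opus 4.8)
The plan is to exploit the fact that the running intersection $\setc^E(1:t)=\bigcap_{i=1}^t \setc^M(1:i)$ is nested: since $\setc^E(1:t+1)=\setc^E(1:t)\cap\setc^M(1:t+1)$, we have $\setc^E(1:t+1)\subseteq\setc^E(1:t)$ for every $t$. Consequently the miscoverage events $A_t:=\{c\notin\setc^E(1:t)\}$ form an increasing sequence, $A_t\subseteq A_{t+1}$, and $\{\exists t\geq 1: c\notin\setc^E(1:t)\}=\bigcup_{t\geq 1}A_t$. By continuity of probability from below this gives
\[
\pr\left(\exists t\geq 1: c\notin\setc^E(1:t)\right)=\lim_{t\to\infty}\pr(A_t)=\sup_{t\geq 1}\pr(A_t),
\]
so it suffices to bound $\pr(A_t)$ uniformly in $t$ by $2\alpha$. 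The key point is that the intersection structure converts the time-uniform statement into a limit of fixed-horizon statements, bypassing any martingale- or Ville-type machinery that one would normally need for a confidence sequence.

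Next I would bound each fixed-horizon term using Theorem~\ref{thm:exch}. Because $\setc_1,\setc_2,\dots$ is exchangeable, every finite initial segment $\setc_1,\dots,\setc_t$ is exchangeable (its finite-dimensional distribution is permutation invariant), and each set has coverage $1-\alpha$. For $t\geq 2$, Theorem~\ref{thm:exch} applied to these $t$ sets yields that $\setc^E(1:t)$ is a $1-2\alpha$ uncertainty set, i.e.\ $\pr(A_t)\leq 2\alpha$. The case $t=1$ is immediate since $\setc^E(1:1)=\setc^M(1:1)=\setc_1$, so $\pr(A_1)=\pr(c\notin\setc_1)\leq\alpha\leq 2\alpha$. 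Hence $\sup_{t}\pr(A_t)\leq 2\alpha$, and combining with the display above completes the argument. The alternative hypothesis in the statement --- confidence sets arriving in a uniformly random order --- reduces to the exchangeable case exactly as in Corollary~\ref{cor:arbit-exch}: a uniformly random arrival order induces an exchangeable sequence, so the same bound applies verbatim.

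The work here is conceptual rather than computational, and the main thing to get right is the interchange of the ``for all $t$'' quantifier with the probability. The nestedness is precisely what makes this legitimate: without it one could not pass from per-$t$ coverage to time-uniform coverage, whereas here the limit of the increasing probabilities $\pr(A_t)$ is exactly $\pr(\bigcup_t A_t)$. The only other point requiring a line of justification is that exchangeability is inherited by every finite prefix, so that Theorem~\ref{thm:exch} is genuinely applicable at each horizon $t$; granting that, there is no residual obstacle.
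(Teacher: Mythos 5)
Your proposal is correct and is essentially the paper's own proof: both arguments exploit the nestedness of $\setc^E(1:t)$ to turn the time-uniform miscoverage event into an increasing union, apply continuity of probability from below to reduce to fixed horizons, and invoke Theorem~\ref{thm:exch} to bound each $\pr(c \notin \setc^E(1:T))$ by $2\alpha$. Your explicit handling of $t=1$ (where Theorem~\ref{thm:exch} requires $K \geq 2$) and of the random-arrival-order case are minor refinements of care beyond what the paper writes, but they do not change the route.
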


This result will be particularly useful in the next section on derandomization.

\section{Derandomizing statistical procedures}
\label{sec:derand}
One application of the presented bounds is in the derandomization of existing randomized methods that are based in some way on data splitting. We explore different such methods here, such as the median-of-means~\citep{devroye2016sub} and HulC~\citep{kuchibhotla2021hulc}. The first of these produces a point estimator, while the second produces an interval, but both can be derandomized using the same set of ideas. 

Since the paper has so far focused on uncertainty sets, we first describe a general result that derandomizes point estimators ``directly''.

\begin{theorem}
\label{thm:point}
    Suppose $\hat \theta_1,\dots,\hat \theta_K$ are $K$ univariate point estimators of $\theta$ that are each built using $n$ data points and satisfy a high probability concentration bound:
    \[ 
    \pr( |\hat \theta_k - \theta| \leq w(n,\alpha)) \geq 1-\alpha,
    \] 
    for some function $w$. Then, their median $\theta_{(\lceil K/2 \rceil)}$ satisfies
    \[ 
    \pr(|\hat \theta_{(\lceil K/2 \rceil)} - \theta| \leq w(n,\alpha)) \geq 1-2\alpha. 
    \] 
    Further, if $\hat \theta_1,\hat \theta_2,\dots,\hat \theta_K, \hat\theta_{K+1} \dots$ are exchangeable, then
    \[
    \pr(\forall K \geq 1: |\hat \theta_{(\lceil K/2 \rceil)} - \theta| \leq w(n,\alpha)) \geq 1-2\alpha. 
    \]
    An analogous statement also holds if we instead had $\pr(\ell(n,\alpha) \leq \hat \theta_k - \theta \leq u(n,\alpha)) \geq 1-\alpha$, meaning that the two tails had different behaviors.
\end{theorem}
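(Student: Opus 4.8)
The plan is to reduce the entire statement to the interval-merging machinery already developed, by turning each concentration bound into an equal-width confidence interval and then invoking the ``Median of Midpoints'' result (Theorem~\ref{thm:median-of-midpoints}) for the fixed-$K$ part and the running-majority-vote confidence sequence (Theorem~\ref{thm:exch2}) for the time-uniform part.

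First I would define, for each $k$, the interval $\setc_k := [\hat\theta_k - w(n,\alpha),\, \hat\theta_k + w(n,\alpha)]$ and take the target to be $c = \theta$. The hypothesis $\pr(|\hat\theta_k - \theta| \le w(n,\alpha)) \ge 1-\alpha$ is precisely $\pr(\theta \in \setc_k) \ge 1-\alpha$, so the $\setc_k$ satisfy the coverage property \eqref{eq:coverage}. Crucially, these intervals all share the common width $2w(n,\alpha)$ and have midpoints $\hat\theta_k$, placing us exactly in the setting of Theorem~\ref{thm:median-of-midpoints}. For the fixed-$K$ claim I would then apply that theorem directly: the interval $\setc^{(K/2)}$ built around the median $\theta_{(\lceil K/2\rceil)}$ of the midpoints satisfies $\setc^{(K/2)} \supseteq \setc^M$ and hence inherits the $1-2\alpha$ coverage of the majority-vote set. (For even $K$, $\setc^{(K/2)}$ is defined as the intersection of the two central intervals, so the single interval centered at $\theta_{(\lceil K/2\rceil)} = \theta_{(K/2)}$ still contains it and thus covers with probability at least $1-2\alpha$.) Since the event $\theta \in \setc^{(K/2)}$ is equivalent to $|\theta_{(\lceil K/2\rceil)} - \theta| \le w(n,\alpha)$, this yields the first displayed inequality.

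For the time-uniform claim under exchangeability, I would note that exchangeability of $\hat\theta_1, \hat\theta_2, \dots$ carries over to the intervals $\setc_k$, since each $\setc_k$ is a fixed deterministic function of $\hat\theta_k$; hence Theorem~\ref{thm:exch2} applies and $(\setc^E(1:K))_{K\ge 1}$ is a $1-2\alpha$ confidence sequence. The linking step is the inclusion chain $\setc^E(1:K) \subseteq \setc^M(1:K) \subseteq \setc^{(K/2)}$, where the first inclusion holds by the definition of $\setc^E$ as a running intersection and the second is Theorem~\ref{thm:median-of-midpoints} applied at time $K$. Consequently $\{\forall K: \theta \in \setc^E(1:K)\} \subseteq \{\forall K:\, |\theta_{(\lceil K/2\rceil)} - \theta| \le w(n,\alpha)\}$, so the $1-2\alpha$ confidence-sequence guarantee transfers to the sequence of median intervals. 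Finally, the asymmetric-tails version is the same argument after a deterministic shift: the bound $\ell(n,\alpha) \le \hat\theta_k - \theta \le u(n,\alpha)$ corresponds to the equal-width interval $\setc_k = [\hat\theta_k - u(n,\alpha),\, \hat\theta_k - \ell(n,\alpha)]$, whose midpoint $\hat\theta_k - (u+\ell)/2$ is a monotone shift of $\hat\theta_k$; thus midpoint-medians coincide with estimator-medians and the identical reasoning goes through.

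I expect the main obstacle to be the bookkeeping in the time-uniform part — specifically, verifying that the running median interval dominates $\setc^E(1:K)$ at every $K$ \emph{simultaneously} (not merely marginally), since this is exactly what lets the confidence-sequence guarantee of Theorem~\ref{thm:exch2} transfer to the whole sequence of median intervals. The even-$K$ edge case and keeping the directions of the set inclusions straight are the spots most prone to indexing or sign errors.
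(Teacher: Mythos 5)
Your proposal is correct and follows essentially the same route as the paper's own (``indirect'') proof: the paper also converts each estimator into the equal-width interval $\hat\theta_k \pm w(n,\alpha)$, invokes Theorem~\ref{thm:median-of-midpoints} after noting that the median of midpoints is exactly $\hat\theta_{(\lceil K/2\rceil)}$, and handles the exchangeable/time-uniform claim by the same argument underlying Theorems~\ref{thm:exch} and~\ref{thm:exch2}. Your write-up simply makes explicit some details the paper leaves implicit (the even-$K$ case, the simultaneous set inclusions, and the shift argument for asymmetric tails), all of which are handled correctly.
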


In the particular context of ``double machine learning'', \citet[Corollary 3.3]{chernozhukov2018double} also proposes repeating their sample-splitting based procedure several times and taking either the median or the mean of the resulting estimates, but their arguments justifying this choice are asymptotic. Further, these asymptotic arguments do not distinguish between the median and mean combination rules, but the authors recommend the median rule because it is more robust to outliers. Our justification above is nonasymptotic and rather different in flavor, and it applies in broader situations, beyond the ones considered in the above work.
Indeed, derandomization can be applied in other contexts involving data splitting; some examples are \cite{decrouez2014, banerjee2019, cholaquidis2024gros, kim2024}.

\subsection{The Median-of-Median-of-Means (MoMoM) procedure}

The aim of the celebrated median-of-means procedure is to produce estimators of the mean that have subGaussian tails, despite the underlying unbounded data having only a finite variance. It stems back to, at least, a book by~\cite{nemirovskij1983problem}, but some modern references include~\cite{devroye2016sub} and \cite{lugosi2019mean}.

The setup assumes $n$ iid data points from an unknown univariate distribution $P$ with unknown finite variance, whose mean $\mu$ we seek to estimate. The method works by randomly dividing the $n$ points into $B$ buckets of roughly $n/B$ points each. One takes the mean within each bucket, and then calculates the median $\hat \mu^{\mom}$ of those numbers.

The method does not produce confidence intervals for the mean; indeed, one needs to know a bound on the variance $\sigma^2$ for any nonasymptotic CI to exist. But the point estimator $\hat \mu^{\mom}$ satisfies the following subGaussian tail bound: for any $t \geq 0$,
\[
\pr(|\hat \mu^{\mom} - \mu| \geq C \sigma \sqrt{t/n}) \leq 2 e^{-t},
\]
for a constant $C = \sqrt{\pi} + o(1)$, where the $o(1)$ term vanishes as $B, n/B \to \infty$.
This is a much faster rate than the $1/t^2$ on the right hand side obtained for the sample mean via Chebyshev's inequality.
However, one drawback of the method is that it is randomized, meaning that it depends on the random split of the data into buckets. \cite{minsker2022u} proposed a derandomized variant that even improves the constant $C$ to $\sqrt{2} + o(1)$. However, it is computationally intensive: it involves taking the median of \emph{all possible sets} of size $n/B$.

Here, we point out that Theorem~\ref{thm:point} yields a simple way to derandomize $\hat \mu^{\mom}$. We simply repeat the median-of-means procedure independently $K$ times to obtain exchangeable estimators $\{\hat \mu^{\mom}_k\}_{k=1}^K$, and then report the ``median of median of means'' (MoMoM):
\[
\hat \mu^{\momom}_K := \hat \mu^{\mom}_{\lceil K/2 \rceil}.
\] 

Clearly, $\{\hat \mu^{\mom}_k\}_{k=1}^K$ form an exchangeable set, whose empirical distribution will stabilize as $K \to \infty$, and thus whose median will also be stabilize for large $K$. 

\begin{corollary}
    Under the setup described above, the \emph{median of median of means} satisfies a nearly identical subGaussian behavior as the original:
\[
\pr\left(|\hat \mu^{\momom}_K - \mu| \geq C \sigma \sqrt{t/n}\right) \leq 4 e^{-t},
\]
for any $t \geq 0$ and $C=\sqrt{\pi} + o(1)$, as before. In fact,
\[
\pr\left(\exists K \geq 1: |\hat \mu^{\momom}_K - \mu| \geq C \sigma \sqrt{t/n}\right) \leq 4 e^{-t}.
\]
\end{corollary}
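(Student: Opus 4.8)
The plan is to recognize the corollary as an essentially immediate application of Theorem~\ref{thm:point}, with the fixed target $\theta$ in that theorem playing the role of the true mean $\mu$ (so the fixed-target setting applies, since $\mu$ is deterministic), and with the $K$ point estimators played by the independently recomputed median-of-means estimators $\hat\mu^{\mom}_1,\dots,\hat\mu^{\mom}_K$. The work is almost entirely in translating the known subGaussian tail bound into the concentration hypothesis of Theorem~\ref{thm:point} and then reading off the conclusion.

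First I would fix $t \geq 0$ throughout and make the correspondence explicit: set $\alpha = 2e^{-t}$ and $w(n,\alpha) = C\sigma\sqrt{t/n}$. Since each recomputation $\hat\mu^{\mom}_k$ obeys the same subGaussian bound $\pr(|\hat\mu^{\mom}_k - \mu| \geq C\sigma\sqrt{t/n}) \leq 2e^{-t}$, this is exactly the statement $\pr(|\hat\mu^{\mom}_k - \mu| \leq w(n,\alpha)) \geq 1-\alpha$ required as input. Theorem~\ref{thm:point} then gives directly that the median $\hat\mu^{\momom}_K = \hat\mu^{\mom}_{\lceil K/2\rceil}$ satisfies $\pr(|\hat\mu^{\momom}_K - \mu| \leq C\sigma\sqrt{t/n}) \geq 1 - 2\alpha = 1 - 4e^{-t}$, which is the first claimed inequality.

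For the uniform-in-$K$ statement I would first verify that the repeated estimators form an (infinite) exchangeable sequence. Each $\hat\mu^{\mom}_k = f(Z,\xi_k)$ is a fixed function of the shared data $Z$ and an independent random bucketing $\xi_k$, with $\xi_1,\xi_2,\dots$ drawn iid; permuting the indices only permutes the iid $\xi_k$, so the joint law is invariant under permutations and the sequence is exchangeable (indeed conditionally iid given $Z$). The exchangeable part of Theorem~\ref{thm:point} then yields $\pr(\forall K \geq 1: |\hat\mu^{\momom}_K - \mu| \leq C\sigma\sqrt{t/n}) \geq 1 - 4e^{-t}$, which is the second claim.

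There is no genuine obstacle here, since the corollary is a specialization of Theorem~\ref{thm:point}; the two points deserving care are the verification of exchangeability just described (the only step not immediate from the cited theorem) and the bookkeeping around $t$. On the latter: both $\alpha = 2e^{-t}$ and $w = C\sigma\sqrt{t/n}$ depend on $t$, so the theorem must be invoked separately for each fixed value of $t$, and the time-uniformity it delivers is uniformity over the number of repetitions $K$, not over $t$. The radius $C\sigma\sqrt{t/n}$ is common to all $K$ inside the $\forall K$ quantifier precisely because $t$ is held fixed, which is exactly what licenses stating the uniform conclusion with a single $t$-dependent radius.
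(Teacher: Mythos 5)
Your proposal is correct and matches the paper's approach exactly: the paper omits the proof precisely because it is ``an immediate consequence of Theorem~\ref{thm:point}'', and your write-up supplies the intended instantiation ($\alpha = 2e^{-t}$, $w(n,\alpha) = C\sigma\sqrt{t/n}$, exchangeability of the recomputed MoM estimators via the iid bucketing randomness conditional on the shared data). Your added remarks --- that exchangeability is the one step needing verification, and that the uniformity is over $K$ for each fixed $t$ rather than over $t$ --- are accurate and consistent with how the paper uses the result.
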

The proof is an immediate consequence of Theorem~\ref{thm:point} and is thus omitted. The simultaneity over $K$ allows us to choose any sequence of constants $k_1 \leq k_2 \leq \dots$ and produce the sequence of estimators $\hat \mu^{\momom}_{k_1}, \hat \mu^{\momom}_{k_2}, \dots$, and plot this sequence as it is being generated. We can then stop the derandomization  whenever the plot appears to stabilize, with the guarantee now holding for whatever data-dependent random $K$ we stopped at. Clearly, one can use the same technique to derandomize other random estimators by taking their median while ``importing'' their nonasymptotic bounds.



\subsubsection{Simulation study.} As previously mentioned, a potential method to derandomize the MoM is to repeat the procedure $K$ times and subsequently take the median of the estimators. The natural question that arises is: how large should $K$ be in practice to achieve a derandomized result? Indeed, considering computational and time resources, one would prefer a small value for $K$. We conducted a simulation study to study the impact of $K$. 

\begin{figure*}
    \centering
    \includegraphics[width=0.85\textwidth]{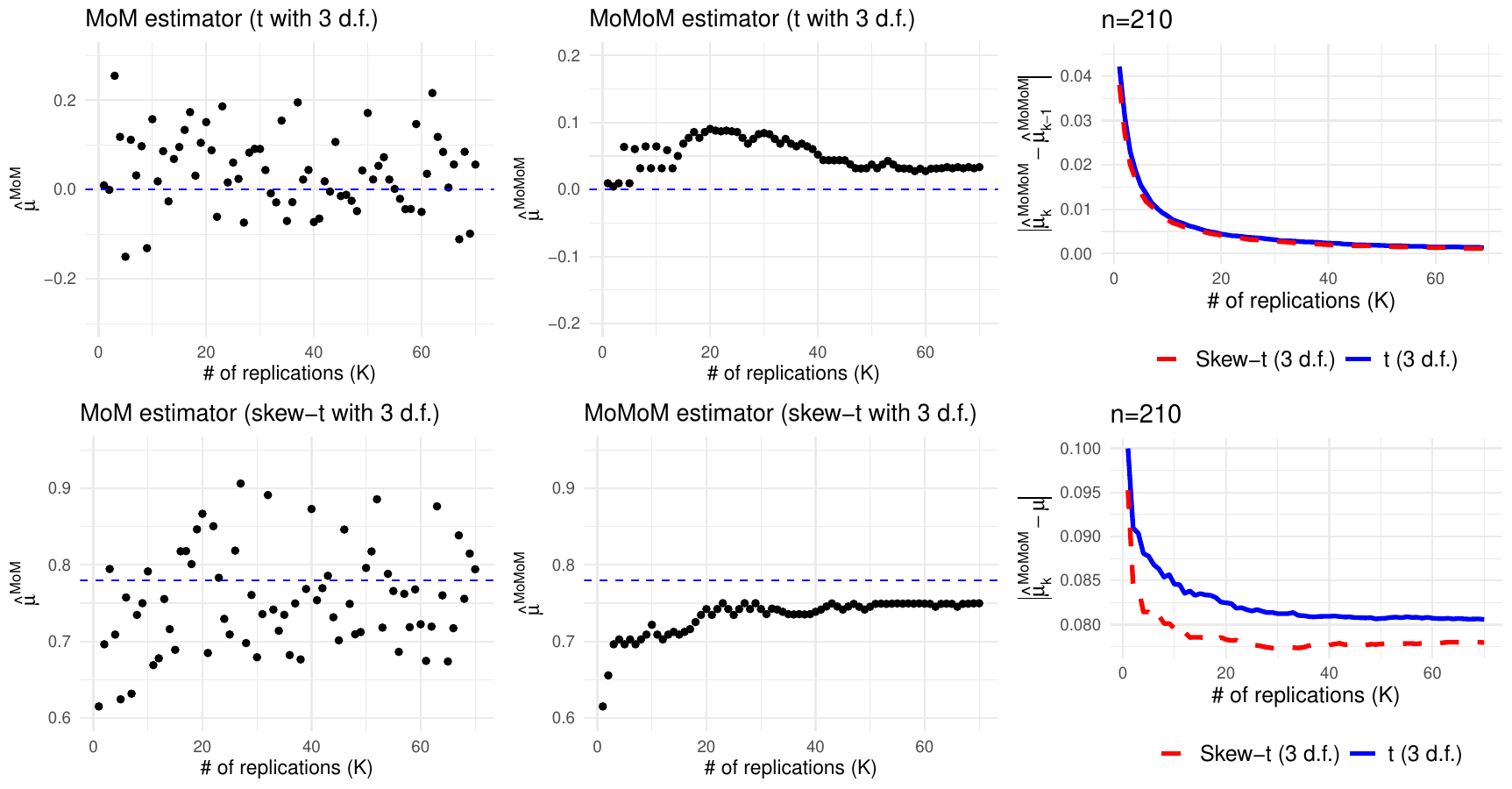}
    \caption{First column: MoM estimator obtained during various replications in a \emph{a single run of the procedure} using data generated from the t-distribution (above) and the skew-t distribution (below). The dashed line is the true mean $\mu$. Second column: MoMoM estimator obtained during the replications; both series stabilize for $k>40$. Third column: average over 1000 runs of the absolute difference $|\hat\mu^\momom_k-\hat\mu^\momom_{k-1}|$ against $k$ (above) and average over 1000 runs of the absolute difference $|\hat\mu^\momom_k-\mu|$ against $K$. Note that the third column is only available to us in the simulation, but plots like the second column can be constructed on the fly as $K$ increases, and it can be adaptively tracked and stopped, while retaining the same statistical guarantee at the stopped $K$.}
    \label{fig:momom_diff}
\end{figure*}

We simulated observations from a standard Student's t-distribution with 3 degrees of freedom and from a skew-t distribution with 3 degrees of freedom and 1 as a skewness parameter~\citep{azzalini2003}. The number of batches is equal to $B=21$ and the number of data points generated are $n=210$. For each iteration, we computed $\hat\mu^\momom_1, \dots, \hat\mu^\momom_K$ with $K = 70$ and considered the absolute value of the difference $\hat\mu^\momom_k-\hat\mu^\momom_{k-1}$ as a measure of stability. In the first two columns of Figure~\ref{fig:momom_diff}, we present two examples of the procedure, where it can be observed that in the first case, $\hat\mu^\momom$ tends to stabilize after 50 iterations, while in the second case, where the data are generated from a skew-t distribution, 25 iterations are sufficient.
In the third column of Figure~\ref{fig:momom_diff}, we report the empirical average over 1000 replications of $|\hat\mu^\momom_k-\hat\mu^\momom_{k-1}|$. We observe an inflection point around 25 for both distributions, and it seems that once this threshold is reached, the gain becomes negligible.

\subsection{Derandomizing HulC}

Recently, \cite{kuchibhotla2021hulc} proposed a rather general inference procedure for deriving confidence intervals for a target functional $\theta$, using any point estimator $\hat \theta$. Their HulC procedure starts off in a similar fashion to the median-of-means: it divides the $n$ iid data points into $B$ buckets of $n/B$, and computes the point estimator $\hat \theta_b$ in each bucket (this could be the sample mean, as in the previous subsection). It then reports a confidence interval by using certain quantiles of $\{\hat \theta_b\}_{b=1}^B$ (as opposed to a point estimator given by their median). 
The miscoverage rate is not exactly $\alpha$, but is determined by the ``median bias'' of the underlying estimator, a quantity that must vanish asymptotically for any nonasymptotic confidence interval to exist. It is important to remark that the final output of the HulC is a confidence set for the functional of interest. Clearly, given a level $\alpha$, one could test the hypothesis that $\theta$ equals a certain value simply by checking if that value falls within the level $1-\alpha$ interval. But, calculating the p-value for a given hypothesis is not straightforward, as there is no guarantee that the sets produced with different coverage levels are nested as $\alpha$ decreases due to the randomness of the procedure (this is due to the fact that the number of buckets depends on $\alpha$). This makes our method particularly useful in this setting, since it simply requires the sets and not the underlying p-values. 

As with MoM, the HulC confidence interval depends on the random split of the data. Our majority vote procedure effectively derandomizes the procedure while retaining the optimal rate of shrinkage (in settings where HulC possesses this property).

We remark that the similarity between the first steps of HulC and MoM has not been previously explicitly noted, but it is interesting because HulC effectively provides a confidence interval for the MoM procedure. The coverage is not exactly $1-\alpha$ due to the unknown median bias, but whatever the resulting coverage is, a similar coverage is retained by our various majority vote sets. (Indeed a nonasymptotically valid confidence interval is impossible without any apriori bound on the variance.)

\subsubsection{Simulation study.} We conducted a simulation study wherein we generated data from a Student's t-distribution with 3 degrees of freedom. The HulC method was used to obtain a confidence interval at a level $1-\alpha$, and the Median of Means (MoM) was used as the estimator in this context. First, we define the median bias of the estimator $\hat \mu^\mom$ as
\[
\begin{split}
    &\text{Med-Bias}_{\mu^*}(\hat \mu^\mom) =\left(\frac{1}{2} - \min\left\{\pr(\hat \mu^\mom \geq \mu^*), \pr(\hat \mu^\mom \leq \mu^*)\right\}\right)_+,
\end{split}
\]
where $\mu^*$ denotes the true mean of the distribution. In this case, the empirical means used as estimators in each batch are median-unbiased, i.e.\
\(
    \pr(\hat \mu_j \geq \mu^*) =  \pr(\hat \mu_j \leq \mu^*) = \frac{1}{2},
\)
since the data are independent and identically distributed from a symmetric location family \citep{kuchibhotla2023median}. We define $B_1$ and $B_2$ as the number of splits used for the MoM and HulC procedure, respectively. If $B_1$ is odd, then $\pr(\hat \mu^\mom \geq \mu^*)$ implies that at least $\lceil B_1/2 \rceil$ of the estimators $\hat\mu_j$ are greater than $\mu^*$. Since the $\hat \mu_j$ are independent, this corresponds to the probability that a $\text{Binom}(B_1,1/2)$ is greater than $\lceil B_1/2 \rceil-1$ which is equal to half. The same result is obtained with $\pr(\hat \mu^\mom \leq \mu^*)$. This implies that if $B_1$ is odd, then $\hat\mu^\momom$ is a median unbiased estimator for the mean. The number of data splits $B_2$ to reach an interval with miscoverage equal to $\alpha = 0.05$ is 6 (see eq. 4 in \cite{kuchibhotla2021hulc}) while we fix $n=210$ and $B_1=7$. Since intervals are constructed using different data splits one can use $\setc^E$ to build a valid $1-2\alpha$. From Figure~\ref{fig:Mom+Hulc}, we see that the intervals $\setc^M$ are stable if the number of replications is greater than 5, and $\setc^E$ is simply given by the running intersection of these intervals. In the third column of Figure~\ref{fig:Mom+Hulc}, we see that $\setc^M$ tends to produce intervals with an higher coverage, while $\setc^E$ stabilizes around $1-2\alpha$. 


\begin{figure*}
    \centering
    \includegraphics[width=1\textwidth]{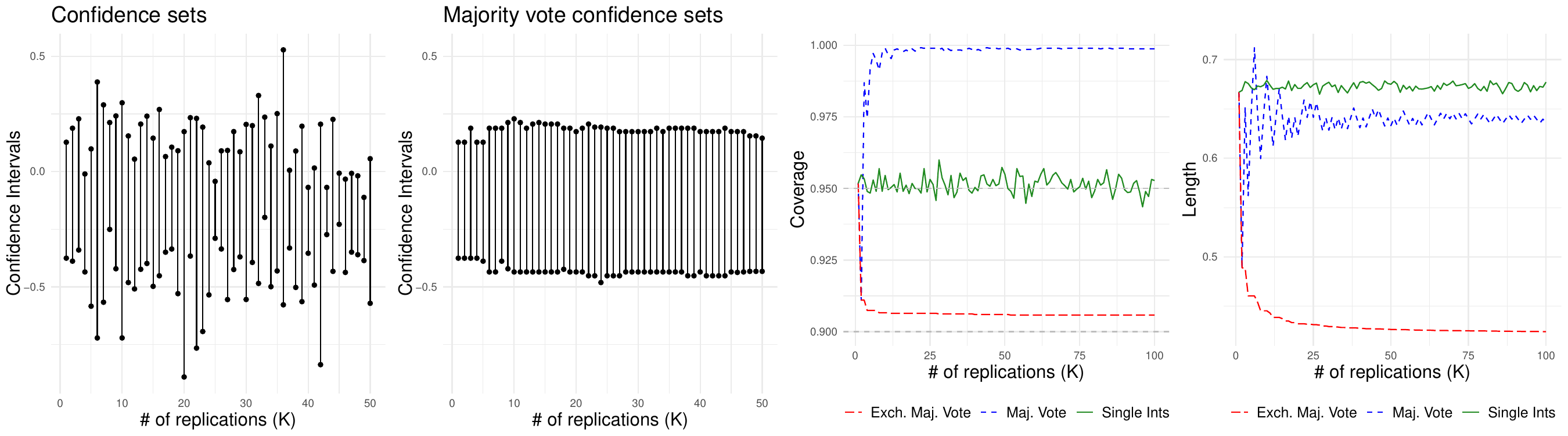}
    \caption{First two columns: example of \emph{a single run of the procedure}  using 210 observations generated from the t-distribution. Left: confidence intervals obtained for different random splits, Right: the merged sets. Last two columns: average over $5 000$ runs of the empirical coverage (left) and length (right) of $\setc^M$ and $\setc^E$ against $K$.}
    \label{fig:Mom+Hulc}
\end{figure*}

\section{Applications to conformal prediction}
\label{sec:conformal}
Conformal prediction is a popular framework to obtain prediction sets with a prespecified level of (marginal) coverage and without assuming correctness of an underlying model or strong distributional assumptions; see \cite{vovk2005,angelopoulos2021}. This method is now widely employed to obtain prediction sets for ``black box" algorithms.

Suppose we have independent and identically distributed random vectors $Z_i=(X_i, Y_i),\, i=1, \dots, n,$ from some unknown distribution $P_{XY}$ on the sample space $\mathcal{X} \times \real$, where $\mathcal{X}$ represents the space of covariates. In addition, suppose that $K$ different agents construct $K$ different conformal prediction sets $\setc_1(x), \dots, \setc_K(x)$ with level $1-\alpha$ based on the observed training data $z_i = (x_i, y_i), i=1,\dots, n,$ and a test point $x \in \mathcal{X}$. By definition, a conformal prediction interval with level $1-\alpha$ has the following property:
\begin{equation}
\pr\left(Y_{n+1} \in \setc_k(X_{n+1})\right) \geq 1 - \alpha, \quad k = 1, \dots, K,
\end{equation}
where $\alpha \in (0,1)$ is a user-chosen error rate. It is important to highlight that this form of guarantee is marginal, indicating that the coverage is calculated over a random draw of the training data and the test point. The $K$ different intervals can differ due to the algorithm used to obtain the predictions or the variant of conformal prediction employed \citep{lei2018, romano2019, barber2021}.

Recently, \cite{fan2023} have proposed a method to merge prediction intervals (or bands) with the aim of minimizing the average width of the interval. This method employs linear programming and is grounded in the assumption that the response can be expressed as the sum of a mean function plus a heteroskedastic error. In the context of combining conformal prediction sets from $K$ different algorithms for a single data split, another method is introduced by \cite{yang2021}. In particular, starting from $(1-\alpha)$-prediction intervals, they prove that the training conditional validity obtained by their method differs from $1-\alpha$ by a constant that depends on the number of algorithms and the number of points in the calibration set. Our black-box setting and aggregation method are both quite different from theirs and they can be considered as an extension of the method introduced in \cite{solari2022}. Theorems~\ref{th:CM} and~\ref{th:CR} are specialized (in the conformal case) to obtain the following result:
\begin{corollary}
\label{cor:conformal}
    Let $\setc_1(x), \dots, \setc_K(x)$ be $K \geq 2$ different conformal prediction intervals obtained using observations $(x_1, y_1), \dots, (x_n, y_n)$, $x \in \mathcal{X}$ and $w=(w_1, \dots, w_k)$ defined as in \eqref{eq:weights}. Then,
    \begin{gather*}
        \setc^M(x) = \left\{ y \in \real: \frac{1}{K} \sum_{k=1}^K \ind\{y \in \setc_k(x) \} > \frac{1}{2}\right\},\\
        \setc^W(x) = \left\{ y \in \real: \sum_{k=1}^K w_k \ind\{y \in \setc_k(x) \} > 1/2 + U/2 \right\}, 
    \end{gather*}
    where $U \sim \mathrm{Unif}(0, 1)$, are valid conformal prediction sets with level $1-2\alpha$.
\end{corollary}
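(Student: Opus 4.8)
The plan is to obtain this corollary as a direct specialization of Theorems~\ref{th:CM} and~\ref{th:CR}, so almost all the work amounts to correctly matching the conformal setup to the abstract one. First I would identify the abstract target $c$ with the test response $Y_{n+1}$, and the abstract data vector $Z$ with the whole collection of training pairs $(X_1,Y_1),\dots,(X_n,Y_n)$ together with the test covariate $X_{n+1}$. With this dictionary, each agent's prediction set $\setc_k(X_{n+1})$ plays the role of the abstract set $\setc_k$, and the marginal coverage guarantee $\pr(Y_{n+1}\in\setc_k(X_{n+1}))\ge 1-\alpha$ is exactly property~\eqref{eq:coverage}, where the probability is taken over the joint law of training and test data. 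Crucially, the abstract framework already allows $c$ to be random and measures coverage under the joint distribution of $(Z,c)$, so the marginal (rather than conditional) nature of conformal coverage requires no extra handling.

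Next I would check that the conformal sets $\setc^M(x)$ and $\setc^W(x)$, which are defined pointwise in $x$ and then evaluated at the random test point $X_{n+1}$, coincide with the abstract constructions in~\eqref{eq:cm} and~\eqref{eq:cw} applied to the family $\{\setc_k(X_{n+1})\}_{k=1}^K$. Concretely, the miscoverage event $\{Y_{n+1}\notin\setc^M(X_{n+1})\}$ matches $\{c\notin\setc^M\}$ once I set $\phi_k=\ind\{Y_{n+1}\notin\setc_k(X_{n+1})\}$, since then $\Ev[\phi_k]=\pr(Y_{n+1}\notin\setc_k(X_{n+1}))\le\alpha$ and the average $\frac1K\sum_k\ind\{Y_{n+1}\in\setc_k(X_{n+1})\}$ is precisely the voting fraction appearing in~\eqref{eq:cm}. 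The same identification works verbatim for the weighted voting fraction and the independent uniform randomizer $U$ in~\eqref{eq:cw}.

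With the dictionary in place, the conclusion is immediate: applying Theorem~\ref{th:CM} to $\{\setc_k(X_{n+1})\}_{k=1}^K$ yields $\pr(Y_{n+1}\in\setc^M(X_{n+1}))\ge 1-2\alpha$, and applying Theorem~\ref{th:CR} yields $\pr(Y_{n+1}\in\setc^W(X_{n+1}))\ge 1-2\alpha$. The only point that needs a moment's care --- and hence is the ``main obstacle,'' though a mild one --- is the bookkeeping when both the set $\setc_k(X_{n+1})$ and the target $Y_{n+1}$ depend on the same randomness, since $X_{n+1}$ enters both. But because the proof of Theorem~\ref{th:CM} uses only the scalar bound $\Ev[\phi_k]\le\alpha$ together with Markov's inequality on the averaged indicators, and makes no assumption on how each $\setc_k$ depends on the data, this coupling is automatically absorbed; no conformal-specific argument beyond the correspondence above is required.
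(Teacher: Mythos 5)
Your proposal is correct and matches the paper's (implicit) argument exactly: the paper states this corollary as a direct specialization of Theorems~\ref{th:CM} and~\ref{th:CR}, with the target $c$ taken to be $Y_{n+1}$, the data comprising the training pairs and $X_{n+1}$, and the marginal conformal guarantee playing the role of property~\eqref{eq:coverage}. Your observation that the abstract framework already accommodates a random target under the joint law of $(Z,c)$ --- so the shared randomness in $X_{n+1}$ requires no extra argument --- is precisely why the paper gives no separate proof.
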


Suppose that we have constructed $K$ arbitrarily dependent prediction sets using conformal prediction, then, according to Corollary~\ref{cor:conformal}, we can merge the sets using a majority vote procedure while maintaining an appropriate level of coverage. If the conformal method used ensures an upper bound on coverage, then  \eqref{eq:upper_bound} still holds, with the difference that $\alpha$ is replaced by this upper limit. As a matter of fact, methods such as split or full conformal, under weak conditions, exhibit coverage that is practically equal to the pre-specified level. 

As explained in Section~\ref{subsec:comb_exch}, if the sets are exchangeable then it is possible to obtain better results than using a simple majority vote procedure. Specifically, Theorem~\ref{thm:exch} can be specialized in the context of conformal prediction.
\begin{corollary}
    Let $\setc_1(x), \dots, \setc_K(x)$ be $K \geq 2$ exchangeable conformal prediction intervals having coverage $1-\alpha$, then
    \begin{equation*}
    \begin{split}
        \setc^E(x) =\bigg\{&y \in \mathbb{R}: \frac{1}{k} \sum_{j=1}^k \ind\{y \in \setc_j(x)\} > \frac{1}{2} \mathrm{~}\forall k \le K \bigg\},
    \end{split}
    \end{equation*}
    is a valid $1-2\alpha$ conformal prediction set.
\end{corollary}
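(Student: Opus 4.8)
The plan is to recognize that this corollary is a direct specialization of Theorem~\ref{thm:exch} to the conformal setting, so that no new argument is needed beyond a careful matching of notation. Concretely, I would set up the dictionary that identifies the generic target $c$ of Section~\ref{sec:gen_maj_vote} with the test response $Y_{n+1}$, the abstract space $\set$ with $\real$, and the generic sets $\setc_k$ with the conformal intervals evaluated at the test covariate, $\setc_k(X_{n+1})$. Under this correspondence, the set $\setc^E(x)$ written in the corollary is exactly the representation \eqref{eq:setce} of $\setc^E$, now living in the response space and evaluated at the test point.

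First I would check that the coverage hypothesis of Theorem~\ref{thm:exch} holds. The marginal conformal guarantee $\pr(Y_{n+1} \in \setc_k(X_{n+1})) \geq 1-\alpha$ is precisely property \eqref{eq:coverage} with the random target $c = Y_{n+1}$; recall that the general framework explicitly permits $c$ to be random, with the probability taken over the joint law of $(Z,c)$, here the joint law of the training data and the test pair $(X_{n+1}, Y_{n+1})$. Thus the coverage requirement transfers verbatim.

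Next I would verify the exchangeability hypothesis. Theorem~\ref{thm:exch} only uses exchangeability of the miscoverage indicators $\phi_k = \ind\{c \notin \setc_k\}$, which here become $\phi_k = \ind\{Y_{n+1} \notin \setc_k(X_{n+1})\}$. Since the $K$ conformal intervals are assumed exchangeable and all indicators are evaluated against the \emph{same} random target $Y_{n+1}$ at the \emph{same} covariate $X_{n+1}$, the joint exchangeability of the random sets carries over to the indicator vector $(\phi_1, \dots, \phi_K)$. With both hypotheses in place, Theorem~\ref{thm:exch} immediately yields $\pr(Y_{n+1} \in \setc^E(X_{n+1})) \geq 1-2\alpha$, together with the containment $\setc^E(x) \subseteq \setc^M(x)$.

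The one point requiring care — rather than a genuine obstacle — is this last translation from exchangeability of the random intervals to exchangeability of the membership indicators. Because the target $c = Y_{n+1}$ is random in the conformal setting (unlike the fixed-parameter case of Proposition~\ref{prop:indep}), one must confirm that permuting the sets does not disturb the joint distribution of the $\phi_k$; this holds precisely because the single shared pair $(X_{n+1}, Y_{n+1})$ enters every indicator symmetrically, so the permutation acts only on the set labels. Everything else is a direct appeal to the already-established Theorem~\ref{thm:exch}.
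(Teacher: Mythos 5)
Your proposal is correct and takes essentially the same approach as the paper: the paper states this corollary as an immediate specialization of Theorem~\ref{thm:exch} (with the dictionary $c = Y_{n+1}$, $\set = \real$, $\setc_k = \setc_k(X_{n+1})$) and offers no separate proof, relying on the exchangeable Markov inequality argument already given for that theorem. Your extra care in confirming that the general framework permits a random target and that exchangeability of the sets transfers to the miscoverage indicators $\phi_k = \ind\{Y_{n+1} \notin \setc_k(X_{n+1})\}$ is exactly the implicit content of that specialization.
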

If the sets are non-exchangeable then exchangeability can be achieved through a random permutation $\pi$ of the indexes $\{1, \dots, K\}$, in order to obtain the set $\setc^\pi(x)$ described in \eqref{eq:c^pi}.

In the following, we will study the properties of the method through a simulation study and an application to real data. One consistent phenomenon that we seem to observe empirically is that $\setc^M$ actually has coverage $1-\alpha$ (better than $1-2\alpha$ as promised by the theorem), and the smaller sets $\setc^R$ and $\setc^\pi$ have coverage between $1-\alpha$ and $1-2\alpha$.

\subsection{Simulations}\label{sec:conf_sim}
We carried out a simulation study in order to investigate the performance of our proposed methods. We apply the majority vote procedure on simulated high-dimensional data with $n=100$ observations and $p=120$ regressors. Specifically, we simulate the design matrix $X_{n \times p}$, where each column is independent of the others and contains standard normal entries. The outcome vector is equal to $y = X\beta + \varepsilon$, where $\beta$ is a sparse vector with only the first $m=10$ elements different from 0 (generated independently from a $\mathcal{N}(0, 4)$) while $\varepsilon \sim \mathcal{N}_n(0, I_n)$. A test point $(x_{n+1}, y_{n+1})$ is generated with the same data-generating mechanism. At each iteration we estimate the regression function using the lasso algorithm \citep{tibshirani1996} with penalty parameter $\lambda$ varying over a fixed sequence of values $K=20$ and then construct a conformal prediction interval for each $\lambda$ in $x_{n+1}$ using the split conformal method presented in package R \texttt{ConformalInference}. We run  $10\,000$ iterations at $\alpha=0.05$. 

We then merge the $K$ different sets using the method described in Corollary~\ref{cor:conformal} with $w_k=\frac{1}{K}, k=1,\dots, K$ (an additional simulation to study the impact of $w$ is reported in Appendix~\ref{sec:add_sims_w}). These weights can be interpreted, from a Bayesian perspective, as a discrete uniform prior on $\lambda$. From an alternative perspective, each agent represents a value of the penalty parameter, and the \emph{aggregator} equally weighs the various intervals constructed by the various agents. An example of the result is shown in Figure~\ref{fig:lasso_sims}. The empirical coverages of the intervals $\setc^M(x)$ and $\setc^R(x)$ are $\frac{1}{B}\sum_{b=1}^B \ind\{y_{n+1}^b \in \setc^M_b(x_{n+1}) \}=0.97$ and $\frac{1}{B}\sum_{b=1}^B \ind\{y_{n+1}^b \in \setc^R_b(x_{n+1}) \}=0.92$. By definition, the second method produces narrower intervals while maintaining the coverage level $1-2\alpha$. As explained in the previous sections, by inducing exchangeability through permutation, it may be possible to enhance the majority vote results. In fact, the empirical coverage of the sets $\setc^\pi$ is $0.93$ while the sets are smaller than the ones produced by the simple majority vote. Furthermore, we tested the sets $\setc^U(x)$ defined in \eqref{eq:cu} and obtained an empirical coverage equal to $0.96$, which is very close to the nominal level $1-\alpha$. In all five cases, the occurrence of obtaining an union of intervals as output is very low, specifically less than 1\% of the iterations.

\begin{figure}
    \centering
    \includegraphics[width=0.8\textwidth]{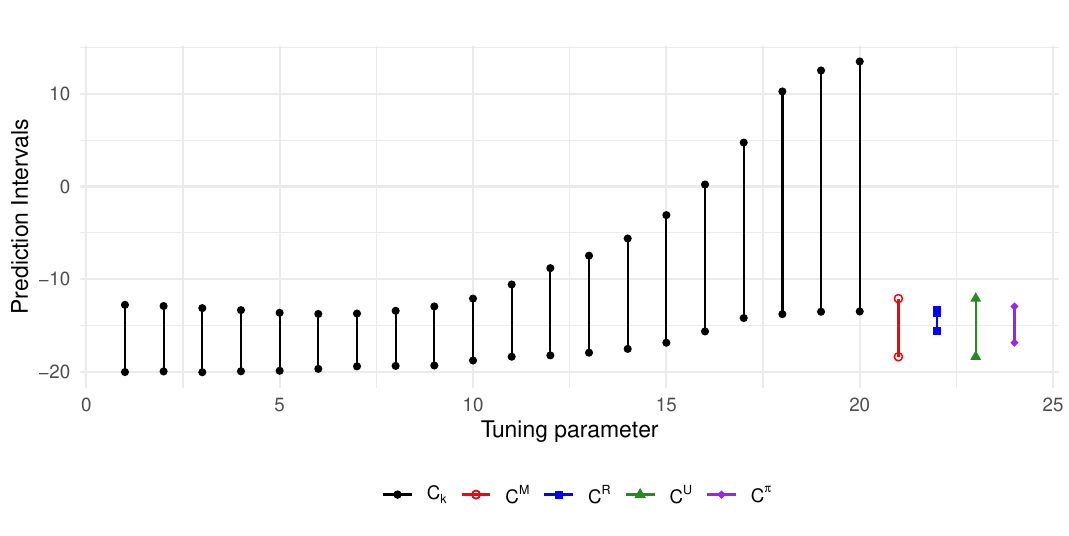}
    \caption{Intervals obtained using different values of $\lambda$, $\setc^M(x)$, $\setc^R(x), \setc^U(x)$ and $
    \setc^\pi(x)$. For standardization, the value of $U$ in the randomized thresholds is set to $1/2$. The smallest set $\setc^R$. Since $U=1/2$, the sets $\setc^M$ and $\setc^U$ coincides.}
    \label{fig:lasso_sims}
\end{figure}

\subsection{Real data example}
We used the proposed methods in a real dataset regarding Parkinson's disease \citep{parkinson_data}. The goal is to predict the total UPDRS (Unified Parkinson's Disease Rating Scale) score using a range of biomedical voice measurements from people suffering early-stage Parkinson's disease. We used split conformal prediction and $K=4$ different algorithms (linear model, lasso, random forest, neural net) to obtain the conformal prediction sets. In particular, we choose $n=5000$ random observations to construct our intervals and the others $n_0=875$ observations as test points. Prior weights also in this case are uniform over the $K$ models that represent the different agents, so a priori all methods are of the same importance. If previous studies had been carried out, one could, for example, put more weight on methods with better performance. Otherwise, one can assign a higher weight to more flexible algorithms such as random forest or neural net. 

The results are reported in Table~\ref{tab:parkinson} where it is possible to note that all merging procedures obtain good results in terms of length and coverage. In addition, also the randomized vote obtains good results in terms of coverage, with an empirical length that is slightly larger than the one obtained by the neural net. The percentage of times that a union of intervals or an empty set is outputted is nearly zero for all methods. In this situation, the intervals produced by the random forest outperform the others in terms of size of the sets; as a consequence, one may wish to put more weight into the method, which results in smaller intervals on average.

\begin{table*}[h!]
\caption{Empirical coverage and length of the methods for the Parkinson’s dataset.}
\label{tab:parkinson}
\centering
\begin{tabular}{ccccccccc}
  \hline
  Methods & LM & Lasso & RF & NN & $\setc^M$ & $\setc^R$ & $\setc^U$ & $\setc^\pi$ \\ 
  \hline
  Coverage & 0.958 & 0.960 & 0.949 & 0.961 & 0.951 & 0.923 & 0.961 & 0.918  \\ 
  Width & 40.143 & 40.150 & 13.286 & 32.533 & 29.508 & 20.620 & 32.544 & 20.710 \\ 
   \hline
\end{tabular}
\end{table*}

\subsection{Multi-split conformal inference}

The \emph{full} conformal prediction method originally introduced by \cite{vovk2005}, exhibits good properties in terms of coverage and size of the prediction set; however, these are counterbalanced by a notable computational cost, which makes its practical application challenging. To address this issue, a potential solution is the adoption of \emph{split} conformal prediction \citep{papadopoulos2002,lei2018}, which involves the use of a random
split of the data into two parts. Although this variant proves to be highly computational efficient, it introduces an additional layer of randomness that stems from the randomness of the data split. In addition, only a fraction of the data is used to train the model, leading to a reduced efficiency of the prediction set in terms of width. Several works aim to mitigate this problem by proposing various ways to combine the intervals obtained from different splits \citep{solari2022, barber2021, vovk2015}.

In particular, the method introduced in \cite{solari2022} involves the construction of $K$ distinct intervals of level $1-\alpha(1-\tau)$, each originating from a different random split. Subsequently, these intervals are merged using the mechanism described in \eqref{eq:c^tau}.  Although the final interval achieves a coverage of $1-\alpha$, it is possible to enhance the procedure by exploiting the exchangeability of sets and using the results introduced in Section~\ref{subsec:comb_exch} and Section~\ref{subsec:sequential} with the threshold set to $\tau$. A simple possible solution is to fix $K$ in advance, construct the $K$ different sets in parallel, and subsequently merging them using the set $\setc^E$ described in~\eqref{eq:setce} (with a threshold different than $1/2$ and set to $\tau$). Another method is to not fix $K$ in advance and instead merge the sets online, through $C^E(1:t)$ described in~\eqref{eq:sequential_mer} and with a threshold equal to $\tau$ rather than half. In this case, the coverage $1-\alpha$ is guaranteed uniformly over the number of splits.

\begin{figure*}
    \centering
    \includegraphics[width=1\textwidth]{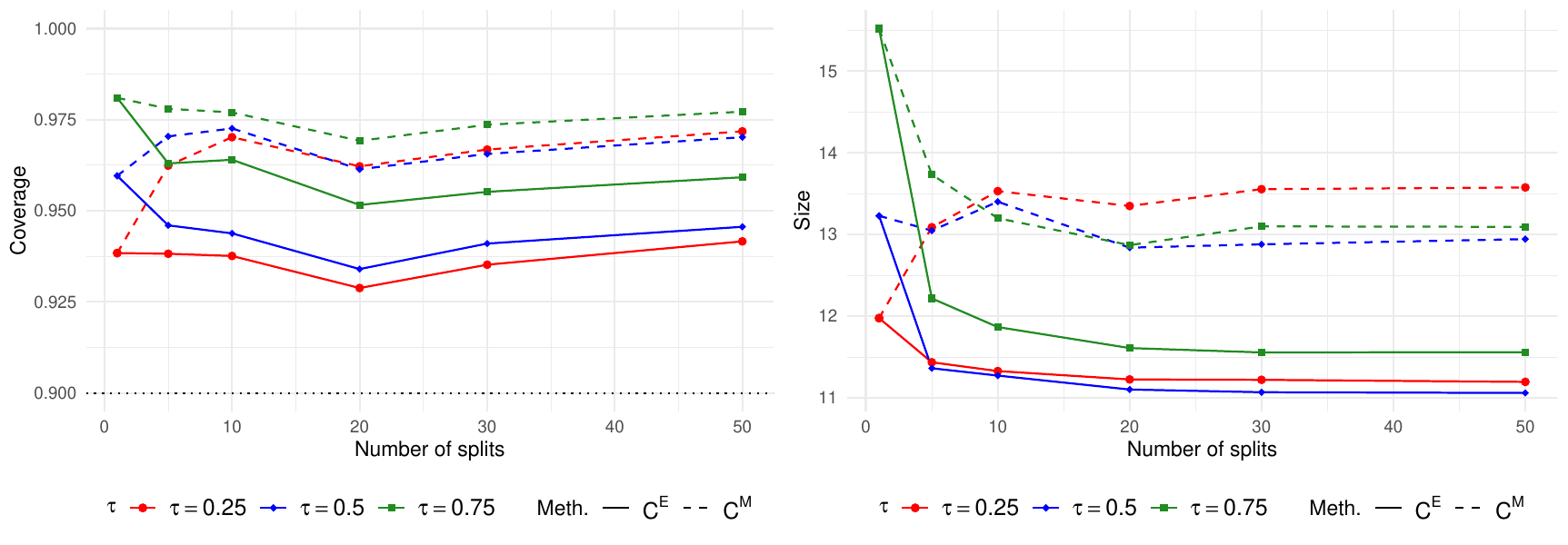}
    \caption{Comparison between the \emph{simple} majority vote  ($\setc^M$) and exchangeable majority vote ($\setc^E$) with different thresholds $\tau$. The size of the $\setc^E$ is significantly smaller than  $\setc^M$.} 
    \label{fig:multisplit}
\end{figure*}

We conducted a simulation study in which the data were generated using the same generative mechanism described in Section~\ref{sec:conf_sim}. Also in this case, the algorithm employed was Lasso with the penalty parameter set to one, while the intervals varied due to the data split on which they were constructed. 
The number of splits utilized was set at $\{1, 5, 10, 20, 30, 50\}$, while for the parameter $\tau$, three values were selected, namely $\{0.25, 0.50, 0.75\}$. The error rate $\alpha$ is set to $0.1$ and the procedure was repeated $5000$ times.

As evident from Figure~\ref{fig:multisplit}, the methods lead to a greater coverage than the specified level $1-\alpha$. However, the coverage is higher for the \emph{simple} majority vote compared to that based on the exchangeability of the sets. In addition, the size of the resulting sets is significantly smaller when using the method based on the exchangeability of sets. The number of splits does not appear to have a significant impact on the size of the sets. Once more, $\tau=0.5$ seems to offer good performances in terms of width.

\section{Summary}
Our paper presents a novel method to address the question of merging dependent confidence sets in an efficient manner, where efficiency is measured in both coverage and size. Our approach can be seen as the confidence interval analog of the results of \cite{ruger1978} which combines p-values through quantiles. The inclusion of a vector of weights allows the incorporation of prior information on the reliability of different methods. Additionally, the randomized version yields better results in terms of both coverage and width of the intervals, without altering the theoretical properties of the method. The improvements for exchangeable sets are particularly striking and are achieved using recently proposed extensions of Markov's inequality \citep{ramdas2023}.

The proposed method can be used to derandomize statistical procedures that are based on data-splitting. In both real and simulated examples, the method achieves good results in terms of coverage and width. The method has been extended in the Appendix~\ref{sec:conformal-risk-control} to sets with a conformal risk guarantee, introduced in \cite{angelopoulos2022}, allowing the extension of the results to different loss functions beyond miscoverage. 

The method is versatile and is clearly applicable in more scenarios than we have explored here.  We hope the community will explore such applications in future work.

\subsection*{Acknowledgments}
AR acknowledges funding from NSF grant IIS-2229881. The authors are also very thankful to Anasatasios Angelopoulos for an important pointer to related work, to Arun Kuchibhotla for a useful reformulation of one of our randomized methods, and Ziyu Xu for insights related to the measure of the majority vote set.

\bibliography{biblio}

\begin{thebibliography}{}

\bibitem[Angelopoulos and Bates, 2023]{angelopoulos2021}
Angelopoulos, A.~N. and Bates, S. (2023).
\newblock Conformal prediction: A gentle introduction.
\newblock {\em Foundations and Trends in Machine Learning}, 16(4):494--591.

\bibitem[Angelopoulos et~al., 2024]{angelopoulos2022}
Angelopoulos, A.~N., Bates, S., Fisch, A., Lei, L., and Schuster, T. (2024).
\newblock Conformal risk control.
\newblock In {\em The Twelfth International Conference on Learning Representations}.

\bibitem[{Apple Inc.}, 2022]{appleEps}
{Apple Inc.} (2022).
\newblock Differential privacy overview.
\newblock \url{https://www.apple.com/privacy/docs/Differential_Privacy_Overview.pdf}.

\bibitem[Azzalini and Capitanio, 2003]{azzalini2003}
Azzalini, A. and Capitanio, A. (2003).
\newblock Distributions generated by perturbation of symmetry with emphasis on a multivariate skew t-distribution.
\newblock {\em Journal of the Royal Statistical Society Series B: Statistical Methodology}, 65(2):367--389.

\bibitem[Bach et~al., 2024]{doubleml2024R}
Bach, P., Kurz, M.~S., Chernozhukov, V., Spindler, M., and Klaassen, S. (2024).
\newblock {DoubleML}: {A}n object-oriented implementation of double machine learning in {R}.
\newblock {\em Journal of Statistical Software}, 108(3):1--56.
\newblock arXiv:\href{https://arxiv.org/abs/2103.09603}{2103.09603} [stat.ML].

\bibitem[Banerjee et~al., 2019]{banerjee2019}
Banerjee, M., Durot, C., and Sen, B. (2019).
\newblock {Divide and conquer in nonstandard problems and the super-efficiency phenomenon}.
\newblock {\em The Annals of Statistics}, 47(2):720 -- 757.

\bibitem[Barber et~al., 2021]{barber2021}
Barber, R.~F., Cand{\`e}s, E.~J., Ramdas, A., and Tibshirani, R.~J. (2021).
\newblock {Predictive inference with the jackknife+}.
\newblock {\em The Annals of Statistics}, 49(1):486 -- 507.

\bibitem[Barber et~al., 2023]{barber2023}
Barber, R.~F., Cand{\`e}s, E.~J., Ramdas, A., and Tibshirani, R.~J. (2023).
\newblock Conformal prediction beyond exchangeability.
\newblock {\em The Annals of Statistics}, 51(2):816--845.

\bibitem[Berchialla et~al., 2022]{berchialla2022prediction}
Berchialla, P., Lanera, C., Sciannameo, V., Gregori, D., and Baldi, I. (2022).
\newblock Prediction of treatment outcome in clinical trials under a personalized medicine perspective.
\newblock {\em Scientific Reports}, 12(1):4115.

\bibitem[Boland et~al., 2002]{booland2002}
Boland, P.~J., Singh, H., and Cukic, B. (2002).
\newblock Stochastic orders in partition and random testing of software.
\newblock {\em Journal of Applied Probability}, 39(3):555--565.

\bibitem[Bonferroni, 1936]{bonferroni1936}
Bonferroni, C.~E. (1936).
\newblock {\em Teoria statistica delle classi e calcolo delle probabilità}.
\newblock ({Pubbl}. d. {R}. {Ist}. {Super}. di {Sci}. {Econom}. e {Commerciali} di {Firenze}. 8) {Firenze}: {Libr}. {Internaz}. {Seeber}. 62 {S}. (1936).

\bibitem[Breiman, 1996]{breiman1996}
Breiman, L. (1996).
\newblock Stacked regressions.
\newblock {\em Machine Learning}, 24:49--64.

\bibitem[Chernozhukov et~al., 2018]{chernozhukov2018double}
Chernozhukov, V., Chetverikov, D., Demirer, M., Duflo, E., Hansen, C., Newey, W., and Robins, J. (2018).
\newblock Double/debiased machine learning for treatment and structural parameters.

\bibitem[Cherubin, 2019]{cherubin2019}
Cherubin, G. (2019).
\newblock Majority vote ensembles of conformal predictors.
\newblock {\em Machine Learning}, 108(3):475--488.

\bibitem[Cholaquidis et~al., 2024]{cholaquidis2024gros}
Cholaquidis, A., Joly, E., and Moreno, L. (2024).
\newblock Gros: A general robust aggregation strategy.
\newblock {\em arXiv preprint arXiv:2402.15442}.

\bibitem[Decrouez and Hall, 2014]{decrouez2014}
Decrouez, G. and Hall, P. (2014).
\newblock Split sample methods for constructing confidence intervals for binomial and poisson parameters.
\newblock {\em Journal of the Royal Statistical Society Series B: Statistical Methodology}, 76(5):949--975.

\bibitem[Devroye et~al., 2016]{devroye2016sub}
Devroye, L., Lerasle, M., Lugosi, G., and Oliveira, R.~I. (2016).
\newblock Sub-{G}aussian mean estimators.
\newblock {\em The Annals of Statistics}.

\bibitem[DiCiccio et~al., 2020]{diciccio2020}
DiCiccio, C.~J., DiCiccio, T.~J., and Romano, J.~P. (2020).
\newblock Exact tests via multiple data splitting.
\newblock {\em Statistics \& Probability Letters}, 166:108865.

\bibitem[Dwork and Roth, 2014]{dwork2014}
Dwork, C. and Roth, A. (2014).
\newblock The algorithmic foundations of differential privacy.
\newblock {\em Found. Trends Theor. Comput. Sci.}, 9(3-4):211--407.

\bibitem[Fan et~al., 2023]{fan2023}
Fan, J., Ge, J., and Mukherjee, D. (2023).
\newblock {UTOPIA}: {U}niversally {T}rainable {O}ptimal {P}rediction {I}ntervals {A}ggregation.
\newblock {\em arXiv preprint arXiv:2306.16549}.

\bibitem[Gasparin et~al., 2024]{gasparin2024}
Gasparin, M., Wang, R., and Ramdas, A. (2024).
\newblock Combining exchangeable p-values.
\newblock {\em arXiv preprint arXiv:2404.03484}.

\bibitem[Guo and Shah, 2024]{guo2023}
Guo, F.~R. and Shah, R.~D. (2024).
\newblock Rank-transformed subsampling: inference for multiple data splitting and exchangeable p-values.
\newblock {\em Journal of the Royal Statistical Society Series B: Statistical Methodology}.

\bibitem[Gupta et~al., 2022]{gupta2022}
Gupta, C., Kuchibhotla, A.~K., and Ramdas, A. (2022).
\newblock Nested conformal prediction and quantile out-of-bag ensemble methods.
\newblock {\em Pattern Recognition}, 127:108496.

\bibitem[Howard et~al., 2021]{howard2021}
Howard, S.~R., Ramdas, A., McAuliffe, J., and Sekhon, J. (2021).
\newblock {Time-uniform, nonparametric, nonasymptotic confidence sequences}.
\newblock {\em The Annals of Statistics}, 49(2):1055 -- 1080.

\bibitem[Humbert et~al., 2023]{humbert2023}
Humbert, P., Le~Bars, B., Bellet, A., and Arlot, S. (2023).
\newblock One-shot federated conformal prediction.
\newblock In Krause, A., Brunskill, E., Cho, K., Engelhardt, B., Sabato, S., and Scarlett, J., editors, {\em Proceedings of the 40th International Conference on Machine Learning}, volume 202 of {\em Proceedings of Machine Learning Research}, pages 14153--14177. PMLR.

\bibitem[Kairouz et~al., 2021]{federatedLearning}
Kairouz, P., McMahan, H.~B., Avent, B., Bellet, A., Bennis, M., Bhagoji, A.~N., Bonawitz, K., Charles, Z., Cormode, G., Cummings, R., D’Oliveira, R. G.~L., Eichner, H., Rouayheb, S.~E., Evans, D., Gardner, J., Garrett, Z., Gascón, A., Ghazi, B., Gibbons, P.~B., Gruteser, M., Harchaoui, Z., He, C., He, L., Huo, Z., Hutchinson, B., Hsu, J., Jaggi, M., Javidi, T., Joshi, G., Khodak, M., Konecný, J., Korolova, A., Koushanfar, F., Koyejo, S., Lepoint, T., Liu, Y., Mittal, P., Mohri, M., Nock, R., Özgür, A., Pagh, R., Qi, H., Ramage, D., Raskar, R., Raykova, M., Song, D., Song, W., Stich, S.~U., Sun, Z., Suresh, A.~T., Tramèr, F., Vepakomma, P., Wang, J., Xiong, L., Xu, Z., Yang, Q., Yu, F.~X., Yu, H., and Zhao, S. (2021).
\newblock Advances and open problems in federated learning.
\newblock {\em Foundations and Trends® in Machine Learning}, 14(1–2):1--210.

\bibitem[Kennedy, 2016]{kennedy2016}
Kennedy, E.~H. (2016).
\newblock Semiparametric theory and empirical processes in causal inference.
\newblock {\em Statistical causal inferences and their applications in public health research}, pages 141--167.

\bibitem[Kim and Ramdas, 2024]{kim2024}
Kim, I. and Ramdas, A. (2024).
\newblock {Dimension-agnostic inference using cross U-statistics}.
\newblock {\em Bernoulli}, 30(1):683 -- 711.

\bibitem[Kuchibhotla et~al., 2023a]{kuchibhotla2021hulc}
Kuchibhotla, A.~K., Balakrishnan, S., and Wasserman, L. (2023a).
\newblock The {HulC}: Confidence regions from convex hulls.
\newblock {\em Journal of the Royal Statistical Society, Series B (Methodology)}.

\bibitem[Kuchibhotla et~al., 2023b]{kuchibhotla2023median}
Kuchibhotla, A.~K., Balakrishnan, S., and Wasserman, L. (2023b).
\newblock Median regularity and honest inference.
\newblock {\em Biometrika}, 110(3):831--838.

\bibitem[Kuncheva, 2014]{kuncheva2014}
Kuncheva, L.~I. (2014).
\newblock {\em Combining Pattern Classifiers: Methods and Algorithms}.
\newblock John Wiley \& Sons.

\bibitem[Kuncheva et~al., 2003]{kuncheva2003}
Kuncheva, L.~I., Whitaker, C.~J., Shipp, C.~A., and Duin, R.~P. (2003).
\newblock Limits on the majority vote accuracy in classifier fusion.
\newblock {\em Pattern Analysis \& Applications}, 6:22--31.

\bibitem[Lei et~al., 2018]{lei2018}
Lei, J., G’Sell, M., Rinaldo, A., Tibshirani, R.~J., and Wasserman, L. (2018).
\newblock Distribution-free predictive inference for regression.
\newblock {\em Journal of the American Statistical Association}, 113(523):1094--1111.

\bibitem[Lugosi and Mendelson, 2019]{lugosi2019mean}
Lugosi, G. and Mendelson, S. (2019).
\newblock Mean estimation and regression under heavy-tailed distributions: A survey.
\newblock {\em Foundations of Computational Mathematics}, 19(5):1145--1190.

\bibitem[Minsker, 2023]{minsker2022u}
Minsker, S. (2023).
\newblock U-statistics of growing order and sub-{G}aussian mean estimators with sharp constants.
\newblock {\em Mathematical Statistics and Learning}.

\bibitem[Nemirovskij and Yudin, 1983]{nemirovskij1983problem}
Nemirovskij, A.~S. and Yudin, D.~B. (1983).
\newblock Problem complexity and method efficiency in optimization.

\bibitem[Pace and Salvan, 1997]{pace1997}
Pace, L. and Salvan, A. (1997).
\newblock {\em Principles of statistical inference: from a Neo-Fisherian perspective}, volume~4.
\newblock World scientific.

\bibitem[Papadopoulos et~al., 2002]{papadopoulos2002}
Papadopoulos, H., Proedrou, K., Vovk, V., and Gammerman, A. (2002).
\newblock Inductive confidence machines for regression.
\newblock In {\em Machine Learning: ECML 2002: 13th European Conference on Machine Learning Helsinki, Finland, August 19--23, 2002 Proceedings 13}, pages 345--356. Springer.

\bibitem[Ramdas and Manole, 2025]{ramdas2023}
Ramdas, A. and Manole, T. (2025).
\newblock Randomized and {E}xchangeable {I}mprovements of {M}arkov's, {C}hebyshev's and {C}hernoff's inequalities.
\newblock {\em Statistical Science}.

\bibitem[Ramdas and Wang, 2024]{ramdas2024hypothesis}
Ramdas, A. and Wang, R. (2024).
\newblock Hypothesis testing with e-values.
\newblock {\em arXiv preprint arXiv:2410.23614}.

\bibitem[Robins and Greenland, 1992]{robins1992identifiability}
Robins, J.~M. and Greenland, S. (1992).
\newblock Identifiability and exchangeability for direct and indirect effects.
\newblock {\em Epidemiology}, 3(2):143--155.

\bibitem[Romano et~al., 2019]{romano2019}
Romano, Y., Patterson, E., and Candes, E. (2019).
\newblock Conformalized quantile regression.
\newblock In {\em Advances in Neural Information Processing Systems}, volume~32. Curran Associates, Inc.

\bibitem[R{\"u}ger, 1978]{ruger1978}
R{\"u}ger, B. (1978).
\newblock Das maximale signifikanzniveau des tests: “{L}ehne ${H}_0$ ab, wenn $k$ unter $n$ gegebenen tests zur ablehnung f{\"u}hren”.
\newblock {\em Metrika}, 25:171--178.

\bibitem[Solari and Djordjilović, 2022]{solari2022}
Solari, A. and Djordjilović, V. (2022).
\newblock Multi-split conformal prediction.
\newblock {\em Statistics \& Probability Letters}, 184:109395.

\bibitem[Tang and Tang, 2023]{tang2023}
Tang, W. and Tang, F. (2023).
\newblock {The Poisson Binomial Distribution — Old and New}.
\newblock {\em Statistical Science}, 38(1):108 -- 119.

\bibitem[Tibshirani, 1996]{tibshirani1996}
Tibshirani, R. (1996).
\newblock Regression {S}hrinkage and {S}election via the {L}asso.
\newblock {\em Journal of the Royal Statistical Society. Series B (Methodological)}, 58(1):267--288.

\bibitem[Tsanas and Little, 2009]{parkinson_data}
Tsanas, A. and Little, M. (2009).
\newblock {Parkinsons Telemonitoring}.
\newblock UCI Machine Learning Repository.
\newblock {DOI}: https://doi.org/10.24432/C5ZS3N.

\bibitem[VanderWeele, 2015]{vanderweele2015explanation}
VanderWeele, T. (2015).
\newblock {\em Explanation in causal inference: methods for mediation and interaction}.
\newblock Oxford University Press.

\bibitem[Vovk, 2015]{vovk2015}
Vovk, V. (2015).
\newblock Cross-conformal predictors.
\newblock {\em Annals of Mathematics and Artificial Intelligence}, 74:9--28.

\bibitem[Vovk et~al., 2005]{vovk2005}
Vovk, V., Gammerman, A., and Shafer, G. (2005).
\newblock {\em Algorithmic Learning in a Random World}, volume~29.
\newblock Springer.

\bibitem[Vovk and Wang, 2020]{vovk2020}
Vovk, V. and Wang, R. (2020).
\newblock Combining p-values via averaging.
\newblock {\em Biometrika}, 107(4):791--808.

\bibitem[Waudby-Smith et~al., 2023]{waudbysmith2023}
Waudby-Smith, I., Wu, S., and Ramdas, A. (2023).
\newblock Nonparametric extensions of randomized response for private confidence sets.
\newblock In {\em Proceedings of the 40th International Conference on Machine Learning}, volume 202, pages 36748--36789. PMLR.

\bibitem[Yang and Kuchibhotla, 2024]{yang2021}
Yang, Y. and Kuchibhotla, A.~K. (2024).
\newblock Selection and aggregation of conformal prediction sets.
\newblock {\em Journal of the American Statistical Association}, pages 1--13.

\end{thebibliography}

\appendix
\section{Proofs of the results}
\label{sec:app_proof}

\begin{proof}[Proof of Theorem~\ref{th:up_bound}] Let
$r:=\left \lceil{\frac{K}{2}}\right\rceil$, $\phi_k=\phi_k(Z,c)=\ind\{c \notin \setc_k\}$ be a Bernoulli random variable such that $\Ev[\phi_k] = \alpha$, $k=1,\dots,K$, and $S_K = \sum_{k=1}^K \phi_k$ taking values in $\{0, 1, \dots, K\}$. By definition, we know that 
\[
\Ev[S_K] = \sum_{k=1}^K \Ev[\phi_k] = K\alpha.
\]
Let us define $\rho_j = \pr(S_K=j)$. Now we can write
\[
\begin{split}
    \Ev[S_K] &= \sum_{j=0}^K j \rho_j = \sum_{j=0}^{r-1} j\rho_j + \sum_{j=r}^K j\rho_j \pm (r-1)\sum_{j=0}^{r-1} \rho_j \pm K\sum_{j=r}^K \rho_j\\
    &= (r-1) \sum_{j=0}^{r-1} \rho_j + K \sum_{j=r}^K \rho_j - \sum_{j=0}^{r-1}(r-1-j)\rho_j - \sum_{j=r}^K (K-j)\rho_j\\
    &= (r-1)\left(1 - \pr\left(S_K \geq r \right) \right) + K \pr\left(S_K \geq r \right) - m.
\end{split}
\]
Since $m \geq 0$, then
\[
K\alpha \leq (r-1)\left(1 - \pr\left(S_K \geq \bigg\lceil\frac{K}{2}\bigg\rceil \right) \right) + K \pr\left(S_K \geq \bigg\lceil\frac{K}{2}\bigg\rceil \right) \implies \pr\left(S_K \geq \bigg\lceil\frac{K}{2}\bigg\rceil \right) \geq \frac{K\alpha - r + 1}{K - r + 1}
\]
From Theorem~\ref{th:CM} we know that 
\[
\pr(c \in \setc^M) = 1 - \pr(c \notin \setc^M) = 1 - \pr\left(S_K \geq \left\lceil \frac{K}{2} \right\rceil \right) \leq 1 - \frac{K\alpha - r + 1}{K - r + 1},
\]
which concludes the proof.   
\end{proof}

\begin{proof}[Proof of Lemma \ref{lemma:inter}]
    We provide a ``visual'' proof of this fact. Consider a ``histogram'' view of the voting procedure. Every point on the real line is assigned a score between 0 and 1, which is the fraction of sets that contained it. The resulting score curve is almost a kernel density estimate, except that it is not normalized: the ``density'' is given by $f(s) = \frac1{K}\sum_{k=1}^K \ind\{s \in \setc_k\}$. If this density is unimodal, then every level set will be an interval (and the level sets just correspond to $\setc^\tau$ for various $\tau$).  Now we argue that the input sets being intervals and $\cap_{k=1}^K \setc_k \neq \varnothing$ is sufficient for this unimodality. Take $\tau=1/2$ in what follows for simplicity, to focus on $\setc^M$. First note that if $\cap_{k=1}^K \setc_k \neq \varnothing$, then $\cap_{k=1}^K \setc_k$ is itself an interval (being the intersection of convex sets, it must be convex), and it must be contained in $\setc^M$. Starting from this interval, when we attempt to grow the interval by checking points just to the right (or left) of the current set, one will observe that the score of points can only decrease as we move further out. This is because our starting interval $\cap_{k=1}^K \setc_k$ lies ``inside'' every single input interval, and so as we move outwards from it, we can only hit closing endpoints of these intervals, slowly starting to exclude them one by one, but we can never hit an opening endpoint of an interval at a later point. This concludes the proof of unimodality, and hence of the lemma.
\end{proof}

\begin{proof}[Proof of Theorem \ref{thm:median-of-midpoints}]
    Assume for simplicity that all intervals (and hence midpoints) are distinct.
    Sort the intervals by their midpoints: let $\setc_{(k)}$ denote the $k$-th ordered set if its midpoint is $c_{(k)}$. We first note that if the intervals have the same width, then the intervals that contain the target $c$ must form a contiguous set according to this ordering. To elaborate, it is apparent that the covering intervals, if there are any, must be $\setc_{(a)},\setc_{(a+1)},\dots,\setc_{(b)}$ for some $1 \leq a \leq b \leq K$; indeed, it is not possible for $\setc_{(a)}$ and $\setc_{(a+2)}$ to cover without $\setc_{(a+1)}$ also covering (because they have the same width). 
    The above observation implies the following: if $> K/2$ intervals cover the target $c$, then so does the median interval. In particular, if $K$ is even then $c$ must be contained both in the intervals with midpoints $c_{(K/2)}$ and $c_{(1+ K/2)}$; in fact, if $c$ is exclusively contained within one of the two intervals, then it is contained at most in $K/2$ of the intervals (and not $K/2+1$).
    If $E$ denotes the event that $> K/2$ intervals contain the target $c$, we argued earlier that $\pr(E) \geq 1-2\alpha$. 
    More generally, if it happens to be the case that $>K/2$ intervals include an arbitrary point $s$, then the ``median interval'' $\setc^{(K/2)}$ must also contain $s$. Thus, $\setc^{(K/2)} \supseteq \setc^M$ as claimed. 

    To prove the final claim, let us assume for simplicity that all midpoints are distinct and all intervals are closed. Let $c_{(1)}, \dots, c_{(K)}$ be the ordered midpoints of $\setc_1, \dots, \setc_K$ and $\delta$ denote the width of the intervals. The boundaries of the intervals can be defined as
    \[
    a_{(k)} := c_{(k)} - \delta/2, \quad b_{(k)} := c_{(k)} + \delta/2, \quad k= 1, \dots, K.
    \]
    By definition, we have $a_{(1)} < \dots < a_{(K)}$ and $b_{(1)} < \dots < b_{(K)}$. In addition, since $\cap_{k=1}^K \setc_k \neq \varnothing$ then $\cap_{k=1}^K \setc_k$ must coincide with $[a_{(K)}, b_{(1)}]$. This implies that
    \[
    a_{(1)} < \dots < a_{(K)} < b_{(1)} < \dots < b_{(K)},
    \]
    and if $s \in [a_{(K)}, b_{(1)}]$ then it is \emph{voted} by all the intervals, in symbols, $\sum_{k=1}^K \ind\{s \in \setc_k \} = K$. With similar arguments, we find that if $s \in [a_{(K-1)}, a_{(K)})$ or $s \in (b_{(1)}, b_{(2)}]$ then $\sum_{k=1} \ind\{s \in \setc_k\}=K-1$; since $s$ will belong to all intervals except $\setc_K$ in the first case, while $s$ will belong to all intervals except $\setc_1$ in the second case. In general, we see that if $s \in [a_{(K-j)}, a_{(K-j+1)})$ or $s \in (b_{(j)}, b_{(j+1)}]$ then $\sum_{k=1}^K \ind\{s \in \setc_k\}=K-j$, for $j = 1, \dots, K-1$. This implies that the majority set $\setc^M$ is defined as
    \[
    \setc^M = 
    \begin{cases}
        \left(\cup_{j=\lceil K/2 \rceil}^{K-1} [a_{(j)}, a_{(j+1)}) \right) \bigcup [a_{(K)}, b_{(1)}] \bigcup \left(\cup_{j=1}^{\lceil K/2 \rceil - 1} (b_{(j)}, b_{(j+1)}]\right) = [a_{\lceil K/2 \rceil}, b_{\lceil K/2 \rceil}],&\text{if $K$ is odd}\\
        \left(\cup_{j= K/2 + 1 }^{K-1} [a_{(j)}, a_{(j+1)})\right) \bigcup [a_{(K)}, b_{(1)}] \bigcup \left(\cup_{j=1}^{ K/2 - 1} (b_{(j)}, b_{(j+1)}]\right) = [a_{K/2 +1 }, b_{K/2}], &\text{if $K$ is even}
    \end{cases}
    \]
    which concludes the claim.
\end{proof}

\begin{proof}[Proof of Theorem \ref{thm:length_mv}]
    The first part involves the observation that $\ind\{y > 1\} \leq y$ for $y\geq0$:
    \begin{equation*}
        \nu(\setc^\tau) = \int_\set \ind\left\{\frac{1}{K} \sum_{k=1}^K \ind\{x \in \setc_k\} > \tau \right\} d\nu(x) \leq \int_\set \frac{1}{K\tau} \sum_{k=1}^K \ind\{x \in \setc_k\} d\nu(x) = \frac{1}{K\tau} \sum_{k=1}^K \nu(\setc_k),
    \end{equation*}
    where intregration is with respect to the measure $\nu$.

    For the second part, let the endpoints of $\setc_1, \dots, \setc_K$ be denoted by $(a_1,b_1), \dots, (a_K,b_K)$. 
    Let $\Bar{m}_j:=(a_j + b_j)/2$ be the midpoint of the interval $\setc_j$, $j=1,\dots,K$. Without loss of generality, let us suppose that the length of the largest interval is $\max_k (b_k-a_k)=(b_1-a_1)=:\delta$. Let us define a new collection of intervals $\setc_1, \setc_2^\delta, \dots, \setc_K^\delta$, where the interval $\setc_j^\delta$ has endpoints
    \[
    a_j^\delta = \Bar{m}_j - \frac{\delta}{2}, \quad b_j^\delta = \Bar{m}_j + \frac{\delta}{2},
    \]
    for $j=2, \dots, K$. This implies that $\setc_j \subseteq \setc_j^\delta$ and intervals $\setc_1, \setc_2^\delta, \dots, \setc_K^\delta$ have the same width. Now, we can prove that $\setc^M(\setc_1,\dots,\setc_K) \subseteq \setc^M(\setc_1, \setc_2^\delta, \dots, \setc_K^\delta)$. Indeed, if the point $s \in \setc^M(\setc_1,\dots,\setc_K)$ then it is contained in at least more than $K/2$ of the initial intervals, but by definition $s$ must be included in the same set of the new intervals since $\setc_j \subseteq \setc_j^\delta$. From Theorem~\ref{thm:median-of-midpoints}, we have $\setc^M(\setc_1, \setc_2^\delta, \dots, \setc_K^\delta) \subseteq \setc^{(K/2)}$. So, if $K$ is odd, then $\setc^{(K/2)}$ is an interval with width $\delta$, while if $K$ is even it corresponds to the intersection of two equal-sized intervals, so its width is $\leq \delta$. Taking advantage of the fact that $\setc^\tau$ is decreasing in $\tau$ (i.e., $\setc^{\tau_2} \subseteq \setc^{\tau_1}$ if $\tau_1 \leq \tau_2$) then $\nu(\setc^\tau) \leq \max_k \nu(\setc_k)$ continues to hold for all $\tau \in [\frac{1}{2}, 1)$.
\end{proof}

The next lemma will be needed to prove Proposition~\ref{prop:indep}. In the following, we denote $X \sim PB(p_1, \dots, p_K)$ as the binomial Poisson random variable distributed as the sum of $K$ independent Bernoulli random variables with parameters $p_1, \dots, p_K$.    

\begin{lemma}
\label{lemma:pb}
    Let $X \sim PB(p_1, \dots, p_K)$ and $Y\sim \binomial(K,p)$ then $X$ is stochastically larger than $Y$, $X \geq_{st} Y$, if 
    \begin{equation*}
        p^K \leq \prod_{k=1}^K p_k.
    \end{equation*}
\end{lemma}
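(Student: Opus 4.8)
The plan is to prove the slightly stronger-looking but equivalent claim that a Poisson binomial stochastically dominates the binomial whose parameter is the \emph{geometric mean} of the $p_k$, and then reduce the stated hypothesis to this case. Write $g := (\prod_{k=1}^K p_k)^{1/K}$. Since $\binomial(K,p)$ is stochastically increasing in $p$, and the hypothesis $p^K \le \prod_k p_k$ is exactly $p \le g$, it suffices to show $X \geq_{st} \binomial(K,g)$; the stated conclusion then follows because $\binomial(K,g) \geq_{st} \binomial(K,p) = Y$. A reassuring sanity check is that the bound is tight at the top of the range: $\pr(X = K) = \prod_k p_k$ while $\pr(Y = K) = p^K$, so the hypothesis is precisely the constraint arising from the event $\{X \ge K\}$, and one must show everything below follows from it.

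The core step I would isolate is a \emph{geometric-smoothing} lemma: replacing any two parameters $p_i, p_j$ by their common geometric mean $q := \sqrt{p_i p_j}$ (leaving the other $K-2$ untouched) produces a Poisson binomial that is stochastically \emph{smaller}. To prove this, condition on $S := \sum_{k \neq i,j} B_k$, which is independent of $(B_i, B_j)$, so that $X = S + (B_i + B_j)$; it is then enough to compare the laws of $B_i + B_j$ and $B_i' + B_j'$ on $\{0,1,2\}$. A direct computation shows $\pr(B_i+B_j = 2) = p_i p_j = q^2 = \pr(B_i' + B_j' = 2)$ (the two-success probability is invariant under smoothing), while $\pr(B_i' + B_j' = 0) - \pr(B_i + B_j = 0) = (\sqrt{p_i} - \sqrt{p_j})^2 \ge 0$. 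Hence smoothing moves mass from $\{1\}$ to $\{0\}$ and fixes the mass on $\{2\}$, which gives $B_i + B_j \geq_{st} B_i' + B_j'$; adding the independent $S$ preserves stochastic order, so the whole Poisson binomial becomes stochastically smaller.

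Finally I would iterate this smoothing until the parameter vector reaches the constant vector $(g,\dots,g)$. Passing to logarithms turns geometric averaging of a pair into arithmetic averaging, so repeatedly averaging the current largest and smallest coordinates keeps the mean $\frac1K\sum_k \log p_k = \log g$ fixed while the sum of squares strictly decreases by $(\,\cdot\,)^2/2$ at each step; since the sum of squares is bounded below, the range of the coordinates must tend to $0$, so the parameter vector converges to $(g,\dots,g)$ and the corresponding Poisson binomials converge weakly to $\binomial(K,g)$. Because each smoothing step only decreases the law in stochastic order and $\geq_{st}$ is closed under weak limits, we conclude $X \geq_{st} \binomial(K,g) \geq_{st} Y$. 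I expect the main obstacle to be this last convergence bookkeeping --- verifying that the chosen schedule of pairwise averages genuinely drives the vector to its geometric mean and that stochastic dominance survives the limit --- rather than the per-step inequality, which reduces to the one-line observation $(\sqrt{p_i}-\sqrt{p_j})^2 \ge 0$.
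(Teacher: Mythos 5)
Your proof is correct, but note that the paper itself does not prove this lemma at all: it is stated as a known result, with the proof deferred to \citet{booland2002} and the discussion in \citet{tang2023} (the classical statement being exactly your intermediate claim, that a Poisson binomial stochastically dominates the binomial whose success probability is the geometric mean of the $p_k$). So your argument is not an alternative to the paper's proof so much as a self-contained replacement for the citation. Your route --- reduce to the geometric-mean case via monotonicity of $\binomial(K,p)$ in $p$, establish per-step dominance from the pairwise smoothing computation (invariance of the two-success probability together with $(\sqrt{p_i}-\sqrt{p_j})^2 \ge 0$), and then iterate max/min averaging in log-space and pass to the weak limit --- is essentially a majorization (``Robin Hood'') argument of the kind used in that literature, and the supporting facts you invoke (preservation of $\geq_{st}$ under convolution with an independent summand, transitivity, and closure of $\geq_{st}$ under weak limits on a finite support set) are all standard and correctly applied; the convergence bookkeeping also works, since the decrement of the log-space sum of squares at each step equals half the squared range, so summability forces the range to zero while the log-space mean stays pinned at $\log g$. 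Two small points to tidy: (i) taking logarithms requires $p_k > 0$; if some $p_k = 0$ the hypothesis forces $p = 0$, making the claim trivial, so this is a one-line edge case to dispose of first; (ii) in the limit step it is worth saying explicitly that the Poisson binomial probability mass function is a polynomial (hence continuous) function of $(p_1, \dots, p_K)$, so convergence of the parameter vectors gives pointwise convergence of survival functions, and each inequality $\pr(X \ge x) \ge \pr\bigl(\mathrm{PB}(p^{(t)}) \ge x\bigr)$ survives the limit. What the paper's approach buys is brevity; what yours buys is a complete elementary argument that could be inserted into the appendix without external dependencies.
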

The proof is given in \citet{booland2002} and discussed in \citet{tang2023}.

\begin{proof}[Proof of Proposition~\ref{prop:indep}]
Let $\setc_1, \dots, \setc_K$ be a collection of independent confidence sets for the parameter $c$. Then $\phi_k = \ind\{c \in \setc_k\}$ is a Bernoulli random variable with parameter $p_k \geq 1-\alpha$. In addition, $\phi_1, \dots, \phi_K$ are independent (transformation of independent quantities). Suppose that $p_k = 1-\alpha$, for all $k=1,\dots,K$, then
\begin{equation*}
    \pr(c \in \setc^M) = \pr\left(\sum_{k=1}^K \ind\{c \in \setc_k \} > Q_K(\alpha) \right) = \pr(S_K > Q_K(\alpha)) \geq 1-\alpha,
\end{equation*}
where $S_K = \sum_{k=1}^K \phi_k \sim \binomial(K, 1-\alpha)$ and $Q_K(\alpha)$ defined in~\eqref{eq:quantile}. If $p_k \geq 1-\alpha$, $k=1,\dots, K$, then $S_K$ is distributed as a Poisson binomial with parameters $p_1, \dots, p_K$ and $\prod_{k=1}^K p_k \geq (1-\alpha)^K$. This implies that $S_K$ is stochastically larger than a $\binomial(K, 1-\alpha)$ due to Lemma~\ref{lemma:pb}.
\end{proof}

\begin{proof}[Proof of Theorem \ref{thm:exch}]
    Let $\phi_k=\phi_k(Z,c)=\ind\{c \notin \setc_k\}$ be a Bernoulli random variable such that $\Ev[\phi_k] \leq \alpha$, $k=1,\dots,K$. Since the sequence $(\phi_1, \dots, \phi_K)$ is exchangeable, we have 
    \begin{equation*}
    \begin{split}
    \pr(c \notin \setc^E) &= \pr\left(\exists k \leq K: c \notin \setc^M(1:k)\right) = \pr\left(\exists k \leq K: \frac{1}{k}\sum_{j=1}^k \phi_j \geq \frac{1}{2} \right)  \leq 2 \Ev\left[ \phi_1 \right]  \leq 2\alpha,
    \end{split}
    \end{equation*}
    where the first inequality holds due to the exchangeable Markov inequality (EMI) by \cite{ramdas2023}. It is straightforward to see that $\setc^E \subseteq \setc^M$, since $\setc^M(1:K)$ coincides with $\setc^M$.
\end{proof}

\begin{proof}[Proof of Theorem \ref{th:CR}]
    Let $\phi_k=\ind\{c \notin \setc_k\}$ be a Bernoulli random variable such that $\Ev[\phi_k] \leq \alpha$, $k=1,\dots,K$. Then using Additive-randomized Markov inequality (AMI),
    \[
    \begin{split}
    \pr(c \notin \setc^W) &= \pr\left(\sum_{k=1}^K w_k \left(1 - \phi_k\right) \leq \frac{1}{2} + U/2 \right) = \pr\left(\sum_{k=1}^K w_k \phi_k \geq \frac{1}{2} - U/2 \right) \\ &\leq 2 \Ev\left[ \sum_{k=1}^K w_k \phi_k \right] = 2 \sum_{k=1}^K w_k \Ev\left[\phi_k \right] \leq 2 \alpha \sum_{k=1}^K w_k = 2\alpha,
    \end{split}
    \]
    which proves \eqref{eq: coverage_cr}.

    In order to prove \eqref{eq:length_cr}, we follow the same lines as in Theorem~\ref{thm:length_mv}. 
    In particular,
    \[
    \begin{split}
        \nu(\setc^W) &= \int_\set \ind\left\{\sum_{k=1}^K w_k \ind\{x \in \setc_k\} > \frac{1}{2} + \frac{u}{2}\right\}d\nu(x) \leq \int_\set \ind\left\{\sum_{k=1}^K w_k \ind\{x \in \setc_k\} > \frac{1}{2}\right\} d\nu(x) \\
        &\leq \int_\set 2 \sum_{k=1}^K w_k \ind\{x \in \setc_k\} d\nu(x) = 2 \sum_{k=1}^K w_k \nu(\setc_k),
        \end{split}
    \]
    which concludes the proof.
\end{proof}

\begin{proof}[Proof of Theorem \ref{thm:exch2}]
By direct calculation
\[
\begin{split}
    \pr(\exists t \geq 1: c \notin \setc^E(1:t)) &= \pr\left(\cup_{T \geq 1} \{\exists t \leq T: c \notin \setc^E(1:t) \} \right)
    = \lim_{T \to \infty} \pr(c \notin \setc^E(1:T)) \leq 2\alpha,
\end{split}
\]
where the last equality is due to the fact that $\setc^E(1:t)$ shrinks when $t$ increases, while the last inequality is due to Theorem \ref{thm:exch}.
\end{proof}

\begin{proof}[Proof of Theorem \ref{thm:point}]
We provide two proofs: direct and indirect, starting with the latter. For each $k = 1,\dots,K$, note that $\hat \theta_k \pm w(n,\alpha)$  is a $1-\alpha$ confidence interval for $\theta$. (If $w$ involves other unknown nuisance parameters, we can pretend that these are being constructed by an oracle). Now, Theorem~\ref{thm:median-of-midpoints} directly implies the result, by noting that the median of the midpoints of these $K$ intervals is exactly $\hat \theta_{(\lceil K/2 \rceil)}$. For a direct proof, note that for $\hat \theta_{(\lceil K/2 \rceil)}$ to be more than $w(n,\alpha)$ away from $\theta$, it would have to be the case that at least $\lceil K/2\rceil$ of the individual $\hat \theta_k$ estimators would have to be more than $w(n,\alpha)$ away from $\theta$, but the latter event happens with probability at most $2\alpha$, following the proof of the majority vote procedure. The exchangeability claim follows by an argument identical to Theorem~\ref{thm:exch}. 
\end{proof}

\begin{proof}[Proof of Proposition \ref{prop:inf-unc-set}]
    Let $\phi_\lambda = \ind\{c \notin \setc_\lambda\}$ be a Bernoulli random variable such that $\Ev[\phi_\lambda] \leq \alpha$, for each $\lambda \in \Lambda$. Then
    \[
    \begin{split}
    \pr(c\notin \setc^W) &= \pr\left(\int_\Lambda w(\lambda)(1 - \phi_\lambda) d\lambda \leq \frac{1}{2} + U/2 \right) = \pr\left(\int_\Lambda w(\lambda)\phi_\lambda d\lambda \geq \frac{1}{2} - U/2 \right) \\
    &\stackrel{(i)}{\leq} 2 \Ev\left[\int_\Lambda w(\lambda) \phi_\lambda d\lambda \right] \stackrel{(ii)}{=} 2 \int_\Lambda w(\lambda) \Ev[\phi_\lambda] d\lambda \leq 2\alpha,
    \end{split}
    \]
    where $(i)$ is due to the uniformly-randomized Markov inequality, while $(ii)$ is due to Fubini's theorem. With similar arguments, we have that
    \[
    \begin{split}
    \nu(\setc^W) &= \int_\set \ind\left\{ \int_\Lambda w(\lambda) \ind\{x \in \setc_\lambda\} d\lambda > \frac{1}{2} + \frac{u}{2} \right\} d\nu(x) \leq \int_\set \ind\left\{ \int_\Lambda w(\lambda) \ind\{x \in \setc_\lambda\} d\lambda\ > \frac{1}{2}\right\} d \nu(x)\\
    &\leq 2 \int_\set \int_\Lambda w(\lambda) \ind\{x \in \setc_\lambda\} d\lambda\, d\nu(x) = 2 \int_\Lambda w(\lambda) \int_\set \ind\{x \in \setc_\lambda\} d\nu(x) \,d\lambda = 2 \int_\Lambda w(\lambda) \nu(\setc_\lambda) d\lambda,
    \end{split}
    \]
    which concludes the proof.
\end{proof}

\begin{proof}[Proof of Proposition \ref{prop:crc}]
    Using Uniformly-randomized Markov inequality (UMI) it is possible to obtain,
    \begin{equation*}
    \begin{split}
        \pr(Y_{n+1} \notin \setc^W(X_{n+1})) &= \pr\left(\sum_{k=1}^K w_k \loss(C_k(X_{n+1}), Y_{n+1}) \geq U\frac{B}{2}\right)\\
        & \leq \frac{2}{B} \Ev\left[\sum_{k=1}^K w_k \loss(C_k(X_{n+1}), Y_{n+1}) \right] \leq \frac{2}{B}\alpha.
    \end{split}
    \end{equation*}
    The same holds true using Markov's inequality and choosing $w_k=\frac{1}{K}, k=1,\dots,K$. The risk can be bounded as follows,
    \begin{equation*}
    \begin{split}
        \Ev\bigg[\loss(\setc^W(X_{n+1}), Y_{n+1})\bigg] &=\int \loss(\setc^W(X_{n+1}), Y_{n+1}) dP_{XY}^{n+1}\\
        &= \int \loss(\setc^W(X_{n+1}), Y_{n+1}) \ind\{Y_{n+1} \notin \setc^W(X_{n+1})\} dP_{XY}^{n+1}\\
        &\leq B \int \ind\{Y_{n+1} \notin \setc^W(X_{n+1})\} dP_{XY}^{n+1}\\
        &= B\, \pr(Y_{n+1} \notin \setc^W(X_{n+1})) \leq 2\alpha.
    \end{split}
    \end{equation*}
    The same result can be obtained using $\setc^M(x)$.
\end{proof}

\section{Merging confidence distributions}\label{sec:merg_conf_dist}
As outlined in Section~\ref{sec:intro}, if the aggregator knows the \emph{confidence distribution} for each agent, then it could be straightforward to combine them in a single confidence distribution. In particular, the \emph{confidence distribution} can be conceptualized as the distribution derived from the p-values corresponding to each point in the parameter space. In particular, for each agent and each point $s$ in the parameter space, we have the corresponding p-value for the hypothesis $H_0: c=s$. The confidence distribution can be obtained when $c$ is the functional or parameter of interest of a given distribution, but also in the conformal prediction setting where it is common to work with so called conformal (or rank-based) p-values \citep{vovk2005, lei2018}. This suggests that, in order to derive the distribution of the aggregator, we can combine the p-values obtained by the $K$ different agents for each point $s$ using a valid p-merging function \citep{vovk2020}. In particular, a p-merging function is simply a function that takes as input $K$ different p-values and outputs a p-value.

The simple majority vote rule can be viewed as an inversion of the fact that for $K$ dependent p-values, $2\cdot \text{median}(p_1,\dots,p_K)$ yields a valid p-value~\citep{ruger1978}. 
To see this, let $p_k=p_k(z;s)$ be the observed p-value by the $k$-th agent for the hypothesis null $H_0:c=s$; then, using the duality between tests and confidence sets, we have that $\setc_k=\{s \in \set: p_{k}(z;s) > \alpha\}$. Suppose (for the sake of contradiction) that $ p_{(\lceil K/2 \rceil)}(z;s) \le \alpha$ and $s \in \setc^M$. This implies that
\[   
\begin{split}
    \frac{1}{K}& \sum_{k=1}^K \ind\{s \in \setc_k\} > \frac{1}{2} \implies \sum_{k=1}^K \ind\{p_k(z;s) > \alpha\} > \left\lfloor\frac{K}{2}\right\rfloor \implies p_{(\lceil K/2 \rceil)}(z;s) > \alpha,
\end{split}
\]
which contradicts the supposition, establishing the claim. More generally, \cite{ruger1978} showed that $(K/k)p_{(k)}$ is a valid p-value, where $p_{(k)}$ is the $k$-th order statistic, recovering the Bonferroni correction at $k=1$, the union at $k=K$, and the median rule for $k=K/2$.
\cite{gasparin2024} extend the result by proving that $(K/k)p_{(\lceil Uk\rceil)}$ is a valid combination rule, where $U$ is a uniform random variable independent of the data. In particular, this rule is always smaller or equal than the R\"uger combination since $U$ is almost surely smaller or equal than 1.

In Figure~\ref{fig:confidence_distr} we report an example of the \emph{confidence distribution} obtained using two times the median of p-values as a merging function and its randomized extension. In particular, the example refers to an iteration of the simulation scenario described in Section \ref{sec:privateci}.

\label{sec:app_conf_distr}
\begin{figure}
    \centering
    \includegraphics[scale=0.75]{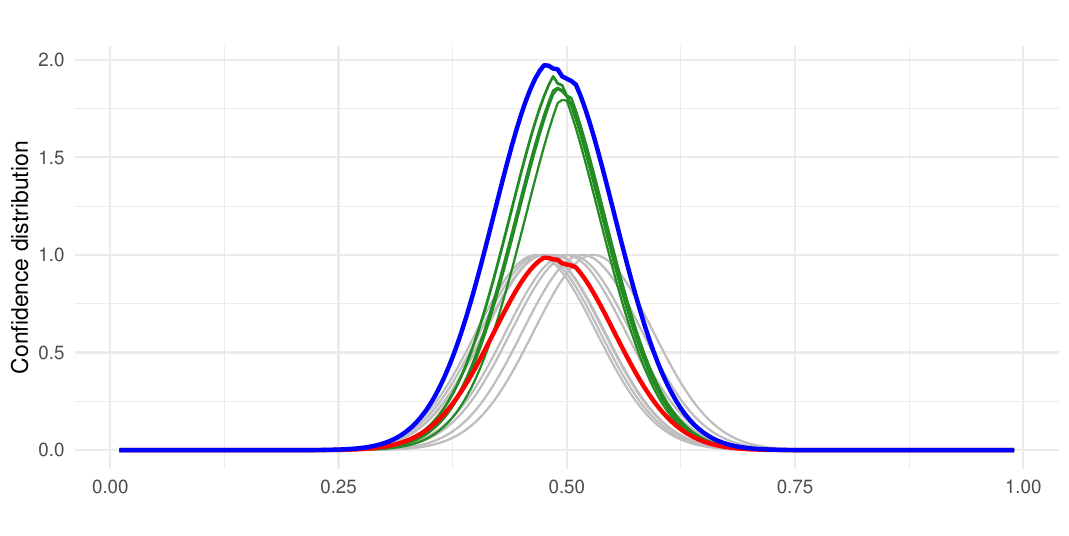}
    \caption{Example of \emph{confidence distributions} obtained in a simulation scenario (private multi-agent setting). Gray: confidence distributions of the single agents,  Red: confidence distribution of the median, Blue: confidence distribution of the median multiplied by 2, Green: four possible distributions obtained using the randomized version of R\"uger's combination;  we fix $u=\{0.2, 0.4, 0.6, 0.8\}$ in the latter method, though it would be random in practice. The red curve is not a valid combination of the grey ones, but the blue and green curves are valid.}
    \label{fig:confidence_distr}
\end{figure}

\section{Algorithm for interval construction}
\label{sec:algorithm}
As previously explained, the majority vote method can, in some cases, produce disjoint intervals; this stems from the fact that some of the $K$ intervals may have no common points. To address this, we propose an algorithm that returns the resulting set from the majority voting procedure. The starting point, for simplicity, is a collection of closed intervals, but it can be easily adapted to cases where intervals are open, or only some of them are open. In particular, the algorithm returns two vectors: one containing the lower bounds and the other containing the upper bounds. 

A naive solution involves dividing the space of interest $\set$ into a grid of points and evaluating how many intervals each point belongs to. However, this approach can become computationally burdensome, especially when the number of points is significantly high. Therefore, an alternative algorithm is recommended, which is based solely on the endpoints of the various intervals.

\begin{algorithm}[H]
\caption{Majority Vote Algorithm}
\label{alg:lower_bound_search}
\KwData{$K$ different intervals with lower bounds $a_k$ and upper bounds $b_k$, $k=1,\dots,K$; $w=(w_1,\dots,w_K)$ vector of weights; $\tau \in (0,1)$ threshold ($\tau=0.5$ for the majority vote procedure)}
\KwResult{lower; upper}
$q \leftarrow (q_1, \dots, q_{2K})$ vector containing the endpoints of the intervals in ascending order\;
$i\leftarrow1$\;
$\textrm{lower} \leftarrow \varnothing$\;
$\textrm{upper} \leftarrow \varnothing$\;
\While{$i < 2K$}{
    \If{$\sum_{k=1}^K w_k \ind\{a_k \leq \frac{q_{i} + q_{i+1}}{2} \leq b_k\} > \tau$}{
        $\textrm{lower} \leftarrow \textrm{lower} \cup q_i$\;
        $j \leftarrow i$\;
        \While{$\left(j < 2K\right) \textrm{and} \left(\sum_{k=1}^K w_k \ind\{a_k \leq \frac{q_{j} + q_{j+1}}{2} \leq b_k\} > \tau\right)$}{
        $j \leftarrow j+1$\;
        }
        $i \leftarrow j$\;
        $\textrm{upper} \leftarrow \textrm{upper} \cup q_{i}$;
    }
    \Else{
        $i \leftarrow i+1$\;
    }
}
\end{algorithm}

\section{Combining an infinite number of sets}\label{sec:inf_number}
There are instances where the cardinality of the set $K$ can be uncountably infinite. Consider a sequence of sets identified by some parameters --- such as in lasso regression, where each value of the penalty parameter corresponds to a distinct set. However, since the parameter can assume values on the positive semi-axis, the resulting number of sets is uncountable.

Specifically, we assume the existence of a mapping from $\lambda \in \Lambda \subseteq \mathbb{R}^d$ to $2^\set$, signifying that for each fixed $\lambda$ value, there exists a corresponding $1-\alpha$ uncertainty set $\setc_\lambda$. 
In addition, let us define a nonnegative \emph{weight function}, denoted as $w(\cdot)$, such that:
\begin{gather}
    \int_\Lambda w(\lambda) d\lambda = 1,
\end{gather}
and $w(\cdot)$ can be interpreted as a prior distribution on $\lambda$. In this case, we can define
\begin{equation}
\label{eq:setc_w_cont}
\setc^W = \left\{s \in \set: \int_\Lambda w(\lambda) \ind\{s \in \setc_\lambda\} d\lambda > \frac{1}{2} + U/2 \right\}.
\end{equation}

\begin{proposition}\label{prop:inf-unc-set}
   Let $\setc_\lambda$ be a sequence of $1-\alpha$ uncertainty sets indexed by $\lambda \in \Lambda$. Then the set $\setc^W$ defined in \eqref{eq:setc_w_cont} is a level $1-2\alpha$ uncertainty set.
   Furthermore, the measure associated with the set $\setc^W$ satisfies 
   \[
   \nu(\setc^W) \leq 2\int_\Lambda w(\lambda)\nu(\setc_\lambda)d\lambda.
   \]
\end{proposition}

In this scenario, in order to make the computation of $\setc^W$ computationally feasible, it is necessary to have a finite number of possible $\lambda$ values. A potential solution is to sample $N$ independent instances of $\lambda$ from the \emph{prior distribution}, calculate their respective $1-\alpha$ intervals, and then construct the set $\setc^R$ described in \eqref{eq:cr}. 

\section{Additional applications and simulations}
\label{sec:add_appl}
\subsection{Private multi-agent CI}
\label{sec:privateci}

As a case study, we employ the majority voting method in a situation where certain public data may be available to all agents, and certain private data may only be available to one (or a few) but not all agents. Consider a scenario involving $K$ distinct agents, each providing a \emph{locally differentially private} confidence interval for a common parameter of interest. As opposed to the centralized model of differential privacy in which the aggregator is trusted, local differential privacy is a stronger notion that does not assume a trusted aggregator, and privacy is guaranteed at an individual level at the source of the data. Further details about the definition of local privacy are not important for understanding this example; interested readers may consult \citet{dwork2014}. 

For $k=1, \dots, K$, suppose that the $k$-th agent has data about $n$ individuals $(X_{1,k}, \dots, X_{n,k})$ that they wish to keep locally private (we assume each agent has the same amount of data for simplicity). They construct their ``locally private interval'' based on the data $(Z_{1,k}, \dots, Z_{n,k})$, which represents privatized views of the original data. Suppose that an unknown fraction of the observations may be shared among agents, indicating that the reported confidence sets are not independent. An example of such a scenario could be a medical study, where each patient represents an observation, and a significant but unknown number of patients may be shared among different research institutions, or some amount of public data may be employed. Consequently, the confidence intervals generated by various research centers (agents) are not independent.

In the following, we refer to the scenario described in \cite{waudbysmith2023}, where the data $(X_{1,k}, .., X_{n,k}) \sim P$, and $P$ is any $[0,1]$-valued distribution with mean $\theta^*$. Data $(Z_{1,k},..,Z_{n,k})$ are  $\varepsilon$-locally differentially private ($\varepsilon$-LDP) views of the original data obtained using the nonparametric randomized response mechanism. 
The mechanism requires one additional parameter, $G$, which we set to a value of 1 for simplicity (the mechanism stochastically rounds the data onto a grid of size $G+1$, which is two in our case: the boundary points of 0 and 1).

A possible solution to construct a (locally private) confidence interval for the mean parameter $\theta^*$ is to use the locally private Hoeffding inequality as proposed in \cite{waudbysmith2023}. In particular, let $\hat{\theta}_k$ be the adjusted sample mean for agent $k$:
\[
\hat{\theta}_k := \frac{\sum_{i=1}^n z_{i,k} - n(1-r)/2}{nr},
\]
where $r:=(\exp\{\varepsilon\}-1)/(\exp\{\varepsilon\}+1)$. Then the interval 
\begin{equation*}
    \setc_k = \left[\hat{\theta}_k - \sqrt{\frac{-\log(\alpha/2)}{2nr^2}}, \hat{\theta}_k + \sqrt{\frac{-\log(\alpha/2)}{2nr^2}}\right]
\end{equation*}
is a valid $(1-\alpha)$-confidence interval for the mean $\theta^*$. 
The width of the confidence interval depends solely on the number of observations $n$, coverage $\alpha$, and privacy $\varepsilon$.

Once the various agents have provided their confidence sets, a non-trivial challenge may arise in merging them to obtain a unique interval for the parameter of interest. One possible solution is to use the majority-vote procedure described in the previous sections. We conducted a simulation study within this framework. In the first scenario, at each iteration, $n \times (K/2)$ observations were generated from a standard uniform random variable, and each agent was randomly assigned $n$ observations.  In the second scenario, the first agent had $n$ observations generated from a uniform random variable. For all agents with $k > 1$, a percentage $p$ of their observations was shared with the preceding agent, while the remaining portion was generated from $\mathrm{Unif(0,1)}$. In both scenarios, the number of agents is equal to 10, the number of observations ($n$) is common among the agents and equal to $100$, while the privacy parameter $\varepsilon$ is set to 2 (which is an appropriate value for local privacy, indeed Apple uses a value of 4 to collect data from iPhones; see \citet{appleEps}). The number of replications for each scenario is $10\,000$.

\begin{table}[ht]
\caption{Empirical average length of intervals and corresponding average coverage (within brackets) for the two simulation scenarios. In the second scenario, the percentage of shared observations is denoted as $p$. The $\alpha$-level is set to 0.1 --- the first column shows that the employed confidence interval is conservative (but tighter ones are more tedious to describe). The majority vote set is smaller than the individual ones, but it overcovers (despite the theoretically guarantee being one that permits some undercoverage), which is an intriguing phenomenon. The randomized majority vote method produces the smallest sets than the others while maintaining good coverage. Randomized voting is not very different from the original intervals.}
\label{tab:mult_ag}
\centering
\begin{tabular}{cc|ccccc}
  \hline
  Scenario & $p$ & $\setc_k$ & $\setc^M$ & $\setc^R$ & $\setc^U$ & $\setc^\pi$ \\ 
  \hline
  (i) & - & 0.3214 & 0.3058 & 0.2282 & 0.3212 & 0.3210  \\ 
  & & (0.9880)& (1.0000) & (0.9752) & (0.9892) & (0.9892)  \\ 
  (ii) & 0.5 & 0.3214 & 0.3062 & 0.2302 & 0.3218 & 0.3215  \\ 
  & & (0.9877) & (1.0000) & (0.9749) & (0.9879) & (0.9879)  \\ 
  (ii) & 0.75 & 0.3214 & 0.3062 & 0.2302 & 0.3218 & 0.3215  \\ 
  & & (0.9877) & (1.0000) & (0.9749) & (0.9879) & (0.9879)  \\ 
  (ii) & 0.9 & 0.3214 & 0.3063 & 0.2297 & 0.3223 & 0.2319 \\ 
  & & (0.9877) & (1.0000) & (0.9764) & (0.9870) & (0.9775) \\ 
   \hline
\end{tabular}
\end{table}

As can be seen in Table~\ref{tab:mult_ag}, the length of the intervals constructed by the agents remains constant throughout the simulations, since the values of $n$ and $\varepsilon$ remain unchanged. In contrast, the intervals formed by the majority and randomized majority methods are smaller, compared to those constructed by individual agents. The coverage level achieved by individual agents' intervals (first column) significantly exceeds the threshold of $1-\alpha$, but this is expected since the intervals are nonasymptotically valid and conservative. The coverage derived from the majority method is notably high, approaching 1. The incorporation of a randomization greatly reduces the length of the sets while maintaining coverage at a slightly lower level than that of single agent-based intervals. The use of the randomized union introduced in~\eqref{eq:cu} produces sets with nearly the same lenght and coverage as the ones produced by single agents. 
If the aggregator had access to the $(1-\alpha)$-confidence intervals level for all possible $\alpha$, it would be able to derive the confidence distribution. 
The actual values of the length and coverage should not be given too much attention: there are other, more sophisticated, intervals derived in the aforementioned paper (empirical-Bernstein, or asymptotic) and these would have shorter lengths and less conservative coverage, but they take more effort to describe here in self-contained manner and were thus omitted.

\subsection{Derandomizing cross-fitting in causal inference}\label{sec:cross-fitted}
In many problems of semiparametric inference, it is common to be interested in making inference about a parameter $\theta$ in the presence of nuisance parameters $\xi$, possibly of high dimension. Often, the estimation of $\xi$ is carried out using machine learning methods that tend to perform adequately well in high-dimensional settings. However, issues such as overfitting and regularization bias can pose challenges for the inference of the parameter of interest. A solution is proposed in \cite{chernozhukov2018double}, where cross-fitting is employed to circumvent such problems. Subsequently, we will discuss this technique in the context of estimating the Average Treatment Effect (ATE).

We now define the problem setup and the notation, suppose that we observe a sample of $n$ iid random variables $Z_1, \dots, Z_n$ from a distribution $P$. In particular, $Z_i$ consists on the triplet $Z_i:=(X_i, A_i, Y_i)$; where $X_i \in \mathbb{R}^d$ are the baseline covariates of the $i$-th observation, $A_i$ is the treatment that they receive while $Y_i \in \mathbb{R}$ is the outcome observed after the treatment. As an example, consider observations that represent a cohort of patients undergoing two different types of treatment (each patient receives only one of the two treatments), where their covariates include control variables such as age and sex. Our target parameter is represented by the ATE,
\(
\theta := \Ev[Y^1 - Y^0],
\)
where $Y^a$ represents the counterfactual outcome for a randomly selected subject had they received the treatment; see \citet[Chapter 2]{vanderweele2015explanation} and \cite{robins1992identifiability} for an introduction. Subject to conventional causal identification assumptions, commonly known as consistency, positivity, and no unmeasured confounding (refer, for instance, to \cite{kennedy2016}), it follows that $\theta$ can be expressed as a functional of the distribution $P$, without invoking counterfactual considerations. In particular,
we have
\[
\theta = \theta(P) = \Ev\left\{\Ev[Y \mid A=1, X=x] - \Ev[Y \mid A=0, X=x] \right\}.
\]
Once the data have been observed, we have to find an efficient estimator of ATE. First of all, let us define $h(z)$ as
\[
h(z) := \left(\mu^1(x) - \mu^0(x)\right) + \left(\frac{a}{\pi(x)} - \frac{1-a}{1-\pi(x)}\right) \left(y - \mu^a (x)\right),
\]
where $\mu^a(x):= \Ev[Y\mid X=x, A=a]$ is the regression function for observations whose treatment level is equal to $a=\{0,1\}$, while $\pi(x):=\pr(A=1 \mid X=x)$ is the so-called propensity score. The \emph{nuisance} functions $\xi=(\mu^1(x),\mu^0(x), \pi(x))$ are unknown, and need to be estimated even if not of direct interest (actually in RCTs $\pi(x)$ is known and only $\mu^a(x), a=\{0,1\},$ need to be estimated). An efficient estimator for the ATE is given by $n^{-1} \sum_{i=1}^n h(z_i)$; see \cite{kennedy2016}.

Adopting a parametric structure for nuisance functions can be restrictive, especially in high-dimensional contexts. Frequently, these functions are estimated through machine learning methods, which, however, encounter challenges such as overfitting and selection bias and render the convergence complex. As described in \cite{chernozhukov2018double}, a potential solution to mitigate these issues is the use of the cross-fit estimator, using a random data splitting approach. In particular, the functions $\xi$ are estimated on a first (random) part of the dataset, called $\mathcal{Z}^{trn}$, in order to obtain $\hat{\xi}$ and an estimator of $\theta$ is given by $\hat{\theta}_1 = |\mathcal{Z}^{eval}|^{-1}\sum_{i\in\mathcal{Z}^{eval}} h(z_i)$ where $\mathcal{Z}^{eval}$ is $\{Z_i\}_{i=1}^n \smallsetminus \mathcal{Z}^{trn}$ and $\xi$ is substituted by its estimated counterpart. Another estimator, $\hat{\theta}_2$, is simply obtained by switching the roles of the set $\mathcal{Z}^{trn}$ and $\mathcal{Z}^{eval}$. The cross fit estimator is simply
\(
\hat{\theta} = \frac{\hat \theta_1 + \hat \theta_2}{2}.
\)
In general, if observations are partitioned into $B$ non-overlapping samples, then one can obtain $B$ different estimators and, in this case, the cross-fit estimator is given by their average. Under suitable conditions, the cross-fit estimator is asymptotically normal, so the computation of a confidence interval is possible. Also in this case, the estimator and the intervals depend on the random split of the data. One can repeat the procedure $K$ times in order to construct $K$ different intervals and subsequently merge them using the majority vote procedure.

\begin{figure}[h!]
    \centering
    \includegraphics[width=0.8\textwidth]{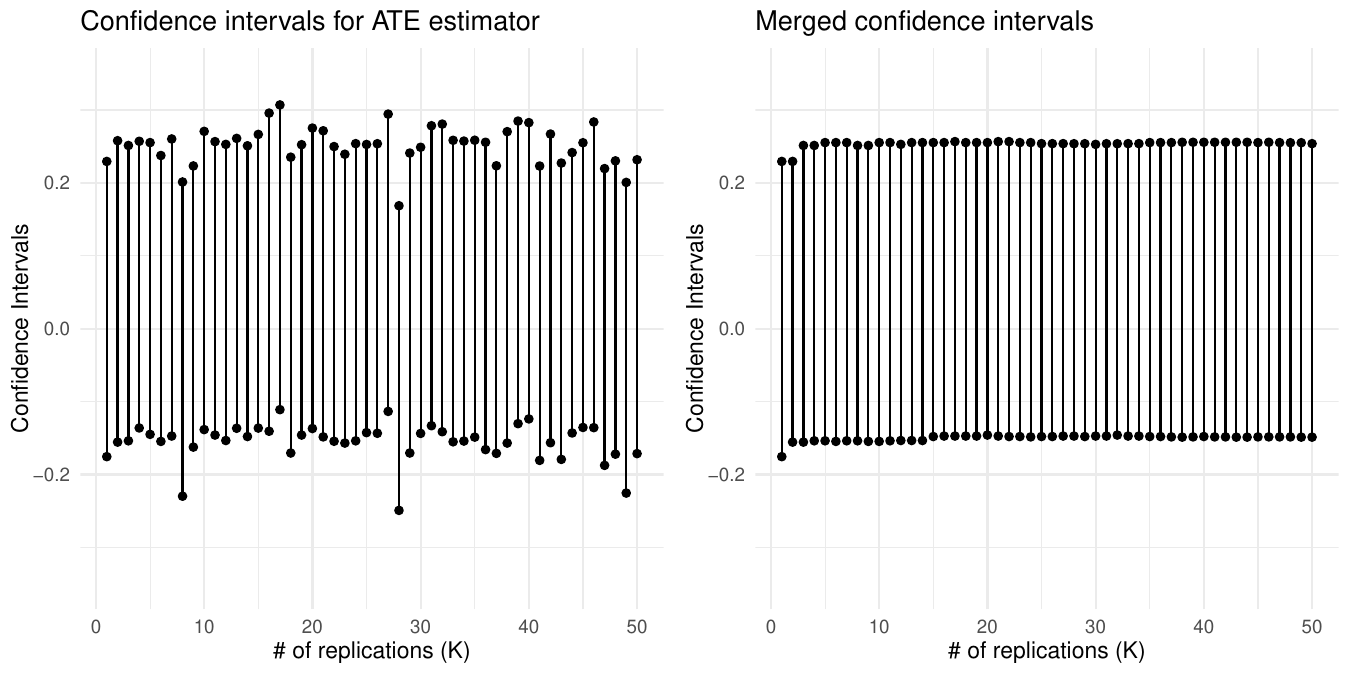}
    \caption{The left plot shows 50 estimates of the ATE via cross-fitting using different sample splits on the same dataset. The right plot shows their combination via majority vote. Both plots can be produced one-by-one from left to right, and stopped when the intersection of the sets in the right plot is deemed stable enough.}
    \label{fig:causal_plot}
\end{figure}

\paragraph{Real data application.} We investigate the impact of $K$ in an example using real data on the effectiveness of a treatment for type 2 diabetes. Data are used in \cite{berchialla2022prediction} and are publicly available\footnote{https://datadryad.org/resource/doi:10.5061/dryad.qt743/2}. Specifically, the study contains $n=92$ patients, and for each patient a series of baseline covariates is measured. Treatment refers to the drug administered to the patient. Specifically, \texttt{glimepiride} serves as the baseline drug, while \texttt{sitagliptin} represents the novel drug under evaluation. It is noteworthy that the study does not adhere strictly to a RCT design, as the allocation of medications to participants is not random, but based on specific characteristics of the participants. The outcome is represented by the difference of the level of \texttt{HbA1c} before and after treatment, and takes a value of one if the difference is less than -0.5, and zero otherwise. Due to the difficulty in specifying a parametric model for $\mu^a(x)$ and $\pi(x)$, they were estimated using machine learning algorithms. In this case, a random forest is employed to estimate $\mu^a(x)$ while a penalized logistic regression is used to estimate the propensity score. The number of non-overlapping subsets $B$ is fixed at 4, as suggested in \cite{chernozhukov2018double}; and the cross-fit estimator is computed using the R package \texttt{DoubleML} \citep{doubleml2024R}. As it is possible to see from Figure \ref{fig:causal_plot} the intervals obtained for ATE can differ between them, while intervals obtained using the majority vote remain essentially the same for $K>15$. Given that the sets are obtained with the same generating mechanism but using different random data splits, they are exchangeable and the set $\setc^E$ is given by the running intersection of the majority vote sets.

\subsection{Additional simulations with different vector of weights}\label{sec:add_sims_w}
In this section we study the impact of different weight vectors in our proposed method through a simulation study. In particular, data are simulated as in Section~\ref{sec:conf_sim}, and five different vector of weights are used. The five different cases can be summarized as follows: (i) the weights are uniform and this case coincide with the standard majority vote; (ii) the weight assigned to the last ten sets is thrice the weight assigned to the first ten, i.e.\ $w \propto (1,\dots,1,3,\dots,3)$; (iii) the scenario is opposite to the earlier one and $w=(3,\dots,3,1,\dots,1)$; (iv) the weights are increasing and $w\propto (1,2,\dots,20)$; (v) the weights are decreasing and $w\propto (20,19,\dots,1)$. The process is repeated $5000$ times and in Table~\ref{tab:sim_weights} we report the empirical coverage and the empirical size of the set $\setc^W$ without randomization.

\begin{table}
\centering
\caption{Empirical coverage and empirical size of the sets $\setc^W$ using different weights. The size is smaller if the weights are more concentrated in the first positions (case (iii) and case (v)).}
\begin{tabular}{ccc}
  \hline
 Case & Coverage & Width \\ 
  \hline
  (i) & 0.985 & 8.719 \\ 
  (ii) & 0.981 & 11.832 \\ 
  (iii) & 0.987 & 7.681 \\ 
  (iv) & 0.979 & 12.112 \\ 
  (v) & 0.987 & 7.676 \\ 
   \hline
\end{tabular}
\label{tab:sim_weights}
\end{table}

From Table~\ref{tab:sim_weights} it can be observed that the empirical coverage of the methods appears to be consistent across different cases; in contrast, there are substantial differences in terms of interval width. Specifically, if the weights are more concentrated in the first part of the vector, narrower intervals are obtained. This result is in line with Figure~\ref{fig:lasso_sims}, as smaller penalty parameters seem to correspond to narrower intervals.

As expected, the weights plays a key role, like that of a prior, and a ``bad'' prior can enlarge  the size of the weighted majority vote sets. However, the coverage guarantees remain unaltered as pointed out in Section~\ref{sec:majority_vote}. 
If there is no prior knowledge regarding the different methods used to obtain the various intervals, it seems prudent to use a vector of uniform weights, but other (more complicated) options exist. In the considered example, one might set aside a portion of the dataset to select an appropriate penalty parameter and subsequently use this parameter to obtain the prediction interval on the portion not used for selection. In a similar spirit, rather than choosing a single parameter, one could opt for multiple parameters, assigning weights inversely proportional to the error metric observed on the selection set. This can be particularly convenient if there is no a clear winner. Outside the considered scenario, the prior weights can be elicited according to the information available to the user (like the sample sizes used by the different agents or the expected performance of the different algorithms).

\section{Merging sets with conformal risk control}\label{sec:conformal-risk-control}

In Section~\ref{sec:conformal}, we have used conformal prediction to obtain prediction intervals that allow the derivation of a lower bound for the probability of miscoverage. However, in many machine learning problems, miscoverage is not the primary and natural error metric, as explained in \cite{angelopoulos2021}. Consequently, a more general metric may be necessary to assess the loss between the target of interest and an arbitrary set $\setc$. To achieve this, one may proceed by choosing a loss function
\begin{equation}
\label{eq:loss_general}
    \loss: 2^{\mathcal{Y}} \times \mathcal{Y} \to [0, B], \quad B \in (0, \infty),
\end{equation}
where $\mathcal{Y}$ is the space of the target being predicted, and $2^{\mathcal{Y}}$ is the power set of $\mathcal{Y}$. In addition, we require that the loss function satisfies the following properties:
\begin{gather*}
    \setc \subset \setc' \implies \loss(\setc, c) \geq \loss(\setc', c),\\
    \loss(\setc, c) = 0 \,\, \mathrm{if }\,\, c \in \setc.
\end{gather*}
By definition, the loss function in~\eqref{eq:loss_general} is bounded and shrinks if $\setc$ grows (eventually shrinking to zero when the set contains the target). Similar to the conformal prediction framework described in Section~\ref{sec:conformal}, we consider the target of interest as $Y_{n+1}\in \mathcal{Y}$, while $\setc=\setc(x),x\in\mathcal{X},$ is a set based on an observed collection of feature-response instances $z_i = (x_i, y_i),\, i = 1, \dots, n$. 

\cite{angelopoulos2022} generalize (split) conformal prediction to prediction tasks where the natural notion of error is defined by a loss function that can be different from miscoverage. In particular, their extension of conformal prediction provides guarantees of the form
\begin{equation}
\label{eq:loss_guarantee}
\Ev\bigg[\loss\left(\setc(X_{n+1}), Y_{n+1}\right) \bigg] \leq \alpha,
\end{equation}
where $\alpha \in (0,B)$. It can be seen that \emph{standard} conformal prediction intervals can be obtained simply by choosing $\loss\left(\setc(X_{n+1}), Y_{n+1}\right) = \ind\{ Y_{n+1} \notin \setc(X_{n+1})\}$.

\subsection{Majority vote for conformal risk control}

It appears possible to extend the majority vote procedure, described in Section~\ref{sec:gen_maj_vote} and in Section~\ref{sec:randomized_maj_vote}, to sets with a conformal risk control guarantee.

\begin{proposition}\label{prop:crc}
Let $\setc_1(x), \dots, \setc_K(x)$ be $K \geq 2$ different sets with the property in~\eqref{eq:loss_guarantee}, $x \in \mathcal{X}$ and $w=(w_1, \dots, w_K)$ a vector of weights defined as in~\eqref{eq:weights}. Define
    \begin{gather}
        \setc^M(x) = \left\{y \in \mathcal{Y}: \frac{1}{K} \sum_{k=1}^K \loss\left(\setc_k(x), y\right) < \frac{B}{2}\right\},\\
        \label{eq:cr_loss}
        \setc^W(x) = \left\{y \in \mathcal{Y}: \sum_{k=1}^K w_k \loss\left(\setc_k(x), y\right) < U\frac{B}{2}\right\},
    \end{gather}
where $U \sim \mathrm{Unif}(0, 1)$. Then, $\setc^M(x)$ and $\setc^W(x)$ control the conformal risk at level $2\alpha$.
\end{proposition}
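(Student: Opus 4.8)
The plan is to prove the risk bound in two stages: first convert the problem into a statement about miscoverage probability (the chance that $Y_{n+1}$ falls outside the merged set), and then leverage the two structural properties of $\loss$---that it vanishes whenever the target is covered and is bounded above by $B$---to turn a miscoverage bound into a risk bound. I would treat the randomized set $\setc^W$ as the main case, since $\setc^M$ is recovered by taking $w_k = 1/K$ and replacing the randomized Markov step with the ordinary one.

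For the miscoverage step, I would first note that, by the definition of $\setc^W$ with its strict threshold, the event $\{Y_{n+1} \notin \setc^W(X_{n+1})\}$ coincides with $\{\sum_{k=1}^K w_k \loss(\setc_k(X_{n+1}), Y_{n+1}) \geq U B/2\}$. Since the weighted sum of losses is a nonnegative random variable and $U \sim \mathrm{Unif}(0,1)$ is independent of the data, the Uniformly-randomized Markov inequality (UMI) applies with threshold $B/2$, giving
\[
\pr\big(Y_{n+1} \notin \setc^W(X_{n+1})\big) \leq \frac{2}{B}\,\Ev\Big[\sum_{k=1}^K w_k \loss(\setc_k(X_{n+1}), Y_{n+1})\Big].
\]
Pulling the expectation inside the finite sum by linearity, invoking the per-set guarantee $\Ev[\loss(\setc_k(X_{n+1}), Y_{n+1})] \leq \alpha$, and using $\sum_k w_k = 1$, this is at most $2\alpha/B$.

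For the risk step, I would use that $\loss(\setc, c) = 0$ whenever $c \in \setc$, so that $\loss(\setc^W(X_{n+1}), Y_{n+1}) = \loss(\setc^W(X_{n+1}), Y_{n+1})\,\ind\{Y_{n+1} \notin \setc^W(X_{n+1})\}$, and then bound the surviving loss by its maximum value $B$ to obtain $\loss(\setc^W(X_{n+1}), Y_{n+1}) \leq B\,\ind\{Y_{n+1} \notin \setc^W(X_{n+1})\}$. Taking expectations and substituting the miscoverage bound from the previous step yields $\Ev[\loss(\setc^W(X_{n+1}), Y_{n+1})] \leq B \cdot (2\alpha/B) = 2\alpha$, as desired; the argument for $\setc^M$ is identical, using plain Markov's inequality in place of UMI.

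The routine part is the algebra; the only genuine care needed is in the first step, namely in correctly matching the strict versus non-strict inequality at the threshold and in verifying that UMI is applied to a nonnegative integrand with an independent $U$. I also expect the resulting $2\alpha$ bound to be somewhat loose, since replacing $\loss$ by the constant $B$ on the miscoverage event discards all information about the typical magnitude of the loss; this looseness is inherent to the argument rather than an obstacle to completing the proof.
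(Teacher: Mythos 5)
Your proposal is correct and follows essentially the same route as the paper's proof: bound the miscoverage probability of $\setc^W$ by $2\alpha/B$ via the uniformly-randomized Markov inequality applied at threshold $B/2$ (plain Markov with $w_k = 1/K$ for $\setc^M$), then use $\loss(\setc, c) = 0$ for $c \in \setc$ together with the bound $\loss \leq B$ to convert this into the risk bound $\Ev[\loss(\setc^W(X_{n+1}), Y_{n+1})] \leq B \cdot (2\alpha/B) = 2\alpha$. The paper writes the second step as an integral decomposition over the miscoverage event, but this is the same argument as your pointwise inequality $\loss(\setc^W(X_{n+1}), Y_{n+1}) \leq B\,\ind\{Y_{n+1} \notin \setc^W(X_{n+1})\}$ followed by taking expectations.
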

The proof, reported in Appendix \ref{sec:app_proof}, initially involves calculating the miscoverage of the set, followed by establishing an upper bound for the risk defined as the expected value of the loss function.

The obtained bound may be excessively conservative, as it involves substituting the value of the loss function with its upper limit. Consequently, the resulting sets can be too large, particularly when the loss function is uniform over the interval $[0, B]$ or centered on an internal point, or exhibits skewness towards smaller values. 

\subsection{Experiment on simulated data}
We now present a simulation study regarding majority vote for conformal risk control. In classification problems, it often occurs that misclassified labels may incur a different cost based on their importance. An example of a loss function used for this purpose is 
\begin{equation*}
    \loss(\setc, y) = L_y\ind\{y \notin \setc\},
\end{equation*}
where $L_y$ is the cost related to the misclasification of the label $y \in \mathcal{Y}$ and $\mathcal{Y}$, in this case, denotes the finite set of possible labels. 

The methodology introduced by \citet{angelopoulos2022} uses predictions generated from a model $\hat{\mu}$ to formulate a function $\setc_\gamma(\cdot)$ that assigns features $x \in \mathcal{X}$ to a set. The parameter $\gamma$ denotes the degree of conservatism in the function, with smaller values of $\gamma$ producing less conservative outputs. The primary objective of their approach is to infer the value of $\gamma$ using a calibration set, with the aim of achieving the guarantee outlined in \eqref{eq:loss_guarantee}. Given an error threshold $\alpha$, \cite{angelopoulos2022} define
\begin{equation*}
\hat{\gamma} := \inf\left\{\gamma: \frac{n}{n+1} \sum_{i=1}^n \loss(\setc_\gamma(x_i), y_i) + \frac{B}{n+1} \leq \alpha \right\}.
\end{equation*}
For classification problems, $\setc_\gamma(x_i)$ is simply expressed as $\setc_\gamma(x_i) = \{y \in \mathcal{Y}: \hat{\mu}(x_i)_y \geq 1-\gamma\}$, where $\hat{\mu}(x_i)_y$ represents the probability assigned to the label $y$ by the model.

\begin{figure}
    \centering
    \includegraphics[width=0.8\textwidth]{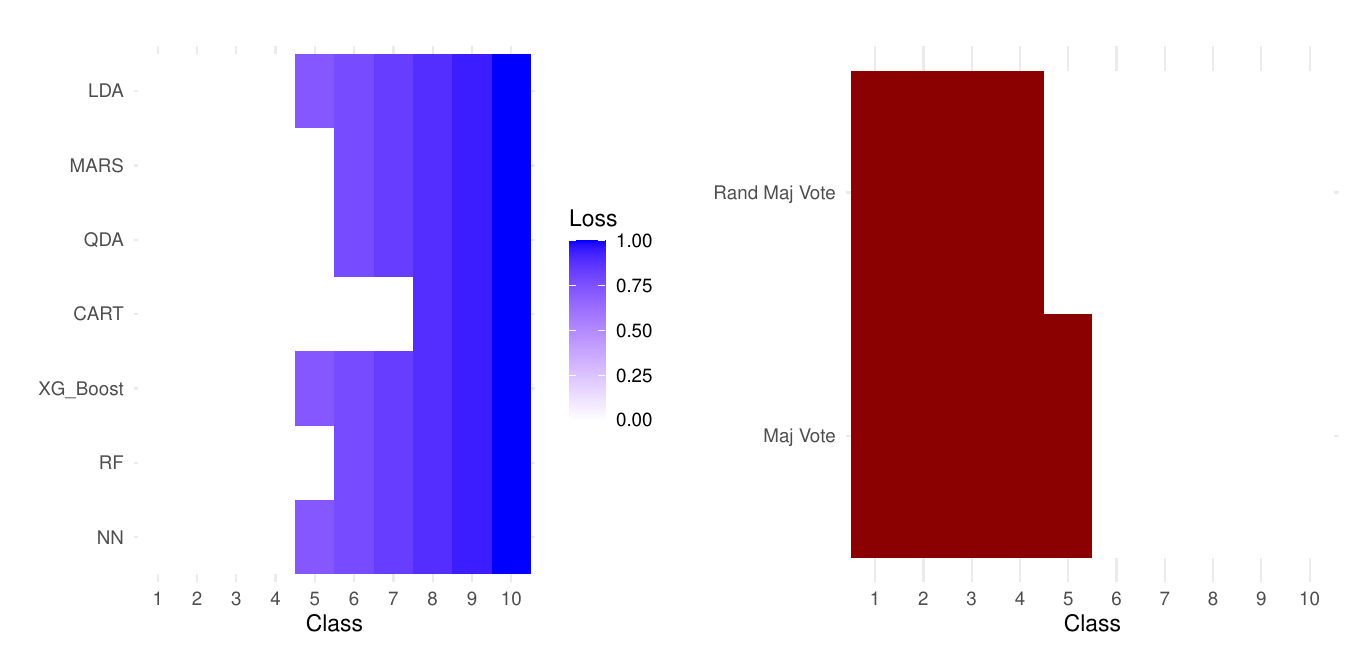}
    \caption{Left: losses of various algorithms, where the loss is zero if the point is included in the interval. Right: points included by majority vote and randomized majority vote.}
    \label{fig:rcps_set}
\end{figure}

The approach is used in a classification task with simulated data. We simulated data from 10 classes, each originating from a bivariate normal distribution with a mean vector $(i, i)$ and a randomly generated covariance matrix, where $i=1,\dots,10$. In addition, two covariates were incorporated to add noise. For each class, we generated 600 data points, partitioning them into three equal subsets: one-third for the estimation set, one-third for the calibration set, and one-third for the test set. Loss values are represented by $L_y = \frac{8+y}{18}$, where $y \in \{1,\dots,10\}$. This indicates that the cost of misclassifying a label in the last class is twice that of a label in the first class. We used $K=7$ different classification algorithms, and the parameters $\hat{\gamma}_k$ were estimated within the calibration set, for all $k=1, \dots, K$. An example of the majority vote procedure is shown in Figure~\ref{fig:rcps_set}. The empirical losses computed in the test set of the methods are, respectively, 0.042 for the simple majority vote and 0.084 for the randomized version of the method. It is important to highlight that, in some situations, the majority vote procedure can produce too large sets. Suppose that the loss for a single point is less than half; then the procedure will include the point also if it is not included in any of the sets. The randomized method can present the same problem if the values of the loss function are close to zero. A possible solution to mitigate the problem is to tune the threshold parameter of the majority vote to a smaller value achieving different levels of guarantee.

\end{document}